\DeclareMathAlphabet{\mathcal}{OMS}{cmsy}{m}{n}
\spnewtheorem{challenge}{Game}{\bfseries}{}
\spnewtheorem{construction}{Construction}{\bfseries}{}
\spnewtheorem{lemma}{Lemma}{\bfseries}{}
\spnewtheorem{theorem}{Theorem}{\bfseries}{}
\spnewtheorem{corollary}{Corollary}{\bfseries}{}
\newcommand{\bloomname}[0]{PRP-LBF}
\newcommand{\betpassbloomname}[0]{Cuckoo-LBF}
\newcommand{\securecbf}[0]{NY-CBF}
\newcommand{\betpasscbf}[0]{NOY-Cuckoo Filter}
\newcommand{\domain}{\mathfrak{D}}
\let\olddefinition\definition
\renewcommand{\definition}{\olddefinition\normalfont}
\let\oldchallenge\challenge
\renewcommand{\challenge}{\oldchallenge\normalfont}
\begin{document}

\title{Adversary Resilient Learned Bloom Filters}

\newif\ifanonymoussubmission
\anonymoussubmissionfalse

\ifanonymoussubmission
    \author{Anonymous Authors}
\else
    \author{Ghada Almashaqbeh \inst{1} \and Allison Bishop\inst{2,3} \and
    Hayder Tirmazi\inst{3}}

    \institute{ University of Connecticut, \email{ghada@uconn.edu}
    \\
    \and Proof Trading, \email{abishop@ccny.cuny.edu}
    \\
    \and City College of New York, \email{hayder.research@gmail.com} }
\fi

\maketitle              
\begin{abstract}
A learned Bloom filter (LBF) combines a classical Bloom filter (CBF) with a learning model to reduce the amount of memory needed to represent a given set while achieving a target false positive rate (FPR). Provable security against adaptive adversaries that advertently attempt to increase FPR has been studied for CBFs, but not for LBFs. In this paper, we close this gap and show how to achieve adaptive security for LBFs. In particular, we define several adaptive security notions capturing varying degrees of adversarial control, including full and partial adaptivity, in addition to LBF extensions of existing adversarial models for CBFs, including the Always-Bet and Bet-or-Pass notions. We propose two secure LBF constructions, \bloomname{} and \betpassbloomname{}, and formally prove their security under these models assuming the existence of one-way functions. Based on our analysis and use case evaluations, our constructions achieve strong security guarantees while maintaining competitive FPR and memory overhead.
\end{abstract}

\section{Introduction}
\label{sec:intro}
Bloom filters are probabilistic data structures that allow building a succinct representation of a data set while offering approximate membership queries, i.e., whether an element $x$ is in a set $S$. Bloom filters have one-sided error guarantees: if $x \in S$ the filter always returns yes (i.e., no false negatives), but if $x \notin S$, the filter may return yes, instead of no, with some probability resulting in a false positive. Bloom filters greatly improve space efficiency; instead of storing the full set, this set is encoded in a much shorter bit string using a family of hash functions to map each item into a few bits (set to 1) in this string~\cite{BroderMitzenmacher2004}.

Learned Bloom filters (LBFs) have been introduced to improve upon conventional, or classical Bloom filters (CBFs)~\cite{learnedindexstructures}. In particular, and as shown in Figure~\ref{fig:learnedbloom}, an LBF combines a learning model with a CBF to obtain a lower false positive rate (FPR) than CBFs under the same memory budget. The learning model acts as an initial filter, providing a probabilistic estimate on whether $x \in S$, while a smaller CBF serves as a backup to prevent any false negatives. 

Bloom filters are used in various practical applications, including Linux network drivers, network protocols, deep packet scanners, peer-to-peer networking, and caching~\cite{tarkoma1,BroderMitzenmacher2004}. Many well-known systems, such as Apache Hadoop, Apache HBase, Google BigTable, Google LevelDB, and Meta RocksDB, use Bloom filters as part of their implementations. As LBFs provide the same one-sided error guarantees as CBFs, the applications of LBFs and CBFs are identical. Recently, Roblox used LBFs to achieve 25\% cost savings in their production workloads for Spark join queries~\cite{robloxlbf}.

In such critical infrastructures, adversaries may attempt to craft false positives, causing the Bloom filter to deviate from its expected behavior~\cite{evilchoicesbloom}, thereby disrupting system operation. For example, as discussed in~\cite{moni1}, in a Bloom filter-based spam email whitelist (that stores addresses of known senders), crafting false positives allows spam emails to pass as benign emails. Similarly, in web caches, a Bloom filter can be used to represent the web pages in the cache; a false positive causes an unsuccessful cache access, which degrades performance and may eventually cause a denial of service attack. Another case is related to databases~\cite{tirmazi2025lsm}, including Meta's RocksDB and Google's LevelDB; crafting false positives in the Bloom filters of RocksDB (respectively LevelDB) can degrade the performance of query lookups by up to 8x (respectively 2x).

These cases attest to the importance of studying provable security for Bloom filters against adaptive adversaries. This has been done for CBFs, including studying practical attacks~\cite{evilchoicesbloom}, and formalizing notions for adaptive security alongside showing new CBF constructions that realize these notions~\cite{moni1,naor2022bet}. However, to the best of our knowledge, no such notions/provably-secure constructions exist for LBFs. Indeed, as an LBF contains a backup CBF, it inherits all the adversarial vulnerabilities of CBFs. However, securing the backup CBF does not imply that the LBF employing this CBF is secure, as LBFs have different designs. That is, having a learning model allows adversaries to craft false positives in new ways~\cite{reviriego1}. Examples include mutating an existing false positive or modifying certain features (relevant to the learning model) of a true negative input to convert it into a false positive. Nonetheless, the work~\cite{reviriego1} only demonstrated such attacks while leaving provable adaptive security of LBFs as an open problem.

\begin{figure}[t!]
  \centering
  \begin{tikzpicture}[
    node distance=0.5cm,
    box/.style={draw, minimum width=2cm, minimum height=0.5cm, align=center},
    arrow/.style={-{Stealth[length=3mm]}, thick},
    label/.style={font=\small}
]

\node[box] (LM) {Learning Model};
\node[box, right=2cm of LM] (CB) {Backup CBF};

\draw[arrow] ([xshift=-1.5cm]LM.west) -- node[label, above] {Input} (LM.west);

\draw[arrow] (LM.east) -- node[label, above] {Negatives} (CB.west);

\draw[arrow] (CB.east) -- node[label, above] {Negatives} ([xshift=1.5cm]CB.east);

\draw[arrow] (LM.south) -- +(0, -0.8cm) node[pos=0.3, label, right] {Positives};
\draw[arrow] (CB.south) -- +(0, -0.8cm) node[pos=0.3, label, right] {Positives};

\end{tikzpicture}
  \caption{Conventional LBF architecture---a backup CBF only checks values that are identified as (highly probably) negative by the learning model.}\label{fig:learnedbloom}
\end{figure}

\subsection{Contributions}
\label{subsec:contributions}
In this paper, we close this gap and initiate a formal study on the security of LBFs against adaptive adversaries. In particular, we make the following contributions.\medskip

\noindent\textbf{New notions for adaptive security.} We define several new security notions for LBFs that capture a spectrum of adversarial capabilities. These include \emph{fully-adaptive} adversaries who choose all queries, \emph{partially-adaptive} adversaries who choose a fraction of queries, \emph{bet-or-pass} adversaries who can selectively choose to output a guess in the security game, and \emph{always-bet} adversaries who always output a guess (the latter two are inspired by similar models defined for CBFs~\cite{naor2022bet}). These models provide fine-grained understanding of the attack surface against LBFs and several security-efficiency tradeoffs. Moreover, we explore the relationships between these various notions, proving which security notions imply which. We believe that our notions contribute to establishing a foundation for analyzing learning-augmented data structures in adversarial settings.\medskip

\noindent\textbf{Constructions.} We propose two constructions for LBFs: \bloomname{} and \betpassbloomname{}. Our constructions extend a prior LBF variant called Partitioned LBFs~\cite{plbf} by introducing cryptographic hardness through pseudorandom permutations (PRPs) and pseudorandom functions (PRFs). Assuming one-way functions exist, we show that \bloomname{} achieves security under the fully-adaptive and partially-adaptive adversarial models. We also show (assuming one-way functions) that \betpassbloomname{} satisfies the bet-or-pass security notion (the extended notion for LBFs). Bet-or-pass is one of the strongest adversaries considered by prior work~\cite{naor2022bet,lotan2025adversarially} for CBFs. Our constructions require only $\bigO{n\log{\frac{1}{\varepsilon}} + \lambda}$ additional bits of memory, where $n$ is the cardinality of the set represented by the filter, $\varepsilon$ is the desired FPR, and $\lambda$ is the security parameter.\medskip

\noindent\textbf{Performance evaluation}. We analyze FPR of our constructions in both adversarial and non-adversarial settings. Our results show that LBFs can provide lower FPR than CBFs in realistic workloads using the same memory budget. They also show that our LBF constructions achieve better FPR-memory tradeoffs than prior CBF constructions while maintaining strong security guarantees. These results demonstrate that adversarial resilience can be supported for LBFs with minimal performance overhead.

\subsection{Related Work}
\label{subsec:related}
\noindent \textbf{Security of Classical Bloom Filters.} Gerbet et al.~\cite{evilchoicesbloom} demonstrate practical attacks on CBFs in the context of web crawlers and spam email filtering, and propose combining universal hash functions with message authentication codes to mitigate a subset of these attacks. Naor et al.~\cite{moni1} define an adversarial model for CBFs and show that for computationally bounded adversaries, non-trivial adversary 
resilient CBFs exist if and only if one-way functions exist, and that for computationally unbounded adversaries there exists a CBF
that is secure when the adversary makes $t$ queries while using only $\mathcal{O}(n \log \frac{1}{\varepsilon} + t)$ bits of memory. 

Clayton et al.~\cite{claytonetal} and Filic et al.~\cite{filic1} present stronger adversarial models than~\cite{moni1}, giving the adversary the capability of performing insertions and accessing the internal CBF state, and they show secure CBF constructions realizing these models. Naor and Oved~\cite{naor2022bet} introduce a comprehensive study of CBF security and define several robustness notions in a generalized adversarial model. Concretely,~\cite{naor2022bet} calls the notion of~\cite{moni1} as \textit{Always-Bet (AB)} test since the adversary must output a guess. Then, it introduces a new, strictly stronger, security notion called the \textit{Bet-Or-Pass (BP)} test, giving the adversary the option to pass without outputting a guess. Lotan and Naor~\cite{lotan2025adversarially} provide further results related to relationships between the security notions of~\cite{naor2022bet}.\medskip

\noindent\textbf{Security of Learned Bloom Filters.} LBFs were first introduced by Kraska et al.~\cite{learnedindexstructures} who showed that LBFs can offer better FPR vs. memory tradeoffs than CBFs. Mitzenmacher~\cite{learnedbloomsandwiching} provides the first rigorous mathematical model for LBFs, focusing on analyzing performance in terms of memory size and FPR, and introduces an LBF variant called Sandwiched LBF. Vaidia et al.~\cite{plbf} introduce another LBF variant, called partitioned LBFs (PLBFs). PLBFs use a learning model to partition the set $S$ into $p \in \NN$ partitions and use a separate backup CBF for each of the $p$ partitions. PLBFs reduce the FPR for specific partitions compared to original LBFs that employ only one backup CBF. 

All these works only examine performance, but not security. We are only aware of one prior work that studied LBFs in adversarial settings: Reviriego et al.~\cite{reviriego1} demonstrate practical attacks on LBFs, which, as mentioned earlier, rely on exploiting the learning model to craft false positives in new ways. In response, they propose two potential mitigations: the first relies on switching back to a CBF once an attack is detected, while the second adds a second backup CBF. However, Reviriego et al.~\cite{reviriego1} do not provide provable security guarantees for the suggested mitigations, or even empirical evaluations of their effectiveness, and leave provable security of LBFs as an open problem.

\subsection{Future Directions}

Our work offers important foundational steps towards understanding provable security of LBFs. Here, we list some natural directions for future work.\medskip

\noindent\textbf{Unsteady setting.} We provide secure constructions for LBFs in which the query algorithm \emph{does not} modify the internal representation of the LBF. Naor and Yogev~\cite{moni1} call this the \textit{steady} setting, and show secure CBF constructions for the \textit{unsteady} setting where the query algorithm changes the internal filter representation after each query. An interesting direction is to build LBF constructions and prove their adaptive security in the unsteady setting.\medskip

\noindent\textbf{Dynamic Bloom filters.} As opposed to static Bloom filters, which do not modify the input set $S$ after construction, dynamic Bloom filters allow insertions of new elements $S^{\prime} = \{ x \} \cup S$ after construction. Note that to maintain their one-sided error guarantees, even dynamic Bloom filters do not allow deletions. Similar to prior work~\cite{moni1,naor2022bet,lotan2025adversarially} that consider static CBFs, our work focuses on static LBFs. Clayton et al.~\cite{claytonetal} and Filic et al.~\cite{filic1} show provably-secure security constructions for dynamic CBFs. As such, another interesting direction to explore is formulating adaptive security and building secure constructions for dynamic LBFs under an adversarial model that admits insertions.\medskip
    
\noindent\textbf{Computationally unbounded adversaries.} Our constructions are secure against polynomial-time adversaries. Another interesting avenue for future research is showing LBF constructions that realize our security notions while assuming computationally unbounded adversaries.

\section{Preliminaries}
\label{sec:prelim}

In this section, we review CBFs and their adversarial models from~\cite{moni1,naor2022bet}, which form the basis for the adversarial models we define for LBFs.\medskip

\noindent\textbf{Notation.} For a set $S$, $|S|$ denotes the size of $S$, and $x \sample S$ denotes that $x$ is sampled uniformly at random from $S$. For $n \in \mathbb{N}$, $[n]$ denotes $\{ 1, \cdots, n\}$. Bloom filters can store elements from a finite domain $\domain$, so we have $S \subseteq \domain$. Lastly, $\lambda$ denotes the security parameter, $\negl$ denotes a function negligible in $\lambda$, and $\ppt$ denotes probabilistic polynomial time.\medskip

\noindent\textbf{CBF Modeling.} We adopt the CBF model from~\cite{moni1} while considering the \textit{steady} setting in which the query algorithm does not change the filter representation.

\begin{definition}\label{def:naoryogevbloom}
A Bloom filter $\mathbf{B} = (\mathbf{B}_{1}, \mathbf{B}_{2})$ is a pair of polynomial time algorithms. $\mathbf{B}_{1}$ is a randomized construction algorithm that takes as input a set $S \subseteq \domain$ of size $n$, and outputs a representation $M$. $\mathbf{B}_{2}$ is a deterministic query algorithm that takes as input an element $x \in \domain$ and a representation $M$, and outputs 1 indicating that $x \in S$, and 0 otherwise. We say that $\mathbf{B}$ is an $(n, \varepsilon)$-Bloom filter if for all sets $S \subseteq \domain$ of cardinality $n$, the following hold:

\begin{enumerate}
    \item Completeness: $\forall{x} \in S : \Pr[\mathbf{B}_{2}(\mathbf{B}_{1}(S), x) = 1] = 1$
    \item Soundness: $\forall x \notin S : \Pr[\mathbf{B}_{2}(\mathbf{B}_{1}(S), x) = 1] \leq \varepsilon$ 
\end{enumerate}

\noindent where the probabilities are taken over the random coins of $\mathbf{B}_{1}$.

\end{definition}

A standard CBF is implemented as a string $str$ of length $n_{b} \in \NN$ bits indexed over $[n_{b}]$, along with $n_{h} \in \NN$ hash functions $h_i$, for $i \in [n_h]$, from a universal hash family $\mathcal{H}$ used for element mapping. That is, each $h_{i}$ maps an element from $\domain$ to an index value within $[n_{b}]$, so $h_{i}: \domain \mapsto [n_{b}]$. Then, for each element $x \in S$, for $i \in [n_h]$, the bit at index $h_{i}(x)$ of $str$ is set to $1$ (if it is already set to 1, it stays 1). For a queried element $x$, a CBF returns 1 if all bits within $str$ at indices corresponding to $h_{i}(x)$ are set to $1$, otherwise, the CBF returns 0.\medskip

\noindent\textbf{Always-Bet Security.}
Naor et al.\cite{moni1,naor2022bet} introduced the Always-Bet security game for CBFs, denoted as \texttt{ABGame}, in which an adversary who has oracle access to the CBF aims to find a false positive. It is denoted as always-bet since the adversary is required to output a guess $x^*$ at the end of the game that will be tested whether it is a false positive (i.e., always bets that $x^*$ is a false positive). In detail, for an adversary $\adv = (\adv_{1}, \adv_{2})$, ${\adv}_{1}$ chooses a set $S \subseteq \domain$ for which $\mathbf{B}_1$ will compute a representation $M$, and ${\adv}_{2}$ gets $S$ as input and attempts to find a false positive $x^{*}$ that was not queried before, given only oracle access $\mathbf{B}_{2}(M, \cdot)$.\medskip

\noindent \texttt{ABGame($\adv, t, \lambda$)}:
    \begin{enumerate}
    \vspace{-3pt}
        \item $S \leftarrow \adv_{{1}}(1^{\lambda})$ such that $S \subseteq \domain$ and $|S| = n$.
        \item $M \leftarrow \mathbf{B}_{1}(S)$.
        \item $x^{*} \leftarrow \adv_{2}^{\mathbf{B}_2(M, \cdot)}(1^{\lambda}, S)$ where $\adv_{2}$ can make at most $t$ queries to $\mathbf{B}_{2}(M, \cdot)$---in this and other games, we denote the adversarial queries as $\{x_1, \dots, x_t$\}.
        \item If $x^{*} \notin S \cup \{ x_{1}, \cdots, x_{t} \}$ and $\mathbf{B}_{2}(M, x^*) = 1$, return $1$. Otherwise, return $0$. 
    \end{enumerate}

We use the following security notion for all security games discussed in this paper (except for \texttt{BPGame}, which we discuss next). 

\begin{definition}\label{def:securecbf}
    A Bloom filter $\mathbf{B}$ is $(n, t, \varepsilon)$-secure under a security game $\texttt{Game}$ if for for all large enough $\lambda \in \mathbb{N}$, all $\ppt$ adversaries $\adv$, and all sets $S \subseteq \domain$ of cardinality $n$, it holds that $
    \text{Pr}[\texttt{Game}(\adv, t, \lambda) =1] \leq \varepsilon$,  where the probability is taken over the random coins of $\mathbf{B}$ and $\adv$.
\end{definition}

\noindent\textbf{Bet-or-Pass Security.}
Naor and Oved~\cite{naor2022bet} introduce a security game stronger than \texttt{ABGame} called Bet-or-Pass, or \texttt{BPGame}. Here, $\adv$ can either output $x^{*}$ or \textit{pass}, so it is not always betting on the output of the game to be a false positive. \texttt{BPGame} also defines $\adv$'s \textit{profit} $C_{\adv}$, rewarding $\adv$ if $x^{*}$ is a false positive, penalizing $\adv$ if $x^{*}$ is not a false positive, and leaving $C_{A}$ unchanged if $\adv$ chooses to pass.\medskip 

\noindent \texttt{BPGame($\adv, t, \lambda$)}:
    \begin{enumerate}
    \vspace{-3pt}
        \item $S \leftarrow \adv_{{1}}(1^{\lambda})$ such that $S \subseteq \domain$ and $|S| = n$.
        \item $M \leftarrow \mathbf{B}_{1}(S)$.
        \item $(b, x^{*}) \leftarrow \adv_{2}^{\mathbf{B}_2(M, \cdot)}(1^{\lambda}, S)$ where $\adv_{2}$ can make at most $t$ queries, $\{x_1, \dots, x_t\}$, to $\mathbf{B}_{2}(M, \cdot)$.
        \item  Return $\adv$'s profit, $C_{\adv}$, which is defined as \begin{equation*} C_{\adv} = \begin{cases} 0, & \text{if } b = 0\\
        \varepsilon^{-1}, & \begin{aligned}[t] &\text{if } b = 1 \text{ and } x^{*} \notin S \cup \{ x_{1}, \ldots, x_{t} \} \\ &\text{ and } \mathbf{B}_{2}(M, x^*) = 1 \end{aligned}\\
        -(1 - \varepsilon)^{-1}, & \text{otherwise} 
        \end{cases} \end{equation*}
    \end{enumerate}
The profit formulation is set in this way to ensure that an adversary that makes a random guess (which will be a false positive with probability $\varepsilon$) has an expected profit of zero~\cite{naor2022bet}. $\adv$ breaks the security of the CBF if its profit is noticeably larger than zero. Thus, the security guarantee for \texttt{BPGame} is defined as an upper bound on the expectation of the adversary's profit.


\begin{definition}\label{def:securebetorpass}
    A Bloom filter $\mathbf{B}$ is $(n, t, \varepsilon)$-secure under \texttt{BPGame} if for all large enough $\lambda \in \mathbb{N}$,  all $\ppt$ adversaries $\adv$, and all sets $S \subseteq \domain$ of cardinality $n$, there exists a negligible function $\negl[\cdot]$ such that $\expect{C_{\adv}} \leq \negl$, where the expectation is taken over the random coins of $\mathbf{B}$ and $\adv$.
\end{definition}
\section{Definitions and Adversarial Models for LBFs}
\label{sec:lbf-adversary-model}

In this section, we first define a model for LBFs, followed by three LBF adaptive security notions: full adaptivity (which when slightly modified captures always-bet security), partial adaptivity, and learned bet-or-pass.  

\subsection{LBF Definition}
We present our definitions for LBFs, which are based on the model of~\cite{learnedbloomsandwiching} with additional formalism and adaptations to make it convenient to compare with the models of CBFs introduced earlier. An LBF uses a learning model trained over the dataset the LBF represents, such that the model determines a function $\mathscr{L}$ that models this set. In particular, on input $x$,  $\mathscr{L}$ outputs the probability that $x \in S$. In what follows, we define the notion of a training dataset and a learning model in the context of LBFs (where $y_i$ is a label stating whether $x_i \in S$).

\begin{definition}\label{def:trainingdataset}
Let $S \subseteq \domain$ be any set represented by a Bloom filter. For any two sets $P \subseteq S$ and $N \subseteq \domain \setminus S$, the training dataset is the set $\mathscr{T} = \{ (x_{i}, y_{i} = 1) \mid x_{i} \in P \} \cup \{ (x_{i}, y_{i} = 0) \mid x_{i} \in N\}$.
\end{definition}

\begin{definition}\label{def:learning_model} 
  For an $\mathscr{L}: \domain \mapsto [0, 1]$ and threshold $\tau$, we say $\mathscr{L}$ is an $(S, \tau, \varepsilon_{p}, \varepsilon_{n})$-learning model if for any set $S \subseteq \domain$ the following hold:
  \begin{enumerate}
      \item P-Soundness: $\forall x \notin S : \Pr[\mathscr{L}(x) \geq \tau] \leq \varepsilon_{p}$
      \item N-Soundness: $\forall x \in S : \Pr[\mathscr{L}(x) < \tau] \leq \varepsilon_{n}$
  \end{enumerate}
  \noindent where the probability is taken over the random coins of $\mathscr{L}$.
\end{definition}

Now, we define an LBF capturing both the classic and learning components. We consider the \textit{steady} setting in which the query algorithm $\mathbf{B}_{2}$ does not change the learned representation of the set $S$ (including both the set representation held by the backup CBF, i.e., $M$, and $(\mathscr{L}, \tau)$). 

\begin{definition}\label{def:learnedbloomfilter} 
An LBF $\mathbf{B} = (\mathbf{B}_{1}, \mathbf{B}_{2}, \mathbf{B}_{3}, \mathbf{B}_{4})$ is a tuple of four polynomial-time algorithms: $\mathbf{B}_1$ is as before, $\mathbf{B}_2$ is a query algorithm, $\mathbf{B}_{3}$ is a randomized algorithm that takes a set $S \subseteq \domain$ as input and outputs a training dataset $\mathscr{T}$, and $\mathbf{B}_{4}$ is a randomized algorithm that takes the training dataset $\mathscr{T}$ as input and returns a learning model $\mathscr{L}$ and a threshold $\tau \in [0, 1]$. The internal representation of an LBF contains two components: the classical component $M$ and the learned component ($\mathscr{L}, \tau)$. $\mathbf{B}_{2}$ takes as inputs an element $x \in \domain$, $M$, and ($\mathscr{L}, \tau$), and outputs 1, indicating that $x \in S$, and 0 otherwise. We say that $\mathbf{B}$ is an $(n, \tau, \varepsilon, \varepsilon_{p}, \varepsilon_{n})$-LBF if for all sets $S \subseteq \domain$ of cardinality $n$, it holds that:

\begin{enumerate}
    \item Completeness: $\forall{x} \in S : \Pr[\mathbf{B}_{2}(\mathbf{B}_{1}(S), \mathbf{B}_{4}(\mathbf{B}_{3}(S)), x) = 1] = 1$.
    \item Filter soundness: $\forall x \notin S : \Pr[\mathbf{B}_{2}(\mathbf{B}_{1}(S), \mathbf{B}_{4}(\mathbf{B}_{3}(S)), x) = 1] \leq \varepsilon$. 
    \item Learning model soundness: $\mathbf{B}_{4}(\mathbf{B}_{3}(S))$ is an $(S, \tau, \varepsilon_{p}, \varepsilon_{n})$-learning model.
\end{enumerate}
\noindent where the probabilities are over the random coins of $\mathbf{B}_{1}$, $\mathbf{B}_{3}$, and $\mathbf{B}_{4}$.

\end{definition}

Standard LBFs (Figure~\ref{fig:learnedbloom}) use a learning model as a pre-filter before a CBF. The CBF is called a backup CBF as it is only queried on inputs $x$ for which the learning model decides that $x$ is not an element of the stored set $S$, i.e., $\mathscr{L}(x) < \tau$.

\subsection{Full Adaptive Security}
\label{sec:fully_adaptive}

Full adaptivity means that $\adv$ chooses all the queries submitted to the query algorithm $\mathbf{B}_2$, i.e., the adversary controls the entire workload. For CBFs, the full adaptivity game, besides the oracle access to $\mathbf{B}_2(M, \cdot)$, gives oracle access to $\mathbf{B_1}$ enabling $\adv$ to obtain $M$ for any set $S$ of its choosing. Based on that, we define the following game capturing full adaptive security for CBFs.\medskip

\noindent\texttt{FAGame($\adv, t, \lambda$)}:
\vspace{-3pt}
    \begin{enumerate}
        \item $S \leftarrow \adv_{{1}}(1^{\lambda})$ such that $S \subseteq \domain$ and $|S| = n$.
        \item $M \leftarrow \mathbf{B}_{1}(S)$.
        \item $x^{*} \leftarrow \adv_{2}^{\oracle}(1^{\lambda}, S)$, where $\oracle = \{\mathbf{B}_1 (\cdot), \mathbf{B}_2 (M, \cdot)\}$. $\adv_{2}$ can make at most $t$ queries to $\mathbf{B}_2$, and any polynomial number of queries to $\mathbf{B}_1$.
        \item If $x^{*} \notin S \cup \{ x_{1}, \cdots, x_{t} \}$ and $\mathbf{B}_2 (M, x^*) = 1$, return $1$, else, return $0$. 
    \end{enumerate}

For LBFs, we define a similar security game, denoted \texttt{LFAGame}. The difference is that $\adv$ now has oracle access to the additional algorithms in the LBF structure. We note that our results hold even if we let $\adv$ choose $\mathscr{T}$ directly (rather than having the challenger invoke $\mathbf{B}_3$ over the set $S$ chosen by $\adv$), as long as the challenger validates that this $\mathscr{T}$ satisfies Definition~\ref{def:trainingdataset}.\medskip

\noindent\texttt{LFAGame}($\adv, t, \lambda$):
\vspace{-3pt}
    \begin{enumerate}
        \item $S \leftarrow \adv_{{1}}(1^{\lambda})$ such that $S \subseteq \domain$ and $|S| = n$.
        \item $M \leftarrow \mathbf{B}_{1}(S)$, $\mathscr{T} \leftarrow \mathbf{B}_{3}(S)$, and $(\mathscr{L}, \tau) \leftarrow \mathbf{B}_{4}(\mathscr{T})$.
        \item $x^{*} \leftarrow \adv_{2}^{\oracle}(1^{\lambda}, S)$, where $\oracle = \{\mathbf{B}_1 (\cdot), \mathbf{B}_2 (M, \mathscr{L}, \tau, \cdot), \mathbf{B}_3(\cdot), \mathbf{B}_4(\cdot)\}$. $\adv_{2}$ can make at most $t$ queries to $\mathbf{B}_2$, and any polynomial number of queries to each of $\mathbf{B}_1$, $\mathbf{B}_3$, and $\mathbf{B}_4$.
        \item If $x^{*} \notin S \cup \{ x_{1}, \cdots, x_{t} \}$ and $\mathbf{B}_2 (M, \mathscr{L},\tau, x^*) = 1$, return 1, else, return 0. 
    \end{enumerate}

If we remove $\adv_{2}$'s oracle access to $\mathbf{B}_1(\cdot)$, $\mathbf{B}_3(\cdot)$, and $\mathbf{B}_4(\cdot)$, we obtain a notion for the \emph{always-bet} security game, which we refer to as \texttt{LABGame($\adv, t, \lambda$)}.

\subsection{Partial Adaptive Security}\label{sec:partially-adaptive-security}
For partial adaptivity, among the $t$ queries to $\mathbf{B}_2$, $\adv$ can choose $\alpha t$ of them, where $\alpha \in [0, 1]$. These $t$ queries may be part of a batch workload or a streaming workload under any streaming models described by~\cite{muthukrishnanetal}. Systems incorporating Bloom filters can operate under such a partial-adaptivity model in many real-world scenarios, including content caching (e.g., as in content delivery networks) and database systems. For example, LSM (log-structured merge) Tree stores, including Google's LevelDB~\cite{leveldbbloom} and Facebook's RocksDB~\cite{rocksdbbloom}, use Bloom filters to reduce read times~\cite{dayan2017monkey}. These stores can receive queries from both malicious and non-malicious users, which captures the fact that $\adv$ can observe the output of queries made by others while it can choose the rest. 

In the partial adaptivity game \texttt{PAGame}, as in \texttt{FAGame}, $\adv$'s goal is to produce a previously unseen false positive. However, this time, $\adv$ cannot choose all the queries; it can only choose a fraction $\alpha$ of them, while the remaining $(1 - \alpha)t$ queries are uniformly sampled at random from $\domain$. $\adv$, however, still observes the output of all queries. $\adv$ also has the freedom to choose the order in which adversarial queries are interleaved between non-adversarial queries. We show this notion first for CBFs, where $c$ indicates $\adv$'s choice of whether to evaluate the adversarial query if $c = 1$, or a non-adversarial one if $c = 0$.\medskip

\noindent\texttt{PAGame($\adv, \alpha, t, \lambda$)}:
\vspace{-3pt}
    \begin{enumerate}
        \item $S \leftarrow \adv_{{1}}(1^{\lambda})$ such that $S \subseteq \domain$ and $|S| = n$.
        \item $M \leftarrow \mathbf{B}_{1}(S)$, and set $i = \beta = 0$.
        \item $(c, x_i) \leftarrow \adv_{2}^{\mathbf{B}_1(\cdot)}(1^{\lambda}, S, i, \beta)$---$\adv_{2}$ can make any polynomial number of queries to $\mathbf{B}_1$.
        \item If $c = 1$ and $\beta < \alpha t$, give $\adv_{2}$ the output of $\mathbf{B}_2(M,x_i)$, and set $\beta = \beta + 1$.
        \item Otherwise, ${x_i} \sample \domain$ and give $\adv_{2}$ the output of $\mathbf{B}_2(M,x_i)$.
        \item Set $i = i + 1$. If $i < t$, go back to Step 3.
        \item $\adv_2$ outputs $x^{*}$. If $x^{*} \notin S \cup \{x_1, \dots, x_t\}$ and $\mathbf{B}_2(M,x^{*}) = 1$, return 1. Otherwise, return 0.
    \end{enumerate}

This security game can be modified to work for LBFs as follows.\medskip

\noindent\texttt{LPAGame($\adv, \alpha, t, \lambda$)}:
\vspace{-3pt}
    \begin{enumerate}
        \item $S \leftarrow \adv_{{1}}(1^{\lambda})$ such that $S \subseteq \domain$ and $|S| = n$.
        \item $M \leftarrow \mathbf{B}_{1}(S)$, $\mathscr{T} \leftarrow \mathbf{B}_3(S)$, $(\mathscr{L}, \tau) \leftarrow \mathbf{B}_4(\mathscr{T})$, and set $i = \beta = 0$.
        \item $(c, x_i) \leftarrow \adv_{2}^{\oracle}(1^{\lambda}, S, i, \beta)$, where $\oracle = \{\mathbf{B}_1 (\cdot), \mathbf{B}_3(\cdot), \mathbf{B}_4(\cdot)\}$ and $\adv_{2}$ can make any polynomial number of queries to $\oracle$.
        \item If $c = 1$ and $\beta < \alpha t$, give $\adv_{2}$ the output of $\mathbf{B}_2(M,\mathscr{L}, \tau,x_i)$, and set $\beta = \beta + 1$.
        \item Otherwise, ${x_i} \sample \domain$ and give $\adv_{2}$ the output of $\mathbf{B}_2(M,\mathscr{L}, \tau,x_i)$.
        \item Set $i = i + 1$. If $i < t$, go back to Step 3.
        \item $\adv_2$ outputs $x^{*}$. If $x^{*} \notin S \cup \{x_1, \dots, x_t\}$ and $\mathbf{B}_2(M,\mathscr{L},\tau,x^{*}) = 1$, return 1. Otherwise, return 0.
    \end{enumerate}

Definition~\ref{def:securecbf} still applies to these games with one change; now we say a Bloom filter is $(n,\alpha,t,\varepsilon)$-secure to account for the additional parameter $\alpha$.

\subsection{Learned Bet-or-Pass Security}

This section extends the CBF \texttt{BPGame} from Naor and Oved~\cite{naor2022bet} to LBFs, denoted as \texttt{LBPGame} (the same expectation notion from Definition~\ref{def:securebetorpass} applies here as well).\medskip

\noindent\texttt{LBPGame($\adv, t, \lambda$)}:
\vspace{-3pt}
    \begin{enumerate}
        \item $S \leftarrow \adv_{{1}}(1^{\lambda})$ such that $S \subseteq \domain$ and $|S| = n$.
        \item $M \leftarrow \mathbf{B}_{1}(S)$, $\mathscr{T} \leftarrow \mathbf{B}_3(S)$, $(\mathscr{L}, \tau) \leftarrow \mathbf{B}_4(\mathscr{T})$.
        \item $(b, x^{*}) \leftarrow \adv_{2}^{\oracle}(1^{\lambda}, S)$ where $\oracle = \{\mathbf{B}_1 (\cdot), \mathbf{B}_2 (M, \mathscr{L}, \tau, \cdot), \mathbf{B}_3(\cdot), \mathbf{B}_4(\cdot)\}$. $\adv_{2}$ can make at most $t$ queries to $\mathbf{B}_2$, and any polynomial number of queries to each of $\mathbf{B}_1$, $\mathbf{B}_3$, and $\mathbf{B}_4$.
        \item  Return $\adv$'s profit, $C_{\adv}$, which is defined as \begin{equation*} C_{\adv} = \begin{cases} 0, & \text{if } b = 0\\
        \varepsilon^{-1}, & \begin{aligned}[t] &\text{if } b = 1 \text{ and } x^{*} \notin S \cup \{ x_{1}, \ldots, x_{t} \} \\ &\text{ and } \mathbf{B}_{2}(M, \mathscr{L},\tau, x^*) = 1 \end{aligned}\\
        -(1 - \varepsilon)^{-1}, & \text{otherwise} 
        \end{cases} \end{equation*}
    \end{enumerate}

\subsection{Relationships between Security Notions}
\label{sec:relation_security_games}
We investigate relationships between the security games we defined so far. For clarity, we focus on LBF versions of all security games. Naor et al.~\cite{naor2022bet,lotan2025adversarially} investigate similar relationships for the CBF security notions they defined, including \texttt{ABGame} and \texttt{BPGame}. 

We first explore connections between \texttt{LFAGame} and \texttt{LPAGame}. Notice that when $\alpha = 1$, \texttt{LPAGame} is equivalent to \texttt{LFAGame}. \texttt{LFAGame}, therefore, is the special case of \texttt{LPAGame} where all queries are adversarial. The converse relationship is false, meaning that \texttt{LFAGame} is stronger than \texttt{LPAGame}.

\begin{theorem}\label{thm:lfaimplieslpa}
For $\varepsilon \in (0,1), \alpha \in [0, 1]$ and $n,t \in \NN$, we have: \[(n, t, \varepsilon)\text{-security in \texttt{LFAGame}} \implies (n, \alpha, t, \varepsilon)\text{-security in \texttt{LPAGame}}\]
\[(n, \alpha, t, \varepsilon)\text{-security in } \texttt{LPAGame} \notimplies (n, t, \varepsilon)\text{-security in } \texttt{LFAGame}\]
\end{theorem}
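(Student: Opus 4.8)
\textbf{Implication (\texttt{LFAGame} $\Rightarrow$ \texttt{LPAGame}).} I would prove this direction by contraposition via a perfect simulation, exploiting that the full-adaptivity adversary is strictly more powerful: it can manufacture the ``random'' queries of the partial game itself. From any $\ppt$ adversary $\adv^{\mathrm{PA}}$ winning \texttt{LPAGame} with probability $>\varepsilon$, I build $\adv^{\mathrm{FA}}$ for \texttt{LFAGame} as follows. $\adv^{\mathrm{FA}}_1$ runs $\adv^{\mathrm{PA}}_1$ and forwards its set $S$; then $\adv^{\mathrm{FA}}_2$ internally runs the loop of \texttt{LPAGame}, maintaining the counters $i,\beta$. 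When $\adv^{\mathrm{PA}}_2$ emits $(c,x_i)$ with $c=1$ and $\beta<\alpha t$, $\adv^{\mathrm{FA}}_2$ forwards $x_i$ to its own $\mathbf{B}_2$ oracle and increments $\beta$; otherwise it draws $x_i\sample\domain$ itself and queries $\mathbf{B}_2$ on it. All $\mathbf{B}_1,\mathbf{B}_3,\mathbf{B}_4$ calls are passed through, and $\adv^{\mathrm{FA}}_2$ finally echoes the output $x^{*}$. Since both games instantiate $M,\mathscr{L},\tau$ from the same algorithms on the same $S$, and the self-sampled queries use the same uniform distribution on $\domain$, the view of $\adv^{\mathrm{PA}}_2$ is identically distributed to its view in the real game; moreover $\adv^{\mathrm{FA}}_2$ issues exactly one $\mathbf{B}_2$ query per round (at most $t$ total) and the elements it queries form exactly the game's $\{x_1,\dots,x_t\}$. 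Hence the winning predicate $x^{*}\notin S\cup\{x_1,\dots,x_t\}\wedge\mathbf{B}_2(M,\mathscr{L},\tau,x^{*})=1$ fires in \texttt{LFAGame} iff it fired in the simulated \texttt{LPAGame}, so the success probabilities coincide, contradicting \texttt{LFAGame}-security.

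\textbf{Separation (\texttt{LPAGame} $\not\Rightarrow$ \texttt{LFAGame}).} For the non-implication it suffices to exhibit one LBF that is \texttt{LPAGame}-secure yet broken in \texttt{LFAGame}, and the clean witness takes $\alpha=0$. I would embed a single hidden fresh false positive $z$ whose identity is spread across the false-positive \emph{pattern} of a public array of probes $p_{b,j}$ ($b\in[B]$, $j\in[s]$) in a reserved part of $\domain\setminus S$. Using a PRF key stored in $M$ (so only an OWF assumption), $\mathbf{B}_1$ picks for each block $b$ a uniform ``hot'' position $\pi_b\in[s]$, declares $p_{b,\pi_b}$ a false positive, and sets $z=\mathrm{encode}(\pi_1,\dots,\pi_B)$ to be a further false positive outside the probe region; $\mathbf{B}_2$ also runs a baseline backup CBF of rate $\varepsilon/2$ for completeness and soundness, and $(\mathscr{L},\tau)$ is instantiated trivially (routing everything to the backup) so Definition~\ref{def:learnedbloomfilter} holds. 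Choosing $s\ge 2/\varepsilon$ makes each probe a false positive with probability $1/s\le\varepsilon/2$, so each probe is individually non-exploitable while the $B\log s$ bits carried by the hot positions give $z$ enough entropy that a blind guess of $z$ also succeeds only with probability $\le\varepsilon$. A full adversary with $t\ge Bs$ queries reads every probe, recovers all $\pi_b$, decodes $z$, and outputs it---a fresh false positive it never queried---winning with probability $\approx 1$, so the filter is not $(n,t,\varepsilon)$-secure in \texttt{LFAGame}.

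\textbf{\texttt{LPAGame}-security and the main obstacle.} The delicate step, and the one I expect to be hardest, is certifying that the partial adversary's success stays $\le\varepsilon$. For $\alpha=0$ this is clean: the partial adversary issues \emph{no} adaptive $\mathbf{B}_2$ queries, its random queries hit the probe region only with negligible probability, so every $\pi_b$ (hence $z$) remains uniform in its view, and its best options---output a probe, a guess of $z$, or a baseline false positive---each succeed with probability $\le\varepsilon$, giving $(n,0,t,\varepsilon)$-security. The conceptual tension underlying this argument is that leaking $z$ through the Boolean output of $\mathbf{B}_2$ forces the leaking elements to be false positives, and any element that is a false positive with non-negligible probability could simply be output blindly; the $1$-hot encoding resolves this by holding each probe at false-positive probability $1/s\le\varepsilon$ while concentrating high entropy in the \emph{joint} pattern. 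Pushing the separation to $\alpha$ bounded away from $0$ is where the real work lies: one then needs an all-or-nothing decoding so that controlling only $\alpha t<Bs$ probes leaves $z$ unpredictable to within $\varepsilon$, and reconciling this with the per-probe sparsity constraint is the combinatorial crux. Since the theorem only asserts non-implication, the $\alpha=0$ witness already suffices.
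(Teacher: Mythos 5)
Your simulation argument is correct and is essentially the paper's proof: the paper promotes an \texttt{LPAGame} adversary with fraction $\alpha$ to one with fraction $\alpha'=1$ and observes that \texttt{LPAGame} with $\alpha'=1$ coincides with \texttt{LFAGame}; your version makes the same point explicit by having the \texttt{LFAGame} adversary self-sample the challenger's uniform queries and forward the adversarial ones, which yields an identically distributed view and the same winning predicate. If anything, your write-up is the more careful of the two.

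\textbf{On the separation.} Here you take a genuinely different route, and it has a genuine gap. The parameter $\alpha$ is fixed in the theorem's preamble, so the non-implication must be witnessed \emph{for each} $\alpha$ (at least for each $\alpha<1$; at $\alpha=1$ the two games coincide, so the statement implicitly excludes it). The paper's counterexample covers every such $\alpha$ at once: take an \texttt{LPAGame}-secure LBF and modify its query algorithm to switch into an always-one state after receiving $t$ identical queries. An \texttt{LFAGame} adversary triggers this by repeating a single value $t$ times and then outputs any fresh $x^{*}$; an \texttt{LPAGame} adversary controls at most $t-1$ queries, and the remaining challenger-sampled query collides with the repeated value only with negligible probability, so the trigger never fires. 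Your probe-encoding construction, by contrast, is only argued secure at $\alpha=0$, and your closing claim that ``the $\alpha=0$ witness already suffices'' misreads the quantification. Worse, the gap is not merely formal: your construction genuinely fails for $\alpha$ bounded away from $0$. A partial adversary can sequentially query the probes of one block until it sees a positive (expected $s/2$ queries, or infer the hot slot after $s-1$ negatives); any resolved hot position reveals a chunk of $z$ --- with small $B$, all of $z$ --- and $z$ itself (or the inferred, never-queried complement probe) can then be output as a fresh false positive. So the ``all-or-nothing'' parameter balancing you defer is not an optional strengthening; it is exactly what the theorem requires, and whether it can be carried out for all $\alpha<1$ is the hard part you have not done.

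\textbf{What your approach would buy if completed.} Your counterexample has a stateless query algorithm, consistent with the paper's steady-setting model, whereas the paper's trigger construction needs $\mathbf{B}^{\prime}_{2}$ to count repeated queries, i.e., to keep state across queries, which sits somewhat awkwardly with the paper's own definitions. If you do pursue it, two details need pinning down: the baseline backup filter must itself be secure against an adversary observing $t$ challenger-sampled queries (instantiate it with \securecbf{}, which your OWF assumption already affords), and decoding must tolerate baseline false positives falling inside the probe region (e.g., exempt the probe region from the baseline filter).
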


\begin{proof}
For the first relationship, fix a $\ppt$ adversary $\adv = (\adv_{1}, \adv_{2})$ and $t, \lambda \in \NN$. Pick any $\alpha, \alpha^{\prime} \in [0, 1]$ such that $\alpha^{\prime} \geq \alpha$. Construct another adversary $\adv^{\prime} = (\adv^{\prime}_{1}, \adv^{\prime}_{2})$ as follows. Let $\adv^{\prime}_{1}(1^{\lambda}, \alpha^{\prime}, t) = \adv_{1}(1^{\lambda}, \alpha, t)$ and $\adv^{\prime}_{2}(1^{\lambda}, S, \alpha^{\prime}, t, \beta, i) = \adv_{2}(1^{\lambda}, S, \alpha, t, \beta, i)$. This does not break the rules of $\mathtt{LPAGame}$ as still $\alpha' \in [0,1]$. Since $\alpha^{\prime} \geq \alpha$, the winning probability of $\adv^{\prime}$ is at least as high as $\adv$'s winning probability. When $\alpha^{\prime} = 1$, we have $\mathtt{LPAGame}(\adv', 1, t, \lambda) = \mathtt{LFAGame}(\adv', t, \lambda)$, thus the result follows.


For the second relationship, let $\mathbf{B}$ be an $(n, \alpha, t, \varepsilon)$-secure LBF construction in $\mathtt{LPAGame}$. Consider $\mathtt{LFAGame}$ with the same parameters (excluding $\alpha$) as $\mathtt{LPAGame}$. Also, let $\mathbf{B}^{\prime}$ be an alternative LBF construction such that the construction algorithm  $\mathbf{B}^{\prime}$ is identical to $\mathbf{B}$, while the query algorithm $\mathbf{B}^{\prime}_{2}$ differs from $\mathbf{B}_{2}$ as follows. Initially, $\mathbf{B}^{\prime}_{2}(M, \mathscr{L}, \tau, \cdot) = \mathbf{B}_{2}(M, \mathscr{L}, \tau, \cdot)$. However, $\mathbf{B}^{\prime}_{2}$ tracks the number of identical queries it receives. On receiving $t$ identical queries, $\mathbf{B}^{\prime}_{2}$ switches to always output $1$ for all future queries. An adversary can win \texttt{LFAGame} with non-negligible probability by having all its $t$ queries over the same value $x$. At this point, $\mathbf{B}^{\prime}_{2}$ will output 1 for any guess $x^{*} \neq x$ that the adversary outputs and wins the game. Therefore, $\mathbf{B}^{\prime}$ is not $(n, \alpha, t, \varepsilon)$-secure in \texttt{LFAGame}. Now, consider an adversary $\adv$ in \texttt{LPAGame}. To win with non-negligible probability, $\adv$ must trigger $\mathbf{B}^{\prime}_{2}$ to always output $1$. $\adv$ can have at most $t - 1$ adversarial queries, and can choose all of them to be over the same value $x$. However, to trigger $\mathbf{B}^{\prime}_{2}$ to always output 1, the final query, which is chosen uniformly randomly by the challenger, must also be $x$. Assuming the domain $\domain$ is large, the probability of this happens is negligible. Therefore, $\mathbf{B}^{\prime}$ remains $(n, \alpha, t, \varepsilon)$-secure in \texttt{LPAGame}.\qed


\end{proof}
We now compare $\texttt{LFAGame}$ and $\texttt{LPAGame}$ to $\texttt{LABGame}$ and $\texttt{LBPGame}$. 

\begin{theorem}\label{thm:lfa_implies_lab}
For $\varepsilon \in (0,1), \alpha \in [0, 1]$ and $n,t \in \NN$, we have:
\[(n, t, \varepsilon)\text{-security in } \texttt{LFAGame} \implies (n, t, \varepsilon)\text{-security in } \texttt{LABGame}\]
\[(n, \alpha, t, \varepsilon)\text{-security in } \texttt{LPAGame} \implies (n, \alpha t, \varepsilon)\text{-security in } \texttt{LABGame}\]
\end{theorem}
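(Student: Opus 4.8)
The plan is to treat both implications as reductions in which an adversary for the weaker game, \texttt{LABGame}, is simulated inside the stronger game; I would argue both via the contrapositive, starting from a $\ppt$ adversary that breaks \texttt{LABGame} and constructing one that breaks \texttt{LFAGame} (resp. \texttt{LPAGame}). The guiding observation is that \texttt{LABGame} grants strictly fewer capabilities than either of the other two games, so a successful \texttt{LABGame} attack should lift to those settings essentially for free.

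For the first implication, I would observe that \texttt{LABGame} is literally \texttt{LFAGame} with the adversary's oracle access to $\mathbf{B}_1$, $\mathbf{B}_3$, and $\mathbf{B}_4$ removed, while the $\mathbf{B}_2$-query budget $t$, the set-choice step, and the final winning predicate are unchanged. Hence any $\ppt$ adversary $\adv$ for \texttt{LABGame} is, verbatim, a legal \texttt{LFAGame} adversary that simply never queries the three extra oracles, and it wins with exactly the same probability in both games. Therefore, if $\adv$ violates $(n,t,\varepsilon)$-security in \texttt{LABGame}, the same $\adv$ violates it in \texttt{LFAGame}; contrapositively, $(n,t,\varepsilon)$-security in \texttt{LFAGame} implies the same in \texttt{LABGame}, with no loss.

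For the second implication, I would fix a $\ppt$ \texttt{LABGame} adversary $\adv$ with query budget $\alpha t$ and build an \texttt{LPAGame} adversary $\adv'$ (with parameters $\alpha, t$) that internally runs $\adv$. The reduction must translate $\adv$'s \emph{direct} oracle access to $\mathbf{B}_2$ into \texttt{LPAGame}'s turn-based interface: each time $\adv$ issues a query $v$, $\adv'$ advances the game loop with $(c,x_i)=(1,v)$, reads off $\mathbf{B}_2(M,\mathscr{L},\tau,v)$, and hands it back to $\adv$; since $\adv$ makes at most $\alpha t$ queries, this never exhausts the adversarial budget. Once $\adv$ halts with its guess $x^{*}$, $\adv'$ spends the remaining loop iterations on forced non-adversarial queries (setting $c=0$), discards their outputs, and finally outputs the same $x^{*}$.

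The step I expect to be the main obstacle is reconciling the two winning predicates. In \texttt{LABGame} the guess must only avoid $S$ and $\adv$'s own $\alpha t$ queries, whereas in \texttt{LPAGame} it must additionally avoid the $(1-\alpha)t$ challenger-chosen random queries. Because $\adv$ never sees the outputs of those random queries (they are discarded by $\adv'$) and the steady setting guarantees that querying does not alter $M$, $\mathscr{L}$, or $\tau$, the guess $x^{*}$ is independent of the random query values; a union bound then caps $\Pr[x^{*}\in\{\text{random queries}\}]$ by $(1-\alpha)t/|\domain|$, which is negligible for a large domain $\domain$. Consequently $\Pr[\adv' \text{ wins } \texttt{LPAGame}] \geq \Pr[\adv \text{ wins } \texttt{LABGame}] - \negl$, so breaking $(n,\alpha t,\varepsilon)$-security in \texttt{LABGame} breaks $(n,\alpha,t,\varepsilon)$-security in \texttt{LPAGame} up to a negligible term. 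This large-domain assumption is the same device already invoked in the separation of Theorem~\ref{thm:lfaimplieslpa}, and absorbing the collision loss into $\negl$ is the one place where genuine care, rather than a purely syntactic argument, is required.
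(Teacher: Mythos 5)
Your proposal is correct and, at its core, takes the same route as the paper: run the \texttt{LABGame} adversary inside the stronger game and argue its success probability carries over. For the first implication the two arguments coincide exactly---the paper simply notes that \texttt{LABGame} is \texttt{LFAGame} with the oracles $\mathbf{B}_1(\cdot)$, $\mathbf{B}_3(\cdot)$, $\mathbf{B}_4(\cdot)$ removed, so the same adversary wins with the same probability in both games. Where you go beyond the paper is the second implication, which the paper dismisses with a single sentence (``following a similar argument'') and never spells out. You correctly identify the real subtlety there: the winning predicates do not match, because in \texttt{LPAGame} the guess $x^{*}$ must additionally avoid the $(1-\alpha)t$ challenger-sampled random queries, which are absent from the \texttt{LABGame} exclusion set. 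Your fix---schedule the forced random queries after $\adv$ commits to $x^{*}$, use the steady setting to argue the simulation is perfect, and union-bound the collision probability by $(1-\alpha)t/|\domain|$---is the natural one, and it is the same large-domain device the paper itself invokes in the proof of Theorem~\ref{thm:lfaimplieslpa}. Two caveats are worth flagging. First, this costs a genuine (if negligible) additive loss: your reduction yields $(n,\alpha t,\varepsilon+\negl)$-security rather than the exact $(n,\alpha t,\varepsilon)$-security in the statement, and it requires $|\domain|$ to be superpolynomial in $\lambda$; the paper's terse proof silently has the same issue, and your version at least makes it visible. Second, be careful with the contrapositive phrasing under the hard-threshold Definition~\ref{def:securecbf} ($\Pr \leq \varepsilon$ exactly): an \texttt{LABGame} adversary winning with probability only infinitesimally above $\varepsilon$ would not, after subtracting the negligible loss, contradict \texttt{LPAGame} security, so the clean formulation is the direct inequality $\Pr[\adv \text{ wins } \texttt{LABGame}] \leq \Pr[\adv' \text{ wins } \texttt{LPAGame}] + (1-\alpha)t/|\domain|$ rather than a strict contradiction argument.
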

\begin{proof}
Fix a $\ppt$ adversary $\adv$, and $t, \lambda \in \NN$. $\mathtt{LABGame}(\adv, t, \lambda)$ is identical to $\mathtt{LFAGame}(\adv, t, \lambda)$ aside from the fact that $\adv$ does not have oracle access to $\mathbf{B}_{1}(\cdot)$ in $\mathtt{LABGame}$. Hence, $\adv$'s winning probability in $\mathtt{LABGame}(\adv, t, \lambda)$ cannot be greater than $\adv$'s winning probability in $\mathtt{LFAGame}(\adv, t, \lambda)$. Following a similar argument, we get an implication between \texttt{LPAGame} and \texttt{LABGame}.\qed
\end{proof}

To disprove the converse, we introduce an $(n, t, \varepsilon)$-secure construction under \texttt{LABGame} and then modify it such that it remains secure under \texttt{LABGame} but not under \texttt{LFAGame}. We first recall the following theorem for CBFs by Naor and Yogev~\cite{moni1}.

\begin{theorem}[Naor-Yogev Theorem]\label{thm:moninaortheorem}
Let $\textbf{B}$ be an $(n, \varepsilon)$-Bloom filter using $m$ memory bits. If pseudorandom permutations (PRPs) exist, then for security parameter $\lambda$ there exists a negligible function $\negl[\cdot]$ and an $(n,  \varepsilon + \negl)$-strongly resilient Bloom filter in \texttt{ABGame} that uses $m^{\prime} = m + \lambda$ bits of memory.\footnote{Strongly resilient means being $(n, t, \varepsilon)$-secure under \texttt{BPGame} for any $t \in \bigO{\poly[n, \lambda]}$. For more details, see Definition 2.4 in~\cite{naor2022bet}.}
\end{theorem}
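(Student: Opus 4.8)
The plan is to follow the standard ``derandomize the adversary's view with a PRP'' template. Given the base $(n,\varepsilon)$-filter $\mathbf{B}=(\mathbf{B}_1,\mathbf{B}_2)$ over domain $\domain$, fix a PRP family $\{\Pi_k : \domain \to \domain\}_{k \in \{0,1\}^{\lambda}}$ (which exists under the hypothesis) and define $\mathbf{B}'=(\mathbf{B}'_1,\mathbf{B}'_2)$ as follows. On input $S$, $\mathbf{B}'_1$ samples $k \sample \{0,1\}^{\lambda}$, forms $S' = \{\Pi_k(x) : x \in S\}$, runs $M \leftarrow \mathbf{B}_1(S')$, and stores the representation $(M,k)$; on a query $x$, $\mathbf{B}'_2((M,k),x)$ returns $\mathbf{B}_2(M,\Pi_k(x))$. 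Since $\Pi_k$ is a bijection, $x \in S$ iff $\Pi_k(x) \in S'$, so completeness is inherited verbatim, and for $x \notin S$ we have $\Pi_k(x) \notin S'$, so the non-adaptive soundness bound $\varepsilon$ carries over unchanged. The extra storage is exactly the $\lambda$-bit key, giving $m' = m + \lambda$. Note that $k$ lives inside the representation and is never exposed to $\adv$, who only obtains oracle access to $\mathbf{B}'_2$.

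Next I would reduce AB-security of $\mathbf{B}'$ to the pseudorandomness of $\Pi$. Suppose toward a contradiction that some $\ppt$ adversary $\adv=(\adv_1,\adv_2)$ wins $\texttt{ABGame}$ against $\mathbf{B}'$ with probability noticeably above $\varepsilon$. Build a distinguisher $D$ given oracle access to a permutation $\mathcal{O}$ that is either $\Pi_k$ for a uniform $k$ or a truly random permutation $\pi$. $D$ runs $\adv_1$ to obtain $S$, queries $\mathcal{O}$ on every element of $S$ to build $S'$, computes $M \leftarrow \mathbf{B}_1(S')$, and then runs $\adv_2(1^{\lambda},S)$, answering each query $x_i$ by returning $\mathbf{B}_2(M,\mathcal{O}(x_i))$ (while recording the $x_i$); when $\adv_2$ halts with $x^*$, $D$ outputs $1$ iff $x^* \notin S \cup \{x_1,\dots,x_t\}$ and $\mathbf{B}_2(M,\mathcal{O}(x^*))=1$. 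In the PRP world $D$ perfectly simulates $\texttt{ABGame}$ against $\mathbf{B}'$, so $\Pr[D=1]$ equals $\adv$'s true winning probability. It therefore suffices to upper-bound $\Pr[D=1]$ in the random-permutation world by $\varepsilon + \negl$: the gap between the two worlds is a distinguishing advantage, which is negligible by PRP security, and the theorem's $\negl$ absorbs both this term and the slack from the ideal-world bound.

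The crux, and the step I expect to be the main obstacle, is the ideal-world analysis. The key observation is that the winning condition forces $x^*$ to differ from $S$ and from every query $x_i$, so revealing $\pi$ lazily (first on $S$, then on the queries, and on $x^*$ last) makes $\pi(x^*)$ uniform over the as-yet-unrevealed part of $\domain$ and independent of $\adv$'s entire view and of the coins of $\mathbf{B}_1$. Hence, conditioned on the realized $M$, the win probability equals the density of unrevealed accepting points, $|\{z \text{ unrevealed} : \mathbf{B}_2(M,z)=1\}| / (|\domain| - |\text{revealed}|)$. The delicate part is that the base filter's soundness is only a per-point, over-the-coins-of-$\mathbf{B}_1$ statement, so I cannot assert that any single $M$ has few accepting points; instead I would average. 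Since $S'=\pi(S)$ is independent of $\mathbf{B}_1$'s randomness, linearity of expectation together with soundness gives, for fixed $S'$, $\mathbb{E}_{M}\bigl[\,|\{z \notin S' : \mathbf{B}_2(M,z)=1\}|\,\bigr] = \sum_{z \notin S'}\Pr_{\mathbf{B}_1}[\mathbf{B}_2(\mathbf{B}_1(S'),z)=1] \leq (|\domain|-n)\varepsilon$, so the expected fraction of false-positive points is at most $\varepsilon$. Dividing by the slightly smaller denominator $|\domain|-(n+t)$ rather than $|\domain|-n$ inflates this by a factor $1+\bigO{t/|\domain|}$, which is negligible in $\lambda$ for the large domains of interest, yielding an ideal-world win probability of at most $\varepsilon + \negl$. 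The same density bound also handles the bet-or-pass variant: since $\pi(x^*)$ is independent of the betting decision and a non-fresh guess only lowers profit, the conditional false-positive probability on a bet is still $\leq \varepsilon+\negl$, and plugging this into the profit formula (which is tuned so that probability exactly $\varepsilon$ yields expected profit $0$) bounds $\expect{C_{\adv}}$ by $\negl$. The only genuinely subtle points to get right are the lazy-sampling independence claim and this averaging step that converts a per-element guarantee into a bound on the density of accepting points.
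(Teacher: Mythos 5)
Your construction and your main argument for the \texttt{ABGame} claim are essentially the proof behind the cited result: the paper itself does not reprove this theorem (it imports it from Naor--Yogev and only describes the PRP-wrapped construction), and your three steps---the PRP-to-truly-random-permutation hybrid, the lazy-sampling observation that $\pi(x^*)$ is uniform over unrevealed range points for any fresh guess, and the linearity-of-expectation step that converts per-point soundness into a bound on the expected number of accepting points---are exactly the standard argument. One caveat: the bound you obtain is $\varepsilon(|\domain|-n)/(|\domain|-n-t)$, so absorbing the slack into $\negl$ requires $|\domain|$ to be super-polynomial in $\lambda$; this hypothesis is left implicit in the theorem (the paper makes a similar ``large domain'' assumption in the proof of Theorem~\ref{thm:lfaimplieslpa}), and you should state it as an assumption rather than wave at ``domains of interest.''

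The genuine error is your closing claim that ``the same density bound also handles the bet-or-pass variant.'' It does not. Your averaging step bounds the \emph{unconditional} probability that a fresh guess is a false positive; bet-or-pass security requires bounding $\Pr[\mathrm{FP}\mid \text{bet}]$, and the adversary may correlate its decision to bet with its view. While $\pi(x^*)$ is indeed conditionally uniform over unrevealed points given the view, the \emph{density} of accepting points among unrevealed points is a random variable (a function of $\mathbf{B}_1$'s coins) that the view leaks: each fresh query is effectively a Bernoulli sample of that density. The definition of an $(n,\varepsilon)$-Bloom filter constrains only the per-point marginal, not the variance of this density, so a base filter that with probability $\delta$ outputs an all-accepting representation and otherwise has FPR $\varepsilon_0$ (with $\delta+(1-\delta)\varepsilon_0\le\varepsilon$) is a legitimate input to the theorem. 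The PRP wrapper preserves this behavior; the adversary detects the all-accepting state with its $t$ queries, bets only in that case, and earns expected profit roughly $\delta\cdot\varepsilon^{-1}$, which is non-negligible. This is precisely the Naor--Oved separation (Theorem~\ref{thm:naor-oved}), and the paper itself states that whether \securecbf{} can be made secure under \texttt{BPGame} is open. (The footnote defining ``strongly resilient'' via \texttt{BPGame} is best read as a misprint: the surrounding text, and Naor--Oved's own definition, intend AB-security for every polynomial $t$, which is what your main argument correctly establishes.)
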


This theorem is proved for the following construction, which we denote as Naor-Yogev CBF, or \securecbf{}. Run the initialization algorithm of a CBF with the set $S^{\prime} = {\{\prp_{\sk}(x): x \in S\}}$ instead of $S$, where $\prp_{\sk}$ is a keyed PRP. Similarly, for an element $x \in \domain$, query the filter over $\prp_{\sk}(x)$ instead of $x$. This new CBF construction uses $m + \lambda$ bits of memory and is $(n,  \varepsilon + \negl)$-secure for any $t \in \bigO{\poly[n, \lambda]}$ under \texttt{ABGame}. We modify this construction in the proof for the theorem below.

\begin{theorem}\label{thm:labnotimplylpalfa}
For $\varepsilon \in (0,1), \alpha \in [0, 1]$, $n,t \in \NN$, and $\delta \in (0, 1)$, we have:
\[(n, t, \varepsilon)\text{-security in \texttt{LABGame}} \notimplies (n, t, \delta)\text{-security in \texttt{LFAGame}}\]
\[(n, \alpha t, \varepsilon)\text{-security in \texttt{LABGame}} \notimplies (n, \alpha, t, \delta)\text{-security in \texttt{LPAGame}}\]
\end{theorem}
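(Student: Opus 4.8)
The plan is to prove both non-implications by a single separation construction, mirroring the strategy already used for Theorem~\ref{thm:lfaimplieslpa} but now separating the \emph{always-bet} notion from the \emph{fully/partially adaptive} notions. First I would invoke Theorem~\ref{thm:moninaortheorem} to obtain a base LBF $\mathbf{B}$ that is $(n,t,\varepsilon)$-secure in \texttt{LABGame}: take the \securecbf{} construction as the backup CBF component and equip it with any valid learning model satisfying Definition~\ref{def:learning_model}, so the whole object is a secure LBF in the always-bet game for any $t \in \bigO{\poly[n,\lambda]}$. The key point is that always-bet security removes the adversary's oracle access to $\mathbf{B}_1,\mathbf{B}_3,\mathbf{B}_4$, so the adversary cannot exploit auxiliary structure; I will use this to plant a ``backdoor'' in the query algorithm that is only reachable when the adversary controls enough of the workload.

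Next I would construct the modified filter $\mathbf{B}'$ exactly as in the second half of the proof of Theorem~\ref{thm:lfaimplieslpa}: let $\mathbf{B}'_1 = \mathbf{B}_1$, $\mathbf{B}'_3 = \mathbf{B}_3$, $\mathbf{B}'_4 = \mathbf{B}_4$, and let $\mathbf{B}'_2$ behave identically to $\mathbf{B}_2$ except that it counts identical queries and, upon receiving $t$ repetitions of the same input, switches to outputting $1$ on every subsequent query. I then argue the two non-implications:

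\begin{enumerate}
    \item \textbf{\texttt{LABGame} $\notimplies$ \texttt{LFAGame}.} $\mathbf{B}'$ is still $(n,t,\varepsilon)$-secure in \texttt{LABGame}: in the always-bet game the challenger invokes $\mathbf{B}_1,\mathbf{B}_3,\mathbf{B}_4$ once on the challenge set and the adversary has only $t$ queries to $\mathbf{B}'_2$; with at most $t$ queries the adversary cannot trigger the backdoor \emph{and} then submit a fresh guess $x^{*}$, so the behavior is indistinguishable from $\mathbf{B}$ up to the final guess and security is inherited from Theorem~\ref{thm:moninaortheorem}. However, in \texttt{LFAGame} the adversary simply issues all $t$ queries on a single value $x$, triggers the backdoor, and then outputs any fresh $x^{*} \neq x$; since $\mathbf{B}'_2$ now answers $1$ on everything, the adversary wins with probability $1 > \delta$ for any $\delta \in (0,1)$. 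Hence $\mathbf{B}'$ is not $(n,t,\delta)$-secure in \texttt{LFAGame}.
    \item \textbf{\texttt{LABGame} $\notimplies$ \texttt{LPAGame}.} Here I repeat the partial-adaptivity argument: I must be careful with the query budget, since the backdoor fires at exactly $t$ identical queries but a partially adaptive adversary controls only $\alpha t$ of them. The cleanest route is to set the backdoor threshold at $\alpha t$ identical queries, so that a \texttt{LPAGame} adversary controlling $\alpha t$ queries can trigger it by fixing all of them to the same value $x$, and then win as above; meanwhile an \texttt{LABGame} adversary restricted to $\alpha t$ total queries cannot both trigger the backdoor and output a fresh guess, preserving $(n,\alpha t,\varepsilon)$-security. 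This matches the parameterization in the statement, where the \texttt{LABGame} side is stated with query budget $\alpha t$.
\end{enumerate}

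The main obstacle, and the step requiring the most care, is the precise accounting of the query budget in the partial-adaptivity separation: the backdoor threshold must be calibrated so that the adversary who \emph{chooses} $\alpha t$ queries can fire it, while the always-bet adversary who is \emph{limited} to $\alpha t$ queries cannot both fire it and still have a query left to spend verifying or reaching a fresh guess. I would resolve this by setting the threshold to $\alpha t$ and noting that triggering requires spending the entire budget on repetitions, after which the always-bet adversary has exhausted its queries and any output $x^{*}$ is evaluated against the \emph{unmodified} $\mathbf{B}_2$ semantics it has observed—so its success probability is still bounded by $\varepsilon + \negl$. A secondary subtlety is confirming that attaching a learning model does not interfere: since the learning model is a pre-filter and the backdoor lives in $\mathbf{B}'_2$ (which is queried on the learning model's negatives), I will choose the set $S$ and the queried value $x$ so that $x$ routes to the backup CBF, i.e.\ $\mathscr{L}(x) < \tau$, ensuring the backdoor path is actually exercised; a concrete instantiation (e.g.\ a constant learning model outputting $0$, or any $x$ that the model classifies as negative) makes this routing immediate.
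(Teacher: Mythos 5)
There is a genuine gap, and it lies in the core of your separation. The backdoor-counter construction (borrowed from the second half of the proof of Theorem~\ref{thm:lfaimplieslpa}) distinguishes games along the wrong axis. \texttt{LABGame} and \texttt{LFAGame} give the adversary \emph{identical} control over the query workload: in both games $\adv_2$ chooses all $t$ queries to $\mathbf{B}_2$ adaptively; the only difference is that \texttt{LABGame} removes oracle access to $\mathbf{B}_1(\cdot)$, $\mathbf{B}_3(\cdot)$, $\mathbf{B}_4(\cdot)$. Consequently, a backdoor that fires after $t$ (or $\alpha t$) identical queries is equally reachable in both games, and your claimed \texttt{LABGame}-security of $\mathbf{B}'$ fails: the always-bet adversary sends all $t$ queries on the same value $x$, the counter trips, and then its fresh guess $x^{*}\neq x$ is evaluated by the \emph{challenger} in step 4 of the game --- this evaluation is not charged against the $t$-query budget, and it goes through the now always-one query algorithm, so the adversary wins \texttt{LABGame} with probability $1$. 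Note that this is exactly the mechanism by which the \texttt{LFAGame} adversary wins in the paper's proof of Theorem~\ref{thm:lfaimplieslpa}; your alternative reading --- that the final guess is ``evaluated against the unmodified $\mathbf{B}_2$ semantics'' --- cannot be right, since under that reading the \texttt{LFAGame} attack in Theorem~\ref{thm:lfaimplieslpa} would fail too. Either the trigger affects the challenger's final check (then both games are broken, no separation) or it does not (then neither is broken, again no separation). The same objection kills your second separation with threshold $\alpha t$: an \texttt{LABGame} adversary with budget $\alpha t$ trips the backdoor and wins just as the \texttt{LPAGame} adversary does.

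The reason the counter trick worked in Theorem~\ref{thm:lfaimplieslpa} is that \texttt{LFAGame} and \texttt{LPAGame} genuinely differ in \emph{who chooses the queries} (all adversarial vs.\ only an $\alpha$ fraction, the rest uniform over $\domain$), so a trigger requiring $t$ coordinated repetitions separates them. To separate \texttt{LABGame} from \texttt{LFAGame}/\texttt{LPAGame} you must instead exploit the oracle-access difference, which is what the paper does: start from $\mathbf{B}'$ (a trivial-learning-model LBF whose backup CBF is an \securecbf{}, secure under \texttt{LABGame}), and define $\mathbf{B}''$ identical to $\mathbf{B}'$ except that its internal representation is $M' = (M, \sk)$, i.e., the \prp{} secret key is stored inside the representation. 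This change is invisible in \texttt{LABGame} (the adversary never sees the internal representation there), so $(n,t,\varepsilon+\negl)$-security is preserved; but in \texttt{LFAGame} and \texttt{LPAGame} the oracle access to $\mathbf{B}_1(\cdot)$ exposes $M'$ and hence $\sk$, after which the adversary can evaluate the filter offline and produce unqueried false positives, breaking $(n,t,\delta)$-security for any $\delta \in (0,1)$. Your proposal is salvageable only by replacing the counter backdoor with this kind of key-leakage (or any internals-dependent) gadget.
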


\begin{proof}
    Let $\mathbf{B}$ be an \securecbf{} that is $(n, t, \varepsilon)$-secure under $\texttt{ABGame}$. Consider a construction $\mathbf{B}^{\prime}$ that is $(n, t, \varepsilon)$-secure under $\texttt{LABGame}$. $\mathbf{B}^{\prime}$ replaces $\mathbf{B}$ with standard (or conventional) LBF that uses $\mathbf{B}$ as its backup CBF and has a trivial learning model that responds negative (i.e., $x \notin S$) to all queries $x \in \domain$. Although contrived, $\mathbf{B}^{\prime}$ is a correct LBF by Definition~\ref{def:learnedbloomfilter}. Consider a $\ppt$ adversary $\adv$. Since the learned representation of $\mathbf{B}^{\prime}$ contains no information on the input set $S$ and routes all queries to its backup CBF, which is a \securecbf{}, $\adv$ in $\texttt{LABGame}$ has no advantage over $\adv$ in $\texttt{ABGame}$. Hence, $\mathbf{B}^{\prime}$ is also $(n, t, \varepsilon)$-secure in $\texttt{LABGame}$. 
    
    Now consider a second construction $\mathbf{B}^{\prime\prime}$ that is identical to $\mathbf{B}^{\prime}$ but with one modification: instead of storing the internal representation $M$ like $\mathbf{B}^{\prime}$, $\mathbf{B}^{\prime\prime}$ stores the internal representation $M^{\prime} = (M, \sk)$ where $\sk$ is the secret key of $\prp$. $\mathbf{B}^{\prime\prime}$ is still $(n, t, \delta = \varepsilon + \negl)$-secure under \texttt{LABGame} as $\adv$ does not have access to the internal representation of the Bloom filter. However, $\mathbf{B}^{\prime\prime}$ is not $(n, t, \delta = \varepsilon + \negl)$-secure under \texttt{LFAGame} or \texttt{LPAGame} where $\adv$ has oracle access to $M^{\prime}$ and can obtain the secret key $\sk$.\qed
\end{proof}



Before we show our next result, we recall a theorem by Naor and Oved~\cite{naor2022bet}, in which they proved that security in \texttt{ABGame} does not imply security in \texttt{BPGame}. Their proof uses a counterexample construction that usually behaves like an $(n, t, \varepsilon)$-secure CBF in \texttt{ABGame} but has a small probability of reaching an always-one state, i.e., the query algorithm always outputs 1 for any query. 

\begin{theorem}[Naor-Oved Theorem]\label{thm:naor-oved}
Let $0 < \varepsilon < 1$ and $n \in \mathbb{N}$, then for any $0 < \delta < 1$, assuming the existence of one-way functions, there exists a non-trivial Bloom filter $\mathbf{B}$ that is $(n,\varepsilon)$-strongly resilient under \texttt{ABGame} and is not $(n,\delta)$-strongly resilient under \texttt{BPGame}.
\end{theorem}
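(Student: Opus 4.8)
The plan is to realize the counterexample sketched just above the statement: a CBF that acts as an honest \securecbf{} on most of its construction coins but, with a small fixed probability, collapses to an \emph{always-one} filter. First I would invoke Theorem~\ref{thm:moninaortheorem} (which, since one-way functions yield a PRP, furnishes an \securecbf{}) to obtain a base filter that is $(n,\varepsilon')$-strongly resilient under \texttt{ABGame} for a target FPR $\varepsilon'$ to be fixed later. I would then define $\mathbf{B}_1(S)$ to flip a biased coin: with probability $\rho$ it outputs a special marker $M=\bot$, and with probability $1-\rho$ it outputs the genuine \securecbf{} representation of $S$; the query algorithm $\mathbf{B}_2$ returns $1$ on every input when $M=\bot$ and otherwise runs the \securecbf{} query. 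Choosing any constant $\rho\in(0,\varepsilon)$ and setting $\varepsilon'=(\varepsilon-\rho)/(1-\rho)\in(0,1)$ makes the overall false-positive probability on a fresh non-element equal to $\rho+(1-\rho)\varepsilon'=\varepsilon$. Completeness holds in both branches, and the footprint matches \securecbf{} up to one bit, so $\mathbf{B}$ is non-trivial.

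Next I would prove $(n,\varepsilon)$-strong resilience under \texttt{ABGame}. The conceptual crux is that an always-bet adversary is \emph{forced} to commit a guess and hence cannot cash in the rare broken event selectively. Conditioned on the $\bot$ branch (probability $\rho$) any fresh guess wins, so the win probability is $1$; conditioned on the honest branch (probability $1-\rho$) the adversary's entire view is distributed exactly as in a genuine \securecbf{} \texttt{ABGame}, so its win probability is at most $\varepsilon'+\negl$ by Theorem~\ref{thm:moninaortheorem}. Averaging over the coin gives total win probability $\rho+(1-\rho)(\varepsilon'+\negl)=\varepsilon+\negl$ for every polynomial budget $t$. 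The sanity check is that this equals the FPR exactly: the broken state buys the always-bet adversary nothing beyond blind guessing.

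Then I would exhibit a \texttt{BPGame} adversary that uses the \emph{pass} option to isolate the broken event. It spends its budget probing $t$ distinct fresh non-elements; if all of them return $1$ it bets on a further fresh point, otherwise it passes. On the $\bot$ branch every probe returns $1$, so it bets and wins, collecting profit $\delta^{-1}$; on the honest branch the probability that all $t$ probes return $1$ is at most roughly $(\varepsilon')^{t}$, so for super-logarithmic $t$ the adversary passes outside a negligible event and its honest-branch contribution to the profit is at least $-\negl$. Hence $\expect{C_{\adv}}\ge\rho\,\delta^{-1}-\negl$, a positive non-negligible quantity for \emph{every} $\delta\in(0,1)$, so $\mathbf{B}$ is not $(n,\delta)$-strongly resilient under \texttt{BPGame}.

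The hard part will be the detection analysis in the \texttt{BPGame} argument: bounding the honest-branch false-detection probability requires that the $t$ probe answers behave essentially independently, which is precisely where the PRP-pseudorandomness of the \securecbf{} images must be used to replace the raw per-query FPR by a clean $(\varepsilon')^{t}$ tail, and where one must verify that a super-logarithmic yet polynomial $t$ pushes this tail below any $\negl$ threshold while leaving the profit floor $\rho\,\delta^{-1}$ bounded away from zero. A secondary subtlety is the bookkeeping that fixes $\varepsilon'$ from $\rho$ and $\varepsilon$ so that the single construction simultaneously meets the \texttt{ABGame} bound at level $\varepsilon$ and fails \texttt{BPGame} at every level $\delta$.
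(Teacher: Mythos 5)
The paper does not actually prove this theorem: it is recalled verbatim from Naor--Oved~\cite{naor2022bet}, accompanied only by a one-sentence sketch of their counterexample (a filter that behaves like an $(n,t,\varepsilon)$-secure CBF under \texttt{ABGame} but with small probability enters an always-one state). Your proposal correctly reconstructs exactly that argument---the $\rho$-biased always-one branch, the rebalanced base FPR $\varepsilon' = (\varepsilon-\rho)/(1-\rho)$, \texttt{ABGame} security by averaging over the two branches, and the \texttt{BPGame} attack that probes $t$ fresh non-elements and bets only when all answer $1$---so it is essentially the same approach as the cited proof, with the detection-tail analysis you flag being the only detail left to the underlying CBF's structure.
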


\begin{theorem}
For $\varepsilon \in (0,1)$, $n,t \in \NN$, and $\delta \in (0, 1)$, we have:
\[(n, t, \varepsilon)\text{-security in \texttt{LABGame}}\notimplies (n, t, \delta)\text{-security in \texttt{LBPGame}}\]
\[(n, t, \varepsilon)\text{-security in \texttt{LBPGame}}\implies (n, t, \varepsilon)\text{-security in \texttt{LFAGame} and \texttt{LABGame}}\]
\end{theorem}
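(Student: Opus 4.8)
The statement has two parts, which I would prove in sequence. For the first (non-)implication, $(n,t,\varepsilon)\text{-security in }\texttt{LABGame}\notimplies(n,t,\delta)\text{-security in }\texttt{LBPGame}$, the plan is to lift the Naor--Oved counterexample (Theorem~\ref{thm:naor-oved}) through the same ``trivial learning model'' wrapper used in the proof of Theorem~\ref{thm:labnotimplylpalfa}. First I would invoke Theorem~\ref{thm:naor-oved} to obtain a CBF $\mathbf{B}$ that is $(n,\varepsilon)$-strongly resilient under \texttt{ABGame} but not $(n,\delta)$-strongly resilient under \texttt{BPGame}. Then I would build an LBF $\mathbf{B}^{\prime}$ that uses $\mathbf{B}$ as its backup CBF together with the trivial learning model that answers ``negative'' on every input, so that every query is routed to the backup. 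Because the learned component carries no information about $S$, an adversary in \texttt{LABGame} (resp.\ \texttt{LBPGame}) against $\mathbf{B}^{\prime}$ has exactly the same view and winning/profit profile as an adversary in \texttt{ABGame} (resp.\ \texttt{BPGame}) against $\mathbf{B}$. Hence $\mathbf{B}^{\prime}$ inherits $(n,t,\varepsilon)$-security in \texttt{LABGame} and the failure of security in \texttt{LBPGame}. The only care needed is to confirm that $\mathbf{B}^{\prime}$ is a valid LBF under Definition~\ref{def:learnedbloomfilter}; the trivial model vacuously satisfies P-Soundness and, since everything is forwarded to the backup CBF, completeness and filter soundness reduce to those of $\mathbf{B}$.

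For the second part, $(n,t,\varepsilon)\text{-security in }\texttt{LBPGame}\implies(n,t,\varepsilon)\text{-security in \texttt{LFAGame} and \texttt{LABGame}}$, the plan is a reduction turning an \texttt{LFAGame} adversary (which always outputs a guess $x^{*}$) into an \texttt{LBPGame} adversary that always \emph{bets}, i.e.\ sets $b=1$. Given an \texttt{LFAGame} adversary $\adv$ that wins with probability $q$, construct $\adv^{\prime}$ for \texttt{LBPGame} that runs $\adv$ verbatim, answers its oracle queries with the identical oracle set $\oracle=\{\mathbf{B}_1(\cdot),\mathbf{B}_2(M,\mathscr{L},\tau,\cdot),\mathbf{B}_3(\cdot),\mathbf{B}_4(\cdot)\}$ available in \texttt{LBPGame}, and outputs $(1,x^{*})$. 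The profit of $\adv^{\prime}$ is then $\varepsilon^{-1}$ with probability $q$ and $-(1-\varepsilon)^{-1}$ with probability $1-q$, giving $\expect{C_{\adv^{\prime}}}=q\varepsilon^{-1}-(1-q)(1-\varepsilon)^{-1}$. Since $\mathbf{B}$ is $(n,t,\varepsilon)$-secure in \texttt{LBPGame}, this expectation is at most some $\negl$, and solving the inequality forces $q\le\varepsilon+\negl[\cdot]$, establishing the \texttt{LFAGame} bound. The \texttt{LABGame} implication then follows immediately from Theorem~\ref{thm:lfa_implies_lab}, or equivalently by the same reduction observing that an \texttt{LABGame} adversary is an \texttt{LFAGame} adversary with a restricted oracle set.

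The main obstacle I anticipate is bookkeeping the profit-threshold arithmetic cleanly: the \texttt{LBPGame} security guarantee is phrased as an expectation bound (Definition~\ref{def:securebetorpass}), whereas \texttt{LFAGame} security is a probability bound (Definition~\ref{def:securecbf}), so the reduction must correctly translate ``expected profit is negligible'' into ``winning probability $\le\varepsilon+\negl$.'' The calibration of the profit values $\varepsilon^{-1}$ and $-(1-\varepsilon)^{-1}$ is precisely what makes a random guesser have expected profit zero, so I would verify that the inequality $q\varepsilon^{-1}-(1-q)(1-\varepsilon)^{-1}\le\negl[\cdot]$ rearranges to $q\le\varepsilon+\negl[\cdot]$ (the factor $(1-\varepsilon)^{-1}$ absorbing into the negligible term). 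A secondary subtlety is ensuring the reduction respects the ``$x^{*}$ previously unqueried'' condition identically in both games; since the two games share the same winning predicate $x^{*}\notin S\cup\{x_1,\dots,x_t\}$ and $\mathbf{B}_2(M,\mathscr{L},\tau,x^{*})=1$, this transfers without modification, so no additional argument is required there.
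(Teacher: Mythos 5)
Your proposal is correct and follows essentially the same route as the paper: the first part lifts the Naor--Oved counterexample through the trivial always-negative learning model (exactly the wrapper the paper reuses from its earlier \texttt{LABGame} separation), and the second part performs the same profit-to-probability arithmetic, with your explicit reduction (an \texttt{LFAGame} adversary viewed as an always-betting \texttt{LBPGame} adversary) being just a slightly more formal phrasing of the paper's argument. The rearrangement $q \leq \varepsilon + \varepsilon(1-\varepsilon)\negl \leq \varepsilon + \negl$ that you flag as the main subtlety is precisely the calculation the paper carries out.
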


\begin{proof}
For the first relationship, let $\mathbf{B}$ be the Naor-Oved CBF construction that is $(n, t, \varepsilon)$-secure under $\texttt{ABGame}$ but not under \texttt{BPGame}. We demonstrated in Theorem~\ref{thm:labnotimplylpalfa} how to create a construction $\mathbf{B}^{\prime}$ that is $(n, t, \varepsilon)$-secure under \texttt{LABGame} by having a trivial learning model that routes all queries to the backup CBF. This backup CBF is $\mathbf{B}$, which is secure under the \texttt{ABGame}, making the overall construction secure under the \texttt{LABGame}. By Theorem~\ref{thm:naor-oved}, we know that $\textbf{B}$ is not $(n, t, \delta)$-secure under \texttt{BPGame} for any $\delta \in (0, 1)$. Since the learning model in $\mathbf{B}^{\prime}$ is trivial and adds no adversarial resilience, it follows that $\mathbf{B}^{\prime}$ is not $(n, t, \delta)$-secure under \texttt{LBPGame} for any $\delta \in (0, 1)$. 

For the second relationship, Naor and Oved~\cite{naor2022bet} also prove that for CBFs $(n, t, \varepsilon)$-security under \texttt{BPGame} implies $(n, t, \varepsilon)$-security under \texttt{ABGame}. We use a proof technique similar to theirs. Let $\mathbf{B}$ be a construction that is $(n, t, \varepsilon)$-secure under $\texttt{LBPGame}$. Fix an \texttt{LBPGame} adversary $\adv$ that outputs a guess $x^{*}$. Let $\text{FP}$ denote the event that $x^{*}$ is a false positive. For construction $\mathbf{B}$, the expected profit of $\adv$ is:
\begin{align*}
\expect{C_A} &= \frac{1}{\varepsilon} \Pr[\text{FP}] - \frac{1}{1 - \varepsilon} \Pr[\lnot \text{FP}] = \frac{1}{\varepsilon} \Pr[\text{FP}] - \frac{1}{1 - \varepsilon} (1 - \Pr[\text{FP}])\\ 
&= \left( \frac{1}{\varepsilon} + \frac{1}{1 - \varepsilon} \right) \Pr[\text{FP}] - \frac{1}{1 - \varepsilon} = \frac{1}{\varepsilon (1 - \varepsilon)} \Pr[\text{FP}] - \frac{1}{1 - \varepsilon}
\end{align*}

Using $\expect{C_A} \leq \negl[\lambda]$, we obtain 
\[\frac{1}{\varepsilon (1 - \varepsilon)} \Pr[\text{FP}] - \frac{1}{1 - \varepsilon} \leq \negl[\lambda]\] 

Since $\varepsilon \in (0, 1)$, we have $\varepsilon (1 - \varepsilon) \in (0, 1)$, and thus 
\[\Pr[\text{FP}] \leq \varepsilon + \varepsilon (1 - \varepsilon) \negl[\lambda] \leq \varepsilon + \negl[\lambda]\]. 

We have shown that the probability of $x^{*}$ being a false positive is at most negligibly greater than $\varepsilon$. This is the condition needed for $\mathbf{B}$ to be $(n, t, \varepsilon)$-secure under \texttt{LFAGame} and \texttt{LABGame}.\qed
\end{proof}

Finally, we show that security in \texttt{LFAGame} does not imply security in \texttt{LBPGame}. As mentioned before, Naor and Oved proved Theorem~\ref{thm:naor-oved} using a CBF counterexample that has a small probability of reaching an always-one state. We extend this idea to LBFs and prove that security in \texttt{LFAGame} does not imply security in \texttt{LBPGame}. Our proof is simpler than Naor and Oved's proof for CBFs because we give the adversary an oracle access to the Bloom filter's internal representation.

\begin{theorem}
For $\varepsilon \in (0,1)$, $n,t \in \NN$, and $\delta \in (0, 1)$, we have:
\[(n, t, \varepsilon)\text{-security in \texttt{LFAGame}}\notimplies (n, t, \delta)\text{-security in \texttt{LBPGame}}\]
\end{theorem}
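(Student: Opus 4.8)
The plan is to exhibit a single LBF $\mathbf{B}$ that is $(n, t, \varepsilon)$-secure under \texttt{LFAGame} yet fails to be $(n, t, \delta)$-secure under \texttt{LBPGame}. Reusing the template from Theorem~\ref{thm:labnotimplylpalfa} and the Naor--Oved Theorem (Theorem~\ref{thm:naor-oved}), I would give $\mathbf{B}$ a \emph{trivial learning model} (via $\mathbf{B}_3,\mathbf{B}_4$) that declares every input negative and hence routes all queries to a backup CBF, and take that backup to be a \securecbf{} (Naor--Yogev) filter with false positive rate $\varepsilon_0$. The one new ingredient is an \emph{always-one} trapdoor in the construction algorithm: $\mathbf{B}_1$ flips a biased coin and, with probability $p$, records a flag $f=1$ in the representation $M$; whenever $f=1$ the query algorithm returns $1$ on every input, and whenever $f=0$ it behaves exactly like the \securecbf{} backup. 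I would fix $p$ and $\varepsilon_0$ to be small constants with $p+\varepsilon_0<\varepsilon$ and first verify that $\mathbf{B}$ is a legitimate LBF per Definition~\ref{def:learnedbloomfilter}: completeness holds since every $x\in S$ is still caught by the backup (and trivially when $f=1$), while soundness holds because for $x\notin S$ we have $\Pr[\mathbf{B}_2(M,\mathscr{L},\tau,x)=1]\le p+(1-p)\varepsilon_0\le\varepsilon$ over the coins of $\mathbf{B}_1$.

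For the positive direction, the key observation is that the extra oracles handed to $\adv$ in \texttt{LFAGame} are useless here: $\mathbf{B}_3$ and $\mathbf{B}_4$ carry no information about $S$, and each call to the $\mathbf{B}_1(\cdot)$ oracle produces a fresh, independently keyed \securecbf{} that (by $\prp$ security) reveals nothing about the challenge key $\sk$. I would therefore argue by conditioning on $f$ that the winning probability is at most $p\cdot 1+(1-p)(\varepsilon_0+\negl)$, invoking the Naor--Yogev Theorem (Theorem~\ref{thm:moninaortheorem}) on the $f=0$ branch, since in that branch the transcript of an \texttt{LFAGame} adversary is exactly that of an \texttt{ABGame} adversary against the backup. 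By the choice of $p,\varepsilon_0$ this is at most $\varepsilon$, giving $(n,t,\varepsilon)$-security in \texttt{LFAGame}.

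For the negative direction I would have $\adv$ detect the always-one state and exploit the asymmetric profit function. Concretely, $\adv$ queries one (or a few) fresh random elements of $\domain$, which lie outside $S$ except with negligible probability; if the probe returns $1$ it bets ($b=1$) on a new fresh element $x^{*}$, and otherwise it passes ($b=0$). When $f=1$ (probability $p$) the probe returns $1$, the bet is placed, and $x^{*}$ is a genuine fresh false positive, yielding profit $\delta^{-1}$; when $f=0$ the probe returns $1$ only with probability at most $\varepsilon_0$, contributing at worst $-\varepsilon_0(1-\delta)^{-1}$. Hence $\expect{C_{\adv}}\ge p\,\delta^{-1}-\varepsilon_0(1-\delta)^{-1}$, which is a positive constant once $\varepsilon_0$ is chosen small enough, and therefore non-negligible. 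This contradicts $(n,t,\delta)$-security under \texttt{LBPGame} (Definition~\ref{def:securebetorpass}), completing the separation.

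I expect the main obstacle to be the simultaneous calibration of $p$ and $\varepsilon_0$: the always-one event must be frequent enough that its detection earns $\adv$ non-negligible expected profit in \texttt{LBPGame}, yet rare enough, together with a small backup FPR, that the total false-positive probability stays below $\varepsilon$ in \texttt{LFAGame}. A secondary subtlety is confirming that the always-one state is reliably detectable with a constant number of probes, i.e.\ that the $f=0$ false-detection term stays strictly below the constant gain $p\,\delta^{-1}$, and that exposing the filter's internal behavior to $\adv$ through $\mathbf{B}_2$ (or, as the surrounding discussion emphasizes, through the adversary's richer access to the internal representation in the LBF games) does not itself leak enough to damage the backup's \texttt{ABGame} security used in the positive direction.
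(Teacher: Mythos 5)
Your proposal is correct, and it shares the paper's core counterexample skeleton: take an \texttt{LFAGame}-secure base LBF and graft on an ``always-one'' state that the construction algorithm enters with constant probability $p$, so that an \texttt{LBPGame} adversary can bet exactly when that state occurs and pass otherwise, earning non-negligible expected profit, while an \texttt{LFAGame} adversary (who must always bet) gains only $p$ extra winning probability. The differences are worth noting. First, the paper takes the base LBF as a black box from Theorem~\ref{thm:downtownbodegafilter} with FPR $\varepsilon^{\prime} < \varepsilon$ and calibrates $p = (\varepsilon - \varepsilon^{\prime})/(1-\varepsilon^{\prime})$ exactly, whereas you rebuild the base explicitly (trivial learning model routing everything to a \securecbf{} backup with FPR $\varepsilon_{0}$) and only need the looser condition $p + \varepsilon_{0} < \varepsilon$; both work, yours just re-proves a piece of Theorems~\ref{thm:fullyadaptivecbf} and~\ref{thm:labnotimplylpalfa} along the way. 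Second, and more substantively, the paper's adversary detects the always-one state by reading the flag bit directly from the internal representation $M^{\prime}$ --- the paper explicitly says its proof is simpler than Naor and Oved's for precisely this reason --- whereas your adversary detects it \emph{behaviorally}, by probing with fresh random queries and betting only when a probe returns $1$. This is closer to the original Naor--Oved technique and costs you the calibration $\varepsilon_{0}(1-\delta)^{-1} < p\,\delta^{-1}$ to keep the false-detection penalty below the gain (which is achievable since the construction may depend on $\varepsilon$ and $\delta$). What your route buys is robustness: your adversary uses only the query oracle $\mathbf{B}_2$, so the separation survives even under the literal reading of \texttt{LBPGame}, in which oracle access to $\mathbf{B}_1(\cdot)$ means running the construction algorithm on sets of one's choosing rather than inspecting the challenge representation --- the stronger representation-access assumption is exactly what the paper's negative direction leans on.
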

\begin{proof}
    Fix an $\varepsilon^{\prime} \in (0, \varepsilon)$ and let $\mathbf{B}$ be an LBF construction that is $(n, t, \varepsilon^{\prime})$-secure in \texttt{LFAGame} (we show the existence of such a construction in Theorem~\ref{thm:downtownbodegafilter}). Let $\mathbf{B}^{\prime}$ be a modified construction that behaves as follows. The construction algorithm $\mathbf{B}^{\prime}_{1}$ flips a bit $b$ with probability $p$ of being $1$. If $b = 1$, $\mathbf{B}^{\prime}$ always answers $1$ to any query. Otherwise, $\mathbf{B}^{\prime}$ behaves identically to $\mathbf{B}$. Let $M$ be the internal classical representation of $\mathbf{B}$. $\mathbf{B}^{\prime}$ stores $M^{\prime} = (M, b)$ as its internal classical representation. We first show that $\mathbf{B}^{\prime}$ is $(n, t, \varepsilon)$-secure in \texttt{LFAGame}. An adversary $\adv$ in \texttt{LFAGame} is always required to output a guess $x^{*}$. Let $\text{FP}$ denote the event that $x^{*}$ is a false positive. Note that $\Pr[\text{FP} \mid b = 1] = 1$ while $\Pr[\text{FP} \mid b = 0]$ is same as the probability of $\adv$ winning \texttt{LFAGame} with construction $\mathbf{B}$, i.e., $\Pr[\text{FP} \mid b = 0] \leq \varepsilon^{\prime} + \negl[\lambda]$. Based on that, and by choosing $p = \frac{\varepsilon - \varepsilon^{\prime}}{1 - \varepsilon^{\prime}}$, we have: 
    \begin{align*}
    \Pr[\text{FP}] &= \Pr[\text{FP} \mid b = 1] \Pr[b = 1] + \Pr[\text{FP} \mid b = 0] \Pr[b = 0]\\ 
    &\leq p + (\varepsilon^{\prime} + \negl[\lambda]) (1 - p)\\
    &\leq \varepsilon^{\prime} + p (1 - \varepsilon^{\prime}) + \negl[\lambda]\\
    &\leq \varepsilon^{\prime} + \frac{\varepsilon - \varepsilon^{\prime}}{1 - \varepsilon^{\prime}} (1 - \varepsilon^{\prime}) + \negl[\lambda]\\
    &\leq \varepsilon + \negl[\lambda]
    \end{align*}

Thus, $\mathbf{B}^{\prime}$ is $(n, t, \varepsilon)$-secure in \texttt{LFAGame}. All that is left to show is that $\mathbf{B}^{\prime}$ is not $(n, t, \varepsilon)$-secure in \texttt{LBPGame}. Consider an adversary $\adv^{\prime}$ in \texttt{LBPGame}. Recall that $\adv^{\prime}$ has oracle access to $\mathbf{B}^{\prime}$'s internal classical representation $M^{\prime}$. Once $\mathbf{B}^{\prime}$ is constructed, $\adv^{\prime}$ can read $b$ in $M^{\prime}$ to check whether $\mathbf{B}^{\prime}$ is in the always-one state. $\adv^{\prime}$  chooses to bet only if $b = 1$. Thus, we have:

\begin{align*}
    \expect{C_{\adv^{\prime}}} &= \expect{C_{\adv^{\prime}} \mid b = 1} \Pr[b = 1] + \underbrace{\expect{C_{\adv^{\prime}} \mid b = 0} \Pr[b = 0]}_{0 \text{ since } \adv^{\prime} \text{ won't bet}}\\
    &= \expect{C_{\adv^{\prime}} \mid b = 1} \Pr[b = 1] = {\varepsilon}^{-1} p \geq p
\end{align*}
Therefore, $\expect{C_{\adv^{\prime}}}$ is not negligible, violating the $(n, t, \varepsilon)$-security in \texttt{LBPGame}.\qed
\end{proof}
Figure~\ref{fig:securityimplications} summarizes the relationships that we proved in this section.

\begin{figure}[t!]
\centering
\begin{tikzpicture}[
    node distance=1.25cm and 3.5cm,
    >=Stealth,
    text node/.style={draw, align=center, font=\bfseries, minimum width=1.8cm, minimum height=0.8cm},
    arrow label/.style={font=\small},
    implies/.style={->, double distance=1.5pt, shorten <=2pt, shorten >=2pt},
    notimplies/.style={->, double distance=1.5pt, shorten <=2pt, shorten >=2pt},
    cross/.style={draw=red, thick, line width=1pt}
  ]

  \node (fullyadaptive)     [text node]                              {\texttt{LFAGame}};
  \node (betorpass)         [text node, right=of fullyadaptive]      {\texttt{LBPGame}};
  \node (partiallyadaptive) [text node, below=of fullyadaptive]      {\texttt{LPAGame}};
  \node (alwaysbet)         [text node, below=of betorpass]          {\texttt{LABGame}};

  \draw [implies] ([yshift=-10pt]fullyadaptive.east) -- node[pos=0.3, above, sloped, arrow label] {} ([xshift=-10pt]alwaysbet.north);

  \draw [implies] ([xshift=-8pt]fullyadaptive.south) -- node[pos=0.3, left, arrow label] {} ([xshift=-8pt]partiallyadaptive.north);

  \draw [notimplies] ([xshift=8pt]partiallyadaptive.north) -- node[pos=0.3, right, arrow label] {} node[pos=0.5, sloped, font=\huge\bfseries\color{red}] {$\mathbf{\times}$} ([xshift=8pt]fullyadaptive.south);

  \draw [implies] ([xshift=10pt]betorpass.south) -- node[pos=0.3, right, arrow label] {} ([xshift=10pt]alwaysbet.north);

  \draw [implies] ([yshift=5pt]betorpass.west) -- node[pos=0.3, below, arrow label] {} ([yshift=5pt]fullyadaptive.east);

  \draw [notimplies] ([yshift=10pt]alwaysbet.west) -- node[pos=0.3, sloped, below, xshift=-5pt, font=\small] {} node[pos=0.3, sloped, font=\huge\bfseries\color{red}] {$\mathbf{\times}$} ([xshift=10pt]fullyadaptive.south);

  \draw [notimplies] ([yshift=5pt]partiallyadaptive.east) -- node[pos=0.3, above, arrow label] {} node[pos=0.3, sloped, font=\huge\bfseries\color{red}] {$\mathbf{\times}$} ([yshift=5pt]alwaysbet.west);

  \draw [notimplies] ([yshift=-5pt]alwaysbet.west) -- node[pos=0.3, below, arrow label] {} node[pos=0.3, sloped, font=\huge\bfseries\color{red}] {$\mathbf{\times}$} ([yshift=-5pt]partiallyadaptive.east);

  \draw [notimplies] ([yshift=-5pt]fullyadaptive.east) -- node[pos=0.3, below, arrow label] {} node[pos=0.5, sloped, font=\huge\bfseries\color{red}] {$\mathbf{\times}$} ([yshift=-5pt]betorpass.west);

  \draw [notimplies] ([xshift=-6pt]alwaysbet.north) -- node[pos=0.3, left, arrow label] {} node[pos=0.5, sloped, font=\huge\bfseries\color{red}] {$\mathbf{\times}$} ([xshift=-6pt]betorpass.south);
  
\end{tikzpicture}
\caption{Security notion implications.}
\label{fig:securityimplications}
\end{figure}
\section{Our Constructions}\label{sec:constructions}
We propose two adaptively-secure LBF constructions; \bloomname{} and \betpassbloomname{}. In both of these constructions, we employ a Partitioned LBF~\cite{plbf} with a partition of cardinality $2$. \bloomname{} combines partitioning with PRPs, while \betpassbloomname{} combines partitioning with Cuckoo hashing and pseudorandom functions (PRFs).\medskip

\noindent\textbf{Construction 1: \bloomname{}.} As shown in Figure~\ref{fig:downtownbodega}, in \bloomname{} $\mathbf{B}$, the learning model $\mathscr{L}$ and threshold $\tau$ are used to partition $S \subseteq \domain$ into two sets: $S_{1} = \{x \in S \mid \mathscr{L}(x) \geq \tau  \}$ and $S_{2} = S \setminus S_{1} = \{x \in S \mid \mathscr{L}(x) < \tau  \}$. We then use  $\prp_{\sk_{A}}$ and $\prp_{\sk_{B}}$ as bijections on sets $S_{1}$ and $S_{2}$, respectively, to form sets $S_{A} = \{ \prp_{\sk_{A}}(x) : x \in S_{1} \}$ and $S_{B} = \{ \prp_{\sk_{B}}(x) : x \in S_{2}\}$. \bloomname{} has two backup CBFs, $\mathbf{B}_{A}$ that stores $S_{A}$ and $\mathbf{B}_{B}$ that stores $S_{B}$. To query \bloomname{} over an element $x \in \domain$, we first evaluate the learning model over $x$. If $\mathscr{L}(x) \geq \tau$, then we compute $y = \prp_{\sk_{A}}(x)$ and pass $y$ to the query algorithm of $\mathbf{B}_{A}$ to obtain an answer for the membership query on whether $x \in S$. On the other hand, if $\mathscr{L}(x) < \tau$, then we repeat the same steps but while using $\prp_{\sk_{B}}$ and the query algorithm of $\mathbf{B}_{B}$.

More formally, \bloomname{} is a data structure with six components $\mathbf{P} = (\mathbf{B}_{A}, \mathbf{B}_{B}, \mathbf{B}_{3}, \mathbf{B}_{4}, \prp_{\sk_{A}}, \prp_{\sk_{B}})$. $\mathbf{B}_{A} = (\mathbf{B}_{{A}_{1}}, \mathbf{B}_{{A}_{2}})$ and $\mathbf{B}_{B} = (\mathbf{B}_{{B}_{1}}, \mathbf{B}_{{B}_{2}})$ are backup CBFs (act like \securecbf{}). $\mathbf{B}_{3}$ is a randomized dataset construction algorithm that on input $S$ constructs a training dataset $\mathscr{T}$ for $S$. $\mathbf{B}_{4}$ is a randomized learning model construction algorithm that on input $\mathscr{T}$ outputs a learning model $\mathscr{L}$ and a threshold $\tau \in [0, 1]$. $\prp_{\sk_{A}}$ and $\prp_{\sk_{B}}$ are pseudorandom permutations with secret keys $\sk_{A}$ and $\sk_{B}$. The internal representation of $S$ consists of:
 \begin{enumerate}
     \item ${M}_{A}$, the representation of $S_{A} = \{\prp_{\sk_{A}}(x) : x \in S \mid \mathscr{L}(x) \geq \tau \}$ stored by backup CBF $\mathbf{B}_{A}$.
     \item ${M}_{B}$, the representation of $S_{B} = \{\prp_{\sk_{B}}(x) : x \in S \mid \mathscr{L}(x) < \tau \}$ stored by backup CBF $\mathbf{B}_{B}$.
     \item $(\mathscr{L}, \tau)$, the learning model and the threshold.
 \end{enumerate}

The query algorithm for \bloomname{} is $\mathbf{B}_{2}(M_{A},M_{B},\mathscr{L}, \tau, x) = (\mathscr{L}(x) \geq \tau 
\land \mathbf{B}_{{A}_{2}}({M_{A}}, \prp_{\sk_{A}}(x)) = 1)
\lor (\mathscr{L}(x) < \tau 
\land \mathbf{B}_{{B}_{2}}(M_{B}, \prp_{\sk_{B}}(x)) = 1)$. Similar to prior work~\cite{moni1,naor2022bet,filic1}, we assume that the internal state available to the adversary does \textbf{not} include the PRP secret keys, which are held securely.\medskip

\begin{figure}[t!]
  \centering
  \begin{tikzpicture}[
    node distance=0.5cm,
    lmbox/.style={draw, minimum width=1cm, minimum height=0.5cm, align=center},
    prpbox/.style={draw, minimum width=0.5cm, minimum height=0.5cm, align=center},
    bfbox/.style={draw, minimum width=1.0cm, minimum height=0.5cm, align=center},
    arrow/.style={-{Stealth[length=3mm]}, thick},
    label/.style={font=\small}
]

\node[lmbox] (LM) {Learning Model};
\node[prpbox, right=1.8cm of LM] (PRP1) {PRP};
\node[bfbox, right=0.5cm of PRP1] (BFB) {CBF $\mathbf{B}_{B}$};

\node[prpbox, below=0.8cm of LM] (PRP2) {PRP};

\node[bfbox, below=0.5cm of PRP2] (BFA) {CBF $\mathbf{B}_{A}$};

\draw[arrow] ([xshift=-1.5cm]LM.west) -- node[label, above] {Input} (LM.west);

\draw[arrow] (LM.east) -- node[label, above] {Negatives} (PRP1.west);

\draw[arrow] (PRP1.east) -- (BFB.west);

\draw[arrow] (LM.south) -- node[pos=0.3, label, right] {Positives} (PRP2.north);

\draw[arrow] (PRP2.south) -- (BFA.north);

\draw[arrow] (BFB.east) -- +(1.5cm, 0cm) node[pos=0.5, label, above] {Negatives};

\draw[arrow] (BFB.south) -- +(0, -0.8cm) node[pos=0.3, label, right] {Positives};

\draw[arrow] (BFA.west) -- +(-2.0cm, 0) node[pos=0.4, label, above] {Negatives};

\draw[arrow] (BFA.east) -- +(2.0cm, 0) node[pos=0.4, label, above] {Positives};

\end{tikzpicture}
  \caption{The \bloomname{} construction.}\label{fig:downtownbodega}
\end{figure}

\noindent\textbf{Construction 2: \betpassbloomname{}.} This construction, shown in Figure~\ref{fig:cuckoo-lbf}, combines partitioning with Cuckoo hashing and PRFs. We first review prior constructions that also used Cuckoo hashing, which form the basis for ours. Naor et al.~\cite{naor2022bet,lotan2025adversarially} present a CBF construction, that is provably secure under \texttt{BPGame}, based on a prior Cuckoo hashing-based construction by Naor and Yogev~\cite{moni1}. We denote this construction \text{Naor-Oved-Yogev} Cuckoo filter or simply \betpasscbf{}. \betpasscbf{} is similar to a CBF variant called a Cuckoo filter~\cite{fan2014cuckoo}. For a set $S \subseteq \domain$ of size $n$, \betpasscbf{} $\mathbf{B} = (\mathbf{B}_{1}, \mathbf{B}_{2})$ stores $S$ using two tables \( Z_{1}, Z_{2} \), each with \( n_{c} = \bigO{n} \) cells. Each table has a corresponding hash function, denoted as $h_1, h_2 : \domain \to [n_{c}]$, respectively. There is a fingerprint function, namely, a keyed pseudorandom function $\prf_{\sk} : \domain \times \{1,2\} \to \bin^r$, where $r \in \bigO{\log \tfrac{1}{\varepsilon}}$ for target FPR $\varepsilon$. The \betpasscbf{} works as follows:
\begin{itemize}
    \item Construction algorithm $\mathbf{B}_{1}$: Stores $S$ in a Cuckoo hashing dictionary~\cite{pagh2001cuckoo}, where an element $x$ is stored in either $Z_{1}[h_{1}(x)]$ or $Z_{2}[h_{2}(x)]$. To save space, the PRF output is stored instead of $x$. In particular, if $x$ is to be stored in $Z_{1}$, we store $y_{1} = \prf_{\sk}(x, 1)$. Otherwise, we store $y_{2} = \prf_{\sk}(x, 2)$ in $Z_2$.

    \item Query algorithm $\mathbf{B}_2$: To query \( x \in \domain \), we compare \(\prf_{\sk}(x, 1)\) with $Z_{1}[h_1(x)]$ and \(\prf_{\sk}(x, 1)\) with \(Z_{2}[h_{2}(x)]\). If either match occurs, we return 1.
\end{itemize}

\betpasscbf{} achieves the completeness and soundness properties from Definition~\ref{def:naoryogevbloom}. However, correctness is guaranteed only if the filter parameters are tuned carefully. Boskov et al.~\cite{boskov2020birdwatching} show empirically that false negatives occur even in state-of-the-art Cuckoo filter implementations.

We combine \betpasscbf{} with the partitioning strategy we used for \bloomname{} to obtain \betpassbloomname{}. Formally, \betpassbloomname{} is a data structure with seven components $\mathbf{C} = (\mathscr{Z}, \mathcal{H}, \mathbf{B}_{2}, \mathbf{B}_{3}, \mathbf{B}_{4}, \prf_{\sk_{A}}, \prf_{\sk_{B}})$. $\mathscr{Z} = \{ Z_{1}, \cdots, Z_{4}\}$ is a set of 4 tables, and $\mathcal{H} = \{ h_{1}, \cdots, h_{4} \}$ is a set of 4 hash functions. $\mathbf{B}_{2}, \mathbf{B}_{3}, \mathbf{B}_{4}$ are polynomial time algorithms. $\mathbf{B}_{2}$ is a query algorithm, $\mathbf{B}_{3}$ is a randomized dataset construction algorithm, and $\mathbf{B}_{4}$ is a randomized learning model construction algorithm (with the same description as before). $\prf_{\sk_{A}}$ and $\prf_{\sk_{B}}$ are PRFs with secret keys $\sk_{A}$ and $\sk_{B}$, respectively. The internal representation of a \betpassbloomname{} storing a set $S$ consists of the following:
\begin{enumerate}
     \item ${M}_{A}$, the representation of \(S_{A} = \{x \in S \mid \mathscr{L}(x) \geq \tau \}\) stored by \betpasscbf{} $\textbf{B}_{A}$, including tables $Z_{1}, Z_{2}$, hash functions $h_{1}, h_{2}$, and $\prf_{\sk_{A}}$. 
     \item ${M}_{B}$, the representation of \(S_{B} = \{x \in S \mid \mathscr{L}(x) < \tau \}\) stored by \betpasscbf{} $\mathbf{B}_{B}$, including tables $Z_{3}, Z_{4}$, hash functions $h_{3}, h_{4}$, and $\prf_{\sk_{B}}$. 
     \item $(\mathscr{L}, \tau)$, the learning model and the threshold.
\end{enumerate}

The query algorithm $\mathbf{B}_{2}$ for a \betpassbloomname{} is \( \mathbf{B}_{2}(M_{A},M_{B},\mathscr{L}, \tau, x) = (\mathscr{L}(x) \geq \tau \land \mathbf{B}_{{A}_{2}}({M}_{A}, x) = 1) \lor (\mathscr{L}(x) < \tau \land \mathbf{B}_{{B}_{2}}(M_{B}, x) = 1) \) where $\mathbf{B}_{{A}_{2}}$ and $\mathbf{B}_{{B}_{2}}$ are query algorithms for the \betpasscbf{}s $\mathbf{B}_{A}$ and $\mathbf{B}_{B}$, respectively. As before, we assume that the internal state available to the adversary does \textbf{not} include the PRF secret keys.\medskip

\begin{figure}[t!]
  \centering
\begin{tikzpicture}[
    node distance=1cm,
    lmbox/.style={draw, minimum width=2cm, minimum height=0.5cm, align=center},
    tablebox/.style={draw, minimum width=4.0cm, minimum height=1.0cm, align=center},
    bigtablebox/.style={draw, minimum width=4.5cm, minimum height=2.8cm, align=center},
    cell/.style={draw, minimum width=0.8cm, minimum height=0.4cm},
    arrow/.style={-{Stealth[length=2mm]}, thick},
    dashedarrow/.style={-{Stealth[length=2mm]}, thick, dashed}
]

\node[lmbox] (LM) {Learning Model};

\draw[arrow] ([xshift=-1.2cm]LM.west) -- node[above] {Input} (LM.west);

\node[bigtablebox, right=2.2cm of LM.east, yshift=-0.1cm] (BigT3) {};

\node[tablebox, right=2.5cm of LM.east, yshift=0.6cm] (T3) {};
\node at (T3.north west) [anchor=north west, xshift=0.1cm, yshift=-0.1cm] {$\mathrm{Z}_3$};

\node[cell] at ($(T3.west) + (0.6,-0.2)$) {};
\node[cell] at ($(T3.west) + (1.5,-0.2)$) {};
\node[cell] at ($(T3.west) + (2.4,-0.2)$) {};
\node[cell, align=center, font=\tiny] at ($(T3.west) + (3.4,-0.2)$) (PRF1) {$\prf_{\sk}(x)$};

\node[tablebox, below=0.3cm of T3] (T4) {};
\node at (T4.north west) [anchor=north west, xshift=0.1cm, yshift=-0.1cm] {$\mathrm{Z}_4$};

\node[cell] at ($(T4.west) + (0.6,-0.2)$) {};
\node[cell] at ($(T4.west) + (1.5,-0.2)$) {};
\node[cell, align=center, font=\tiny] at ($(T4.west) + (2.5,-0.2)$) (PRF2) {$\prf_{\sk}(x)$};
\node[cell] at ($(T4.west) + (3.5,-0.2)$) {};

\node[bigtablebox, below=0.65cm of LM, yshift=-0.1cm] (BigT1) {};
\node[tablebox, below=1cm of LM] (T1) {};
\node at (T1.north west) [anchor=north west, xshift=0.1cm, yshift=-0.1cm] {$\mathrm{Z}_1$};

\node[cell] at ($(T1.west) + (0.6,-0.2)$) {};
\node[cell] at ($(T1.west) + (1.5,-0.2)$) {};
\node[cell] at ($(T1.west) + (2.4,-0.2)$) {};
\node[cell, align=center, font=\tiny] at ($(T1.west) + (3.4,-0.2)$) {$\prf_{\sk}(x)$};

\node[tablebox, below=0.3cm of T1] (T2) {};
\node at (T2.north west) [anchor=north west, xshift=0.1cm, yshift=-0.1cm] {$\mathrm{Z}_2$};

\node[cell, align=center, font=\tiny] at ($(T2.west) + (0.6,-0.2)$) {$\prf_{\sk}(x)$};
\node[cell] at ($(T2.west) + (1.6,-0.2)$) {};
\node[cell] at ($(T2.west) + (2.5,-0.2)$) {};
\node[cell] at ($(T2.west) + (3.4,-0.2)$) {};

\draw[arrow] (LM.east) -- node[above] {Negatives} ($(LM.east)+(1.8,0)$);
\draw[dashedarrow] ($(LM.east)+(1.8,0)$) -- (PRF1.west);
\draw[dashedarrow] ($(LM.east)+(1.8,0)$) -- (PRF2.north);

\draw[arrow] (LM.south) -- node[right] {Positives} (T1.north);

\node[font=\tiny] at ($(T3.south)!0.5!(T4.north) + (-0.6,-0.0)$) {$\mathrm{h}_3(\mathrm{x})$};
\node[font=\tiny] at ($(T4.north west) + (2.0,-0.15)$) {$\mathrm{h}_4(\mathrm{x})$};
\node[font=\tiny] at ($(T1.south)!0.5!(T1.north) + (1.2,0.3)$) {$\mathrm{h}_1(\mathrm{x})$};
\node[font=\tiny] at ($(T1.south)!0.5!(T2.north) + (-1.2,0)$) {$\mathrm{h}_2(\mathrm{x})$};

\draw[dashedarrow] ($(T1.west) + (2.0,0.45)$) -- ($(T2.west) + (0.8,0)$);
\draw[dashedarrow] ($(T1.west) + (2.0,0.45)$) -- ($(T1.west) + (3.2,0)$);

\draw[arrow] ([xshift=1cm]BigT3.south) -- node[pos=0.7, right] {Negatives} ++(0,-0.8cm);
\draw[arrow] ([xshift=-1cm]BigT3.south) -- node[pos=0.7, right] {Positives} ++(0,-0.8cm);

\draw[arrow] ([yshift=0cm]BigT1.east) -- node[pos=0.75, above] {Positives} ++(1.2cm,0);
\draw[arrow] ([yshift=-1.2cm]BigT1.east) -- node[pos=0.75, above] {Negatives} ++(1.2cm,0);

\end{tikzpicture}
  \caption{The \betpassbloomname{} construction.} \label{fig:cuckoo-lbf}
\end{figure}

\noindent\textbf{False Positive Rate Analysis.}
We analyze FPR of both constructions in non-adversarial settings. We show that for \bloomname{} $\mathbf{P}$; the same analysis and results hold for \betpassbloomname{} $\mathbf{C}$ as both constructions use the same partitioning strategy. We also note that in our analysis, FPR refers to the probability of the event that some input $x \in \domain$ is a FP. Thus, in this and other sections that analyze FPR, we compute the probability of this event. A false positive (FP) for a query $x$ happens in \bloomname{} if any of the following holds (see Figure~\ref{fig:hybridmodel}): 
\begin{enumerate}
    \item $x$ generates a FP in $\mathscr{L}$ and a FP in $\mathbf{B}_{A}$. 
    \item $x$ generates a TN (true negative) in $\mathscr{L}$ and a FP in  $\mathbf{B}_{B}$.
\end{enumerate} 

\begin{figure}[t!]
    \centering
\begin{tikzpicture}[
    node distance=0.5cm,
    lmbox/.style={draw, minimum width=1cm, minimum height=0.5cm, align=center},
    prpbox/.style={draw, minimum width=0.5cm, minimum height=0.5cm, align=center},
    bfbox/.style={draw, minimum width=1.0cm, minimum height=0.5cm, align=center},
    arrow/.style={-{Stealth[length=3mm]}, thick},
    reddarrow/.style={-{Stealth[length=2mm]}, red, dashed},
    label/.style={font=\small},
    redlabel/.style={font=\small, red}
]

\node[lmbox] (LM) {Learning Model};
\node[prpbox, right=1.8cm of LM] (PRP1) {PRP};
\node[bfbox, right=0.5cm of PRP1] (BFB) {CBF $\mathbf{B}_{B}$};

\node[prpbox, below=0.8cm of LM] (PRP2) {PRP};

\node[bfbox, below=0.5cm of PRP2] (BFA) {CBF $\mathbf{B}_{A}$};

\draw[arrow] ([xshift=-1.5cm]LM.west) -- node[label, above] {Input} (LM.west);

\draw[reddarrow] (-2.5cm,0.8cm) -- ++(2cm,0.0cm) -| (LM.north)
    node[pos=-0.75, redlabel, above] {Adversarial input};

\draw[arrow] (LM.east) -- node[label, above] {Negatives} (PRP1.west);

\draw[reddarrow] ([xshift=0.3cm]LM.north) -- ++(0cm,0.5cm) -| (PRP1.north) 
    node[pos=0.25, redlabel, above] {True negatives};

\draw[arrow] (PRP1.east) -- (BFB.west);
\draw[reddarrow] ([xshift=0.3cm]PRP1.north) -- ++(0cm,0.5cm) -| ([xshift=-0.3cm]BFB.north);

\draw[arrow] (LM.south) -- node[pos=0.3, label, right] {Positives} (PRP2.north);

\draw[reddarrow] ([xshift=-0.3cm]LM.south) -- node[pos=0.5, redlabel, left] {False positives} ([xshift=-0.3cm]PRP2.north);

\draw[arrow] (PRP2.south) -- (BFA.north);
\draw[reddarrow] ([xshift=-0.3cm]PRP2.south) -- node[pos=0.5, redlabel, left] {} ([xshift=-0.3cm]BFA.north);

\draw[arrow] (BFB.north) -- +(0, 0.8cm) node[pos=0.5, label, right] {Negatives};

\draw[arrow] (BFB.south) -- +(0, -0.8cm) node[pos=0.3, label, right] {Positives};
\draw[reddarrow] ([xshift=-0.3cm]BFB.south) -- ++(0.0cm, -0.8cm) node[pos=0.5, redlabel, left] {False positives};

\draw[arrow] (BFA.west) -- +(-2.0cm, 0) node[pos=0.4, label, above] {Negatives};

\draw[arrow] (BFA.east) -- +(2.0cm, 0) node[pos=0.4, label, above] {Positives};
\draw[reddarrow] ([yshift=-0.2cm]BFA.east) -- ++(1.0cm, 0) node[pos=0.4, redlabel, above, yshift=0.5cm] {};

\end{tikzpicture}
    \caption{To generate a false positive in a \bloomname{}, an adversary $\adv$ must either generate a false positive in the learning model and direct its query through backup CBF $\mathbf{B}_{A}$, or generate a true negative in the learning model and direct its query through backup CBF $\mathbf{B}_{B}$}\label{fig:hybridmodel}
\end{figure}

For $x \sample \domain$,\footnote{The quantification of FPR here, or simply the probability a given input is FP, is for non-adversarial queries, hence, we have $x \sample \domain$.} let $\text{FP}(x, S, m)$ denote the event that $x$ is a false positive in a CBF that encodes the set $S$ with memory budget $m$, $\text{FP}_{L}(x, S, \mathscr{T}, m)$ denote the event that $x$ is a false positive in a learning model $\mathscr{L}$ generated using the training dataset $\mathscr{T}$ (where this dataset is generated based on $S$) with memory budget $m$, and $\text{TN}_{L}(x, S, \mathscr{T}, m)$ denote the event that $x$ is a true negative in a learning model $\mathscr{L}$ generated using the training dataset $\mathscr{T}$ (also based on the set $S$) with memory budget $m$. We assume that the correctness probability of the learning model is independent of that of the backup CBFs. In particular, we assume that for any $m, m^{\prime}$, we have: 
\[\Pr[\text{FP}_{L}(x, S, \mathscr{T}, m) \cap \text{FP}(x, S, m^{\prime})] = \Pr[\text{FP}_{L}(x, S, \mathscr{T}, m)] \, \Pr[\text{FP}(x, S, m^{\prime})]\] 
\[\Pr[\text{TN}_{L}(x, S, \mathscr{T}, m) \cap \text{FP}(x, S, m^{\prime})] = \Pr[\text{TN}_{L}(x, S, \mathscr{T}, m)] \, \Pr[\text{FP}(x, S, m^{\prime})]\]

Consider a total memory budget of $m_{T}$. The memory allocation of a \bloomname{} from $m_{T}$ is assigned as follows: $m_{L}$ bits are for the learning model $\mathscr{L}$, $m_{A}$ bits are for backup CBF $\mathbf{B}_{A}$, $m_{B}$ bits are for backup CBF $\mathbf{B}_{B}$, $2\lambda$ bits are for the PRP keys. Thus, we have $m_{T} \geq m_{L} + m_{A} + m_{B} + 2\lambda$. Based on that, we have the following theorem. 

\begin{theorem}\label{thm:fprdowntownbodegafilter}
Let $x \sample \domain$, for any memory budget $m_{T} \in \NN$, set $S \subseteq \domain$, let $\text{FP}_{LBF}$ denote the event that $x$ is a false positive in \bloomname{} (respectively, \betpassbloomname{}) encoding $S$. The probability of this event (in non-adversarial settings) is:
\begin{align*}
\Pr[\text{FP}_{LBF}(x, S, \mathscr{T}, m_{T})] =\, &\Pr[\text{FP}_{L}(x, S, \mathscr{T}, m_{L})] \, \Pr[\text{FP}(x, S_{A}, m_{A})] \\ & + \Pr[\text{TN}_{L}(x, S, \mathscr{T}, m_{L})] \, \Pr[\text{FP}(x, S_{B}, m_{B})]    
\end{align*}

\noindent where $\mathscr{T}$ and $\mathscr{L}$ are the training dataset and learning model, respectively, corresponding to $S$,  \(S_{A} = \{x \in S \mid \mathscr{L}(x) \geq \tau \}\), \(S_{B} = \{x \in S \mid \mathscr{L}(x) < \tau \}\), and the probability is taken over the random coins of $\mathscr{L}$ and the backup CBFs used in \bloomname{} (respectively, \betpassbloomname{}).

\end{theorem}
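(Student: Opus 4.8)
The plan is to derive the identity directly from the definition of the \bloomname{} query algorithm by a case analysis on the learning model's verdict, and then to invoke the stated independence assumption to factor each resulting joint probability. The whole argument is structural; there is no deep estimate to perform.

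First I would unfold the event $\text{FP}_{LBF}$. By definition $x$ is a false positive of $\mathbf{P}$ exactly when $x \notin S$ and $\mathbf{B}_{2}(M_A, M_B, \mathscr{L}, \tau, x) = 1$. Substituting the definition of $\mathbf{B}_2$, this is the disjunction
\[(\mathscr{L}(x) \geq \tau \,\land\, \mathbf{B}_{{A}_{2}}(M_A, \prp_{\sk_A}(x)) = 1) \;\lor\; (\mathscr{L}(x) < \tau \,\land\, \mathbf{B}_{{B}_{2}}(M_B, \prp_{\sk_B}(x)) = 1).\]
The two disjuncts are mutually exclusive, since the first requires $\mathscr{L}(x) \geq \tau$ and the second $\mathscr{L}(x) < \tau$. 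Hence $\Pr[\text{FP}_{LBF}]$ splits as the sum of the probabilities of the two branches (each intersected with $x \notin S$), with no inclusion--exclusion correction term.

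Next I would identify each branch with the named events (this matches Figure~\ref{fig:hybridmodel}). For the first branch, the condition $\mathscr{L}(x) \geq \tau$ together with $x \notin S$ is precisely the event $\text{FP}_{L}(x, S, \mathscr{T}, m_L)$, a false positive of the learning model. Because $x \notin S$ implies $x \notin S_1$ and $\prp_{\sk_A}$ is a bijection, we get $\prp_{\sk_A}(x) \notin S_A$, so $\mathbf{B}_{{A}_{2}}(M_A, \prp_{\sk_A}(x)) = 1$ is a genuine false positive of the backup CBF storing $S_A$. Here is the one place that needs care: the theorem writes this as $\text{FP}(x, S_A, m_A)$ rather than as a false-positive event on the transformed query $\prp_{\sk_A}(x)$, so I would appeal to bijectivity of the PRP --- for $x \sample \domain$ the image $\prp_{\sk_A}(x)$ is again uniform, so the false positive probability of $\mathbf{B}_A$ on the transformed query equals the CBF false positive rate $\Pr[\text{FP}(x, S_A, m_A)]$. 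The second branch is handled symmetrically, with $\text{TN}_{L}$ and $S_B, m_B$ in place of $\text{FP}_{L}$ and $S_A, m_A$.

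Finally I would factor. Applying the stated independence assumption --- that the coins of $\mathscr{L}$ are independent of those of each backup CBF --- to each branch turns the joint probabilities into products, yielding
\begin{align*}
\Pr[\text{FP}_{LBF}] = \; &\Pr[\text{FP}_{L}(x, S, \mathscr{T}, m_L)] \, \Pr[\text{FP}(x, S_A, m_A)] \\ &+ \Pr[\text{TN}_{L}(x, S, \mathscr{T}, m_L)] \, \Pr[\text{FP}(x, S_B, m_B)],
\end{align*}
which is the claimed formula; the $x \in S$ portion of the sample space contributes $0$ to every term, so restricting to $x \notin S$ is harmless. I expect the main obstacle to be purely notational bookkeeping rather than a substantive argument: one must justify the change of query variable through the PRP (so that a false positive on $\prp_{\sk}(x)$ becomes $\text{FP}(x, \cdot)$) and check that the generic independence assumption, stated with $S$, is legitimately instantiated as independence between the learning model and each backup CBF keyed to $S_A$ and $S_B$.
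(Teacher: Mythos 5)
Your proposal is correct and follows essentially the same route as the paper's own (very terse) proof: a case split on the learning model's verdict, mutual exclusivity of the two branches, and the stated independence assumption to factor each branch into a product. The paper omits the PRP change-of-variable bookkeeping entirely; your bijectivity argument fills in that detail soundly, so no gap remains.
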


\begin{proof}
    Based on the design of \bloomname{} (and \betpassbloomname{}), it follows that a resulting FP must either be a FP in the backup CBF encoding $S_{A}$ or in the backup CBF encoding $S_{B}$. A FP in the former must also be a FP in $\mathscr{L}$, while a FP in the latter must be a TN in $\mathscr{L}$.\qed
\end{proof}

\section{Security and Performance Analysis}

In this section, we formally prove security of our LBF constructions. We show that \bloomname{} is secure under \texttt{LABGame}, \texttt{LFAGame}, and \texttt{LPAGame}, and that \betpassbloomname{} is secure under the \texttt{LBPGame} and \texttt{LABGame}. We leave studying whether \betpassbloomname{} is secure under $\mathtt{LFAGame}$ and $\mathtt{LPAGame}$ or introducing a construction that is secure against all LBF games as a future work direction. 

We also analyze FPR (achieved for a given memory budget) of our constructions under these adversarial settings. Our results show that our LBF constructions achieve strong security guarantees while maintaining competitive performance (in terms of FPR achieved for a given memory budget). 

\subsection{Fully Adaptive Adversary}

This section establishes the security of \bloomname{} against fully adaptive adversaries. We first prove that \securecbf{} maintains its security guarantees even when facing a fully adaptive adversary who can access the filter's internal representation. Building on this result, we then prove that \bloomname{} preserves these security properties in the LBF context.

\begin{theorem}\label{thm:fullyadaptivecbf}
    Let $\mathbf{B}$ be an $(n, \varepsilon)$ \securecbf{}. Assuming PRPs exist, then for a security parameter $\lambda$ there exists a negligible function $\negl[\cdot]$ such that $\mathbf{B}$ is $(n, t, \varepsilon + \negl)$-secure under \texttt{FAGame} for any $t \in \bigO{\poly[n, \lambda]}$.
\end{theorem}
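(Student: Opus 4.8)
The plan is to reduce \texttt{FAGame} security to \texttt{ABGame} security and then invoke the Naor--Yogev theorem (Theorem~\ref{thm:moninaortheorem}), which already certifies $(n, t, \varepsilon + \negl)$-security of \securecbf{} under \texttt{ABGame} for any $t \in \bigO{\poly[n,\lambda]}$. The only gap between the two games is that a fully adaptive adversary additionally gets oracle access to $\mathbf{B}_1(\cdot)$, letting it obtain representations of arbitrarily chosen sets. The whole task is therefore to argue that this extra oracle yields no non-negligible advantage toward producing a fresh false positive against the challenge representation $M$, which $\adv$ still touches only through $\mathbf{B}_2(M, \cdot)$.

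First I would fix a $\ppt$ \texttt{FAGame} adversary $\adv$ and construct an \texttt{ABGame} adversary $\adv'$ that runs $\adv$, relays every $\mathbf{B}_2(M, \cdot)$ query to its own \texttt{ABGame} oracle, answers each $\mathbf{B}_1(\cdot)$ query internally, and finally echoes $\adv$'s guess $x^*$. The key structural fact is that in \securecbf{} each invocation of $\mathbf{B}_1$ is an \emph{independent} randomized construction: it draws a fresh permutation key and fresh CBF hashing randomness, and the representation it returns is a plain CBF over the permuted input set. Consequently these responses share no randomness with the challenge $M$ and are independent of it.

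The hard part will be that $\adv'$ cannot evaluate the PRP without the secret key, so it cannot directly reproduce a $\mathbf{B}_1(S')$ response. I would resolve this with a hybrid that replaces the permutation used in each $\mathbf{B}_1$ answer by a truly random permutation. Since every permutation evaluation happens inside the oracles and $\adv$ issues only polynomially many queries, PRP security bounds the resulting change in $\adv$'s view by a negligible amount. In this idealized hybrid the image of any chosen set is simply a uniformly random subset of the codomain of the right cardinality, so $\adv'$ can sample it and build a CBF with fresh hashes, faithfully reproducing the distribution of $\mathbf{B}_1$ outputs; because these use independent randomness and expose the permutation only through lossy CBF bits, they leak nothing about the hidden string and hash functions of $M$.

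Combining the pieces, $\adv'$ simulates $\adv$'s full \texttt{FAGame} view up to a negligible PRP-switching term, so $\Pr[\texttt{FAGame}(\adv,t,\lambda)=1] \le \Pr[\texttt{ABGame}(\adv',t,\lambda)=1] + \negl$. Theorem~\ref{thm:moninaortheorem} caps the right-hand probability at $\varepsilon + \negl$, yielding the claimed $(n, t, \varepsilon + \negl)$-security under \texttt{FAGame}. The one point I would verify with care is independence: I must confirm that each $\mathbf{B}_1$ call genuinely draws fresh key and hashing randomness, so that no cross-query or challenge correlations arise. If instead a single permutation key were reused across invocations, the simulation would have to maintain that permutation lazily and consistently with the (hidden) challenge key, which is the only place the argument could become delicate.
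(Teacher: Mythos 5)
Your proposal is correct and takes essentially the same route as the paper's proof: both reduce \texttt{FAGame} security to \texttt{ABGame} security (Theorem~\ref{thm:moninaortheorem}), and both neutralize the extra $\mathbf{B}_1(\cdot)$ oracle via a PRP-to-truly-random-permutation hybrid, after which the oracle's answers are just CBF encodings of random sets carrying no information about the challenge $M$ --- you merely package this as an explicit black-box reduction with a simulated $\mathbf{B}_1$ oracle, whereas the paper argues directly that the adversary's view in the hybrid collapses to its \texttt{ABGame} view. One remark: under your own (correct) premise that each $\mathbf{B}_1$ invocation draws a fresh key, the ``hard part'' you describe is illusory, since the simulator samples that fresh key itself and can simply run $\mathbf{B}_1$'s code verbatim (making your extra hybrid for the oracle answers sound but unnecessary); the independence caveat you flag at the end --- whether a single permutation key is reused across invocations --- is, however, precisely the modeling point on which the argument hinges, and your proof is more explicit about it than the paper's.
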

\begin{proof}
    By Theorem~\ref{thm:moninaortheorem}, we know that \securecbf{} is a correct and sound CBF that is $(n, t, \varepsilon + \negl)$-secure for any $t \in \bigO{\poly[n, \lambda]}$ under \texttt{ABGame}. All that is left to show is that the additional oracle access to $\oracle = \{\mathbf{B}_1 (\cdot), \mathbf{B}_2 (M, \cdot)\}$ in \texttt{FAGame} does not allow a $\ppt$ adversary $\adv$ to win in \texttt{FAGame} with non-negligible probability. We can prove this using a hybrid argument. Let $\mathbf{B}$ be a \securecbf{}.

    \text{Hybrid Game \texttt{H0Game}}: This is the original game with $\mathbf{B}$.
    
    \text{Hybrid Game \texttt{H1Game}}: Let $\mathbf{B}^{\prime}$ be the same construction as $\mathbf{B}$ but with $\prp_{\sk}$ replaced with a truly random permutation $\pi$. 
    
    Since $\prp$ is a secure PRP, i.e., indistinguishable from a truly random permutation, thus $\adv$ cannot distinguish between \texttt{H0Game} and \texttt{H1Game}. In \texttt{H1Game}, to $\adv$, the representation $\mathbf{B}_{1}(\pi(S)) = M_{\pi}$ is indistinguishable from a representation constructed from a random set. So $\adv$ cannot gain any information about set $S$ from $\mathbf{B}_{1}(\pi(S))$ and $\adv$'s view is identical to that in \texttt{ABGame}. Hence,
    \(\Pr[\adv \text{ wins } \texttt{H1Game}] \leq \Pr[\adv \text{ wins } \texttt{ABGame}] = \varepsilon + \negl
    \). 
    
    Now, in the original construction $\mathbf{B}$ that uses $\prp_{\sk}$, assuming PRPs exist, we have \( 
    \Pr[\adv \text{ wins } \texttt{FAGame}] = \Pr[\adv \text{ wins } \texttt{H1Game}] + \negl
  = \varepsilon + \negl \). Therefore, \securecbf{} is $(n, t, \varepsilon + \negl)$-secure under \texttt{FAGame}.\qed
\end{proof}

The following theorem shows that a \bloomname{} is $(n, t, \varepsilon + \negl)$-secure under \texttt{LFAGame}. To prove this result, we first prove that the differences between a \bloomname{} and a standard LBF construction still allow a \bloomname{} to be a correct  $(n, \tau, \varepsilon, \varepsilon_{p}, \varepsilon_{n})$-LBF. Put differently, if there exists a standard LBF construction for any set $S \subseteq \domain$ of cardinality $n$ that satisfies the properties of an $(n, \tau, \varepsilon, \varepsilon_{p}, \varepsilon_{n})$-LBF, then there also exists a \bloomname{} construction that satisfies those properties. We prove security by doing a case analysis that reduces the security of \bloomname{} under \texttt{LFAGame} to the security of \securecbf{} under \texttt{FAGame}.

\begin{theorem}\label{thm:downtownbodegafilter}
    Let $\mathbf{B}$ be a standard construction for an $(n, \varepsilon, \varepsilon_{p}, \varepsilon_{n})$-LBF that uses $m$ bits of memory out of which $m_{C}$ bits are used for the backup CBF. Assuming PRPs exist, then for security parameter $\lambda$ there exists a negligible function $\mathrm{negl(\cdot)}$ and a \bloomname{}, $\mathbf{B}_{\sk_{A}, \sk_{B}}$, that is $(n, t, \varepsilon + \negl)$-secure for any $t \in \bigO{\poly[n,\lambda]}$ under \texttt{LFAGame}, and uses $m^{\prime} = m + m_{C} + 2 \lambda$ bits of memory.
\end{theorem}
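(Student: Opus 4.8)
The plan is to prove both halves that the statement bundles: that $\mathbf{B}_{\sk_A,\sk_B}$ is a genuine $(n,\tau,\varepsilon,\varepsilon_p,\varepsilon_n)$-LBF, and that it is \texttt{LFAGame}-secure. For correctness I would first check Definition~\ref{def:learnedbloomfilter}. Completeness is immediate: for $x\in S$ the query routes $x$ to $\mathbf{B}_A$ when $\mathscr{L}(x)\geq\tau$ and to $\mathbf{B}_B$ otherwise, and the corresponding backup stores $\prp_{\sk}(x)$, so CBF completeness forces the output to be $1$. Learning-model soundness is inherited verbatim, since $\mathbf{B}_{\sk_A,\sk_B}$ reuses the same $\mathbf{B}_3,\mathbf{B}_4$ (hence the same $(\mathscr{L},\tau)$) as the given standard LBF, which is an $(S,\tau,\varepsilon_p,\varepsilon_n)$-learning model by hypothesis. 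Filter soundness is where the constructions diverge: I would invoke Theorem~\ref{thm:fprdowntownbodegafilter} to write the false-positive probability as $\Pr[\text{FP}_L]\,\Pr[\text{FP}(x,S_A,m_A)]+\Pr[\text{TN}_L]\,\Pr[\text{FP}(x,S_B,m_B)]$ and bound each backup term by the soundness of \securecbf{} from Theorem~\ref{thm:moninaortheorem}, which only adds a $\negl$ slack for wrapping a plain CBF with a PRP. Assigning $m_A=m_B=m_C$ bits to the two backups, $m-m_C$ to the reused learning model, and $2\lambda$ to the keys accounts for the stated budget $m'=m+m_C+2\lambda$ and keeps this decomposition at most $\varepsilon+\negl$.

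For security I would reduce an \texttt{LFAGame} adversary $\adv$ against $\mathbf{B}_{\sk_A,\sk_B}$ to an \texttt{FAGame} adversary against a single \securecbf{}, through a case analysis on the guess $x^{*}$. Because the query algorithm is a disjunction over the two disjoint branches $\mathscr{L}(x^{*})\geq\tau$ and $\mathscr{L}(x^{*})<\tau$, a valid new false positive must satisfy either (A) $\mathscr{L}(x^{*})\geq\tau$ with $\mathbf{B}_{A_2}(M_A,\prp_{\sk_A}(x^{*}))=1$, or (B) $\mathscr{L}(x^{*})<\tau$ with $\mathbf{B}_{B_2}(M_B,\prp_{\sk_B}(x^{*}))=1$; and since $x^{*}\notin S$ and $\prp$ is a bijection, in case (A) the value $x^{*}$ is a genuine, previously-unqueried false positive for the \securecbf{} storing $S_1$ (symmetrically for (B) and $S_2$). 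I would build a reduction $\adv'$ that embeds the \texttt{FAGame} challenge in one branch and \emph{simulates} the rest: it runs $\adv_1$ to obtain $S$, computes $(\mathscr{L},\tau)$ and the partition $S_1,S_2$ itself via $\mathbf{B}_3,\mathbf{B}_4$, submits $S_1$ (resp.\ $S_2$) as its \texttt{FAGame} set, and self-builds the opposite backup with a fresh key. It answers $\adv$'s $\mathbf{B}_2$ queries by evaluating $\mathscr{L}(x_i)$ and routing each query to either the \texttt{FAGame} oracle or the self-built backup, while answering all $\mathbf{B}_1,\mathbf{B}_3,\mathbf{B}_4$ calls directly from the public algorithms with fresh randomness. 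Since a challenge \securecbf{} and a self-built one on the same set are identically distributed from $\adv$'s key-free view, the simulation is perfect and the branch taken by $x^{*}$ is independent of which branch was embedded; the unqueried condition carries over because the queries routed to the embedded branch form a subset of $\{x_1,\dots,x_t\}$. Theorem~\ref{thm:fullyadaptivecbf} then bounds the probability of winning through each branch by $\varepsilon+\negl$.

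The main obstacle I anticipate is not the simulation but the final accounting: combining the two per-branch bounds into the single claimed bound $\varepsilon+\negl$. A naive union over the two cases would yield the \emph{sum} of the two backups' false-positive rates, so the argument must exploit the branch structure---that $x^{*}$ lands in exactly one branch, and that the chosen memory split makes each branch's adaptive false-positive rate match the corresponding term of the Theorem~\ref{thm:fprdowntownbodegafilter} decomposition---to conclude that the adversarial false-positive probability exceeds the non-adversarial $\varepsilon$ by at most a negligible amount. Making this step tight, rather than losing a constant factor across branches, is the delicate point, and it is precisely where the independence of $\adv$'s view from the embedded branch together with the per-branch \texttt{FAGame} guarantee of Theorem~\ref{thm:fullyadaptivecbf} must be used with care.
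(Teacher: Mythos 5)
Your proposal follows the same route as the paper's proof: the identical transformation and memory split ($m_C$ bits for each of $\mathbf{B}_A$, $\mathbf{B}_B$, the learning model reused as is, $2\lambda$ bits for keys), component-wise correctness, and a two-case analysis on which backup CBF the guess $x^*$ routes through, with each case discharged by the \texttt{FAGame}-security of \securecbf{} (Theorem~\ref{thm:fullyadaptivecbf}). In fact you are more explicit than the paper: its proof states the two cases and immediately concludes $(n,t,\varepsilon+\negl)$-security, with no explicit reduction and no discussion of how the two per-branch bounds combine.

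The ``final accounting'' issue you flag is therefore real, and the paper's own proof silently elides it --- but the resolution you gesture at does not close it. If the reduction embeds the \texttt{FAGame} challenge in a uniformly random branch and simulates the other, then (by the very independence you invoke) its success probability equals $\tfrac{1}{2}\Pr[\adv \text{ wins}]$, so \texttt{FAGame} security yields only $\Pr[\adv \text{ wins}] \leq 2(\varepsilon+\negl)$; the deterministic embedding gives $\Pr[\text{win} \wedge \text{branch } A] \leq \varepsilon+\negl$ and $\Pr[\text{win} \wedge \text{branch } B] \leq \varepsilon+\negl$, which sums to the same factor-$2$ loss. To get $\varepsilon+\negl$ one must open up Theorem~\ref{thm:fullyadaptivecbf} rather than use it black-box: pass to the hybrid in which \emph{both} $\prp_{\sk_A}$ and $\prp_{\sk_B}$ are replaced by truly random permutations (two applications of PRP security, negligible loss), and argue that conditioned on the adversary's entire view, the image of any fresh guess $x^* \notin S \cup \{x_1,\dots,x_t\}$ under the relevant permutation is nearly uniform on the unrevealed part of the domain, so the \emph{conditional} false-positive probability is at most $\varepsilon+\negl$ in whichever single branch $x^*$ lands. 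Since the two branches are mutually exclusive and exhaustive, the law of total probability then gives $\Pr[\adv \text{ wins}] \leq \varepsilon+\negl$ with no constant-factor loss. With that conditional argument substituted for your black-box per-branch invocation, your proof is complete and is strictly more rigorous than the one in the paper.
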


\begin{proof}
 Standard LBF $\mathbf{B}$ can be transformed into a \bloomname{} $\mathbf{P}$ as follows. Choose random secret keys $\sk_{A}, \sk_{B} \in \{0, 1\}^{\lambda}$ and use $2\lambda$ bits of extra memory to store them. Use the memory budget of $\mathbf{B}$'s backup CBF to construct backup CBF $\mathbf{B}_{A}$. Use $m_{C}$ extra bits to construct backup CBF $\mathbf{B}_{B}$. Keep the learning model $\mathscr{L}$ as is. $\mathbf{P}$'s completeness follows from $\mathbf{B}_{B}$'s completeness and the fact that for any $x$ such that $\mathscr{L}(x) < \tau$, $\mathbf{P}$ will return $x \notin S$ if and only if the query algorithm of $\mathbf{B}_{B}$ outputs $0$. The soundness of $\mathbf{B}_{\sk_{A}, \sk_{B}}$ follows from the soundness of $\mathbf{B}_{A}$ and $\mathbf{B}_{B}$, and $\mathbf{P}$'s learning model soundness follows from $\mathscr{L}$'s soundness.

 Consider a false positive (FP), i.e., an $x \notin S$ for which $\mathbf{B}_{\sk_{A}, \sk_{B}}$ returns $1$. This occurs in two cases:

\begin{itemize}
\item \text{Case 1}: $\mathscr{L}(x) \geq \tau$ and $\mathbf{B}_{A}$ returns $1$, i.e., ${\mathbf{B}_{A}}_{2}(M_{A}, \prp_{\sk_{A}}(x)) = 1$.
    
\item \text{Case 2}: $\mathscr{L}(x) < \tau$ and $\mathbf{B}_{B}$ returns $1$, i.e., ${\mathbf{B}_{B}}_{2}({M_{B}}, \prp_{\sk_{B}}(x)) = 1$.

\end{itemize}

In both cases, for adversary $\adv$ to induce a FP in $\mathbf{P}$, it must induce a FP either in backup CBF $\mathbf{B}_{A}$ or in backup CBF $\mathbf{B}_{B}$. $\mathbf{B}_{A}$ and $\mathbf{B}_{B}$ are \securecbf{}s and, by Theorem~\ref{thm:fullyadaptivecbf}, are $(n, t, \epsilon + \negl)$-secure under \texttt{FAGame}. Therefore, $\mathbf{P}$ is $(n, t, \epsilon + \negl)$-secure under \texttt{LFAGame}.\qed
\end{proof}

We now discuss how \bloomname{} mitigate concrete attacks discussed in the literature, namely, the two attacks on LBFs by Reviriego et al.~\cite{reviriego1} (which we refer to as opaque-box and clear-box attacks), as well as a general poisoning attack on learned index structures introduced by Kornaropoulos et al.~\cite{kornaropoulos2022price}. 

\textbf{Opaque-box attack.} The opaque-box adversarial model is similar to \texttt{LABGame} as both allow $\adv$ to query the LBF. However, in \texttt{LABGame} $\adv$ chooses the LBF's input set $S$, whereas in Reviriego et al.'s model, $\adv$ does not choose that. The opaque-box attack first tests elements until a false positive or a true positive is found. They it mutates the positive by changing a small fraction of the bits in the input to generate more false positives. The attack targets the learning model in standard LBFs by making it generate false positives without having the input reach the backup CBF. Unlike a standard LBF, \bloomname{} ensures both positive and negative queries are routed to a backup CBF that is $(n, t, \varepsilon)$-secure under \texttt{ABGame}. This ensures that the opaque-box attack will not induce a false positive in \bloomname{} with probability non-negligibly greater than $\varepsilon$.

\textbf{Clear-box attack.} The clear-box adversarial model is similar to \texttt{LFAGame}. With knowledge of the internal state of the LBF's learning model, $\adv$ generates mutations in a more sophisticated way. Reviriego et al. provide the example of a malicious URL dataset where $\adv$ may begin with a non-malicious URL and make changes such as removing the \texttt{"s"} in \texttt{"https"}. Since \bloomname{} is $(n, t, \epsilon)$-secure even when $\adv$ has access to oracle $\oracle = \{\mathbf{B}_1 (\cdot), \mathbf{B}_3(\cdot), \mathbf{B}_4(\cdot)\}$, which reveals learning model state, \bloomname{} remains secure.

\textbf{Poisoning attacks.} Kornaropoulos et al.~\cite{kornaropoulos2022price} discuss an attack where $\adv$ poisons the learning model’s training dataset by having maliciously-chosen inputs in this set. This poisoning attack modifies the training dataset, but not the queries sent to the LBF. The results of our \texttt{LFAGame} hold even if we let $\adv$ choose $\mathscr{T}$ as long as the challenger validates that $\mathscr{T}$ satisfies Definition~\ref{def:trainingdataset}. To accommodate poisoning attacks, we can let $\adv$ choose a $\mathscr{T}$ that is not validated by the challenger. Even in this relaxed version of \texttt{LFAGame}, \bloomname{} will prevent $\adv$ from inducing false positives with probability non-negligibly larger than $\varepsilon$. This is because our security guarantees do not rely on the learning model, but on the $(n, t, \varepsilon)$-secure backup CBFs, which do not use the training dataset $\mathscr{T}$.

\begin{theorem}\label{thm:fullyadaptivefprexpression}
    In \texttt{LFAGame}, for a $\ppt$ adversary $\adv$ that outputs a guess $x^{*} \in \domain$, the probability that $x^{*}$ is a false positive in a \bloomname{} $\textbf{P}$ storing set $S \subseteq \domain$ with training dataset $\mathscr{T}$ and learning model $\mathscr{L}$ is: 
    \[ \Pr[\text{FP}_{LBF}(x^{*}, S, \mathscr{T}, m_{T})] \leq \max(\Pr[\text{FP}(x^{*}, S_{A}, m_{A})], \Pr[\text{FP}(x^{*}, S_{B}, m_{B}))] \]
    where $S_{A} = \{ x \in S \mid \mathscr{L}(x) \geq \tau\}$, $S_{B} = S \setminus S_{A} = \{ x \in S \mid \mathscr{L}(x) < \tau\}$, $\text{FP}_{LBF}$ is the event denoting a false positive in $\textbf{P}$, $\text{FP}$ is the event denoting a false positive in \securecbf{}, $m_T$, $m_A$, and $m_B$ are the total memory of $\textbf{P}$, memory used by backup CBF $\mathbf{B}_A$, and memory used by backup CBF $\mathbf{B}_B$, respectively, and the probability is taken over the random coins of $\adv$ and $\textbf{P}$.
\end{theorem}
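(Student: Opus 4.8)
The plan is to combine the false-positive decomposition from Theorem~\ref{thm:fprdowntownbodegafilter} with a convex-combination bound, using the assumed independence between the learning model and the backup CBFs. First I would observe that since any false positive requires $x^{*} \notin S$, the learning model's verdict on $x^{*}$ is either a false positive (when $\mathscr{L}(x^{*}) \geq \tau$) or a true negative (when $\mathscr{L}(x^{*}) < \tau$); these two outcomes are mutually exclusive and exhaustive, so $\Pr[\text{FP}_{L}(x^{*})] + \Pr[\text{TN}_{L}(x^{*})] = 1$. By the routing logic of \bloomname{}, a false positive then falls into exactly one of the two cases from the proof of Theorem~\ref{thm:downtownbodegafilter}: Case~1, where $\mathscr{L}(x^{*}) \geq \tau$ and $\mathbf{B}_{A}$ returns $1$ on $\prp_{\sk_{A}}(x^{*})$; or Case~2, where $\mathscr{L}(x^{*}) < \tau$ and $\mathbf{B}_{B}$ returns $1$ on $\prp_{\sk_{B}}(x^{*})$.

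Next I would write the false-positive probability as the sum of these two disjoint cases and apply the independence of the learning model from the backup CBFs (as assumed in the FPR analysis) to factor each term, giving
\begin{align*}
\Pr[\text{FP}_{LBF}(x^{*}, S, \mathscr{T}, m_{T})] =\; &\Pr[\text{FP}_{L}(x^{*})]\,\Pr[\text{FP}(x^{*}, S_{A}, m_{A})] \\
&+ \Pr[\text{TN}_{L}(x^{*})]\,\Pr[\text{FP}(x^{*}, S_{B}, m_{B})].
\end{align*}
Since the two coefficients are nonnegative and sum to $1$, the right-hand side is a convex combination of $\Pr[\text{FP}(x^{*}, S_{A}, m_{A})]$ and $\Pr[\text{FP}(x^{*}, S_{B}, m_{B})]$, and every convex combination $\lambda u + (1-\lambda)v$ is bounded above by $\max(u,v)$. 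This yields the claimed bound directly.

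The step I expect to be the main obstacle is justifying that the decomposition and the independence factorization survive a fully adaptive adversary. In \texttt{LFAGame}, $\adv$ has oracle access to $\mathbf{B}_{4}$ and therefore learns $(\mathscr{L}, \tau)$, so it can deliberately choose $x^{*}$ to land in whichever branch it believes is weaker rather than behaving like the random query of Theorem~\ref{thm:fprdowntownbodegafilter}. The key point I would stress is that the identity $\Pr[\text{FP}_{L}(x^{*})] + \Pr[\text{TN}_{L}(x^{*})] = 1$ holds for every fixed non-member, hence for the adversarially chosen $x^{*}$ as well, and that steering all of the probability mass into a single branch merely collapses the convex combination to one of its endpoints, which still respects the $\max$ bound. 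I would also remark that this degenerate case is exactly where the bound is tight, and that substituting the \texttt{FAGame} guarantee of each backup \securecbf{} from Theorem~\ref{thm:fullyadaptivecbf} recovers the concrete $\varepsilon + \negl$ resilience of \bloomname{} established in Theorem~\ref{thm:downtownbodegafilter}.
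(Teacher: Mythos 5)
Your proposal is correct and takes essentially the same route as the paper: both decompose the false-positive event via the routing dichotomy of Theorem~\ref{thm:fprdowntownbodegafilter} (FP in $\mathscr{L}$ $\Rightarrow$ $\mathbf{B}_{A}$, TN in $\mathscr{L}$ $\Rightarrow$ $\mathbf{B}_{B}$) and then bound by the maximum of the two backup-CBF false-positive probabilities. Your convex-combination phrasing, with coefficients $\Pr[\text{FP}_{L}]+\Pr[\text{TN}_{L}]=1$ and the remark that an adaptive adversary merely collapses the combination to an endpoint, is just a slightly more explicit rendering of the paper's case analysis.
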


\begin{proof}
    As established in Theorem~\ref{thm:fprdowntownbodegafilter}, $x^{*}$ can only induce a FP in $\textbf{P}$ if $x^{*}$ also induces a FP or a TN in the learning model $\mathscr{L}$. Therefore, the probability of $x^{*}$ inducing a FP in $\textbf{P}$ will be the probability of $x^{*}$ inducing a FP in one of the backup CBFs, i.e., the probability will be $\Pr[\text{FP}(x^{*}, S_{A}, m_{A})]$ or $\Pr[\text{FP}(x^{*}, S_{B}, m_{B})]$. Thus, the upper bound above follows.\footnote{Note that if $\adv$ chooses $x^*$ at random from $\domain$, then this reduces to the non-adversarial case analyzed in Theorem~\ref{thm:fprdowntownbodegafilter}. The bound in that theorem also respects the bound stated in Theorem~\ref{thm:fullyadaptivefprexpression} above.} \qed
\end{proof}

\subsection{Partially Adaptive Adversary}

Recall that Theorem~\ref{thm:lfaimplieslpa} proves that any LBF that is $(n, t, \varepsilon)$-secure under \texttt{LFAGame} is $(n, t, \varepsilon)$-secure under \texttt{LPAGame}. Since \bloomname{} is $(n, t, \varepsilon)$-secure under \texttt{LFAGame}, as we proved in the previous section, it is also $(n, t, \varepsilon)$-secure under \texttt{LPAGame}. Note that in both \texttt{LFAGame} and \texttt{LPAGame}, a $\ppt$ adversary $\adv$ outputs a guess $x^{*} \in \domain$. The difference between \texttt{LFAGame} and \texttt{LPAGame} is the fraction $\alpha$ of $\adv$'s initial exploratory query budget $t$. Therefore, for a guess $x^{*} \in \domain$ output by adversary in \texttt{LPAGame}, the probability of $x^{*}$ inducing a FP in \bloomname{} is the same as the expression we derived for \texttt{LFAGame} in Theorem~\ref{thm:fullyadaptivefprexpression}.

As discussed in Section~\ref{sec:partially-adaptive-security}, \texttt{LPAGame} is actually designed to capture is a \emph{mixed workload} where a percentage of the queries are adversarial and the rest are non-adversarial. This is a more relevant scenario when it comes to analyzing real-world performance in terms of FPR for a given memory budget under a given workload. Thus, we analyze FPR over the $t$ queries $x_{1}, \ldots, x_{t}$ in \texttt{LPAGame}, covering the $\alpha t$ adversarial queries and the $(1 - \alpha)t$ non-adversarial queries. For clarity, we refer to these $t$ queries as workload queries to distinguish them from the guess $x^*$. 

Without loss of generality, let $\alpha_{P}$ of the adversarial queries generate FPs in the learning model that go through backup CBF $\mathbf{B}_{A}$. Similarly, let $\alpha_{N}$ of the adversarial queries generate TNs in the learning model that go through backup CBF $\mathbf{B}_{B}$. Note that $\alpha = \alpha_{P} + \alpha_{N}$. The adversary makes at most $\alpha_{P} t$ queries that generate FPs in the learning model and at most $\alpha_{N} t$ queries that generate TNs in the learning model.

\begin{theorem}\label{thm:hybriddowntownbodegaexpression}
    In \texttt{LPAGame}, for a workload query $x_{i} \in \domain$, the probability that $x_{i}$ is a false positive in a \bloomname{} $\textbf{P}$ storing set $S \subseteq \domain$ with training dataset $\mathscr{T}$ and learning model $\mathscr{L}$ is
    \begin{align*}
        \alpha_{P} \Pr[\text{FP}(x_{i}, S_{A}, m_{A})] &+ \alpha_{N} \Pr[\text{FP}(x_{i}, S_{B}, m_{B})]\\
        &+ (1 - \alpha_{P} - \alpha_{N}) \Pr[\text{FP}_{LBF}(x_{i}, S, \mathscr{T}, m_{T})]
    \end{align*}

    \noindent where $S_{A} = \{ x \in S \mid \mathscr{L}(x) \geq \tau\}$, $S_{B} = S \setminus S_{A} = \{ x \in S \mid \mathscr{L}(x) < \tau\}$, $\text{FP}(\cdot)$ is the event denoting a false positive in a CBF, $\text{FP}_{LBF}(\cdot)$ is the event denoting a false positive in $\textbf{P}$, $\alpha_{P}$ is the fraction of $t$ queries chosen by $\adv$ that induce false positives in $\mathscr{L}$, and $\alpha_{N}$ is the fraction of $t$ queries chosen by $\adv$ that induce TNs in $\mathscr{L}$. The probability is taken over the random coins used by $\adv$, $\textbf{P}$, and the generation of the $(1 - \alpha_{P} - \alpha_{N}) t$ non-adversarial queries.
\end{theorem}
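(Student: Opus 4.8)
The plan is to compute the probability that a single workload query $x_i$ is a false positive by conditioning on which of three mutually exclusive categories $x_i$ falls into: (i) an adversarial query that induces a false positive in $\mathscr{L}$ and is therefore routed to $\mathbf{B}_A$; (ii) an adversarial query that induces a true negative in $\mathscr{L}$ and is routed to $\mathbf{B}_B$; and (iii) a non-adversarial query sampled uniformly from $\domain$. By the definitions in Section~\ref{sec:partially-adaptive-security}, the fraction of queries in category (i) is $\alpha_P$, the fraction in category (ii) is $\alpha_N$, and the remaining fraction $1 - \alpha_P - \alpha_N$ is non-adversarial. These categories partition the event space for $x_i$, so the law of total probability immediately yields the three-term sum in the statement once I identify the conditional false-positive probability in each case.

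First I would handle the two adversarial cases. For category (i), since $x_i$ already induces a false positive in $\mathscr{L}$ (i.e.\ $\mathscr{L}(x_i) \geq \tau$), the \bloomname{} query algorithm $\mathbf{B}_2$ routes $x_i$ through $\prp_{\sk_A}$ into backup CBF $\mathbf{B}_A$; here $x_i$ is a false positive in $\textbf{P}$ exactly when it is a false positive in $\mathbf{B}_A$ on the set $S_A$ with memory $m_A$. This conditional probability is therefore $\Pr[\text{FP}(x_i, S_A, m_A)]$. Symmetrically, for category (ii), $\mathscr{L}(x_i) < \tau$ routes $x_i$ through $\prp_{\sk_B}$ into $\mathbf{B}_B$, giving conditional probability $\Pr[\text{FP}(x_i, S_B, m_B)]$. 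Both of these follow directly from the case analysis already carried out in the proof of Theorem~\ref{thm:fprdowntownbodegafilter}, where a false positive in \bloomname{} is shown to require a false positive in exactly one of the two backup CBFs, with the learning-model outcome (FP or TN) dictating which CBF is queried.

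For the non-adversarial category (iii), $x_i \sample \domain$, so its false-positive probability is precisely the overall non-adversarial FPR of \bloomname{} derived in Theorem~\ref{thm:fprdowntownbodegafilter}, namely $\Pr[\text{FP}_{LBF}(x_i, S, \mathscr{T}, m_T)]$, which expands into the sum of the two backup-CBF contributions weighted by the learning-model FP and TN probabilities. Combining the three conditional probabilities with their respective weights $\alpha_P$, $\alpha_N$, and $1 - \alpha_P - \alpha_N$ gives the claimed expression.

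The main subtlety I expect is justifying why, in the adversarial cases, the conditional false-positive probability is the \emph{bare} backup-CBF probability rather than the product form that appears in the non-adversarial case. The resolution is that conditioning on the adversarial query being of a fixed learning-model type removes the learning-model factor: the adversary has already committed $x_i$ to a specific branch (deterministically routing it to $\mathbf{B}_A$ or $\mathbf{B}_B$), so only the backup CBF's randomness remains relevant. I would state explicitly that $\alpha_P$ and $\alpha_N$ are \emph{defined} as the fractions of adversarial queries landing in each branch, so no additional learning-model probability is incurred on top of them; this is exactly the accounting set up in the paragraph preceding the theorem. Care is also needed to note that these are upper-bound fractions (the adversary makes \emph{at most} $\alpha_P t$ and $\alpha_N t$ such queries), but since we are expressing a per-query probability under the stated workload composition, the weights are treated as the realized fractions.
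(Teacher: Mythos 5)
Your proposal is correct and follows essentially the same route as the paper: the same three-way case split (adversarial queries inducing learning-model FPs routed to $\mathbf{B}_{A}$, adversarial queries inducing TNs routed to $\mathbf{B}_{B}$, and uniformly sampled non-adversarial queries governed by Theorem~\ref{thm:fprdowntownbodegafilter}), combined by the law of total probability with weights $\alpha_{P}$, $\alpha_{N}$, and $1-\alpha_{P}-\alpha_{N}$. Your added remark on why the adversarial branches carry the bare backup-CBF probability (the adversary's choice fixes the branch, so no learning-model factor appears) is a slightly more explicit justification of what the paper states via the mutual exclusivity of the FP and TN events in $\mathscr{L}$.
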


\begin{proof}
    One of the following cases holds for any query $x_{i}$ among the $t$ workload queries in \texttt{LPAGame}.
\begin{itemize}
    \item \text{Case 1}: $x_{i}$ is not adversary-generated. Therefore, as established by Theorem~\ref{thm:fprdowntownbodegafilter}, these have $\Pr[\text{FP}_{LBF}(x_{i}, S, \mathscr{T}, m_{T})]$ to be FP. There are $(1 - \alpha_{P} - \alpha_{N}) t$ such queries.

    \item \text{Case 2}: $x_{i}$ is adversary-generated and generates a FP in the learning model $\mathscr{L}(x_{i})$. Since $\mathscr{L}(x_{i})$ generating a FP and $\mathscr{L}(x_{i})$ generating a TN are mutually exclusive events, the probability of $x_{i}$ inducing a FP in $\textbf{P}$ is just the probability of $x_{i}$ inducing a FP in backup CBF $\mathbf{B}_{A}$, i.e, $\Pr[\text{FP}(x_{i}, S_{A}, m_{A})$. There are $\alpha_{P} t$ such queries.

    \item \text{Case 3}: $x_{i}$ is adversary-generated and generates a TN in $\mathscr{L}(x_{i})$. Similar to case 2, we can derive the probability of $x_{i}$ inducing a FP to be $\Pr[\text{FP}(x_{i},S_{B}, m_{B})]$. There are $\alpha_{N} t$ such queries.
\end{itemize}
The statement of the theorem follows by applying the law of total probability.\qed
\end{proof}

\subsection{Bet-or-Pass Adversary}

\securecbf{} has only been shown to be secure under \texttt{ABGame}. Naor and Oved~\cite{naor2022bet} provide compelling reasons for why \securecbf{} may not be secure under \texttt{BPGame}. Whether it is possible to modify \securecbf{} in a way that makes it secure under \texttt{BPGame} is an open problem. On the other hand, \betpasscbf{} \textbf{is} secure under \texttt{BPGame}, proved by Naor and Oved~\cite{naor2022bet}, and we recall this result below.

\begin{theorem}[NOY Theorem]\label{thm:noybloomtheorem}
Assuming one-way functions exist, for any $n \in \NN$, universe of size $n < u$, and $0 < \varepsilon < 1/2$, there exists a Bloom filter that is $(n, \varepsilon)$-strongly resilient in \texttt{BPGame} and uses $\bigO{n \log \frac{1}{\varepsilon} + \lambda}$ bits of memory. There exists a CBF construction $\mathbf{B}'$ (which is \betpasscbf{} mentioned above) where for any constant $0 < \varepsilon < 1/2$, $\mathbf{B}'$ is an $(n, \varepsilon)$-strongly resilient in \texttt{BPGame} and uses $\bigO{n \log \frac{1}{\varepsilon} + \lambda}$ bits of memory.
\end{theorem}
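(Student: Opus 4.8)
Since this statement is recalled from Naor and Oved~\cite{naor2022bet}, the plan is to reconstruct the argument establishing Bet-or-Pass resilience of the fingerprint-based \betpasscbf{}. The conceptual crux I would emphasize is that, unlike a standard Bloom filter in which distinct elements share bits and thus have positively correlated false-positive events, \betpasscbf{} decides membership of $x$ solely by comparing its pseudorandom fingerprint $\prf_{\sk}(x, \cdot)$ against the single cell indexed by $h_1(x)$ (resp.\ $h_2(x)$). This decorrelates the false-positive event of a fresh element from everything the adversary learns during its queries, which is exactly what defeats the adaptive ``bet only when confident'' strategy that \texttt{BPGame} permits and which causes plain Bloom filters to fail the notion.

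First I would replace the PRF with a truly random function through a standard reduction: if some $\ppt$ adversary $\adv$ had non-negligibly positive $\expect{C_{\adv}}$ against the real construction but not against the idealized one, the gap would yield a PRF distinguisher, contradicting the existence of one-way functions. I would then work entirely in the random-function world. The key lemma I would prove is a \emph{conditional} false-positive bound: fix any adversarial view (the $t$ query answers, together with the output $(b, x^{*})$, where $x^{*} \notin S \cup \{x_{1}, \dots, x_{t}\}$). Since $x^{*}$ was never queried and is not stored, its fingerprints $\prf(x^{*}, 1), \prf(x^{*}, 2)$ are fresh, uniform, and independent of the view---even of the exact contents of cells $Z_{1}[h_{1}(x^{*})]$ and $Z_{2}[h_{2}(x^{*})]$. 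A union bound then gives
\[
\Pr[x^{*} \text{ is a false positive} \mid \text{view}] \le 2 \cdot 2^{-r} \le \varepsilon,
\]
for the choice $r = \lceil \log(2/\varepsilon) \rceil \in \bigO{\log \tfrac{1}{\varepsilon}}$. Crucially, this bound is uniform over views, so conditioning on the event that $\adv$ decides to bet cannot push the probability above $\varepsilon$.

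Next I would feed $p := \Pr[\text{FP} \mid \adv \text{ bets}] \le \varepsilon$ into the \texttt{BPGame} profit formula: whenever $\adv$ bets, its conditional expected profit is $\tfrac{1}{\varepsilon} p - \tfrac{1}{1-\varepsilon}(1 - p)$, which is nondecreasing in $p$ and vanishes at $p = \varepsilon$, hence is $\le 0$ in the ideal world and $\le \negl$ once the PRF is reinstated; passing contributes exactly $0$. Summing, $\expect{C_{\adv}} \le \negl$ for every $\ppt$ $\adv$ and every $t \in \bigO{\poly[n, \lambda]}$, which is precisely strong resilience. For the memory bound, the two tables store $n_{c} = \bigO{n}$ cells of $r = \bigO{\log \tfrac{1}{\varepsilon}}$ bits each and the key costs $\lambda$ bits, for a total of $\bigO{n \log \tfrac{1}{\varepsilon} + \lambda}$.

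The hard part will be making the key lemma fully rigorous: formalizing ``independent of the view'' against an \emph{adaptive} adversary, so that collisions among the $t$ queried elements---which can leak partial information about a cell's content---provably do not bias the distribution of $x^{*}$'s fingerprint. The saving grace is that even total knowledge of every cell's content leaves a fresh fingerprint uniform, so the $2^{-r}$ per-table bound survives; the write-up simply must make this conditional independence explicit. A second obstacle is discharging the completeness hypothesis: as the caveat after the \betpasscbf{} description (Boskov et al.~\cite{boskov2020birdwatching}) notes, Cuckoo insertion can fail, so the parameters must be tuned so that insertion failure---and hence any false negative---occurs only with negligible probability, ensuring \betpasscbf{} genuinely meets the $(n, \varepsilon)$-filter requirement before the resilience argument applies.
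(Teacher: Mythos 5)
The paper does not actually prove this statement: it is recalled verbatim as an external result from Naor and Oved~\cite{naor2022bet} (hence the name ``NOY Theorem'') and is used only as a black box in the proof of Theorem~\ref{thm:bet_pass_lbf}, so there is no in-paper proof to compare yours against. Your reconstruction follows what is essentially the argument of the cited work, and it is sound: the hybrid replacing $\prf_{\sk}$ by a truly random function, the observation that the fingerprints of a never-queried, never-stored $x^{*}$ remain uniform conditioned on the adversary's entire view (the view is a function of the random function's values only at points of $S$ and the queried elements, so freshness survives adaptivity and adaptive choice of $x^{*}$), the resulting view-uniform bound $\Pr[\text{FP} \mid \text{view}] \le 2 \cdot 2^{-r} \le \varepsilon$, and the profit computation showing the conditional expected profit is nonpositive whenever $p \le \varepsilon$ --- which is exactly what neutralizes the ``bet only when confident'' power that separates \texttt{BPGame} from \texttt{ABGame}. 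Your two flagged obstacles are the right ones, and both are dischargeable: the conditional-independence step is where the rigor belongs, and the Cuckoo-insertion failure issue is resolved at construction time in the cited work (the build is offline, so hash functions can be re-drawn until all of $S$ inserts successfully), ensuring completeness holds before the resilience argument is applied.
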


We now show that our \betpassbloomname{} construction is secure under \texttt{LBPGame}. We prove this using a case analysis of all the decisions available to an adversary $\adv$ in \texttt{LBPGame}. Our case analysis shows that all decision paths reduce the security of \betpassbloomname{} under \texttt{LBPGame} to the security of \betpasscbf{} under \texttt{BPGame}.

\begin{theorem}\label{thm:bet_pass_lbf}
    Let $\mathbf{B}$ be a standard $(n, \varepsilon, \varepsilon_{p}, \varepsilon_{n})$-LBF that uses $m$ bits of memory out of which $m_{C}$ bits are used for the backup CBF and $m_{L}$ bits are used for the learning model, such that $m_{C} + m_{L} = m$. Assuming one-way functions exist, for a security parameter $\lambda$, any $n \in \NN$, domain $\domain$ such that $n < |\domain|$, and $0 < \varepsilon < 0.5$ there exists an LBF that is $(n, t, \negl)$-secure under \texttt{LBPGame} for any $t \in \bigO{\poly[n, \lambda]}$ and uses $m^{\prime} = m_{L} + \bigO{n \log \frac{1}{\varepsilon} + \lambda}$ bits of memory.
\end{theorem}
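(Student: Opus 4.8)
The plan is to transform the standard LBF $\mathbf{B}$ into a \betpassbloomname{} construction $\mathbf{C}$ and then prove the expected profit bound $\expect{C_{\adv}} \leq \negl$ by reducing to the security of \betpasscbf{} under \texttt{BPGame} (Theorem~\ref{thm:noybloomtheorem}). First I would describe the transformation: keep the learning model $\mathscr{L}$ and threshold $\tau$ (using the same $m_L$ bits), and replace the single backup CBF of $\mathbf{B}$ with two \betpasscbf{}s, namely $\mathbf{B}_A$ storing $S_A = \{x \in S \mid \mathscr{L}(x) \geq \tau\}$ and $\mathbf{B}_B$ storing $S_B = \{x \in S \mid \mathscr{L}(x) < \tau\}$, each keyed with an independent PRF secret key $\sk_A, \sk_B$. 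Since $|S_A|, |S_B| \leq n$, Theorem~\ref{thm:noybloomtheorem} gives that each backup uses $\bigO{n \log \frac{1}{\varepsilon} + \lambda}$ bits, so the total is $m^{\prime} = m_L + \bigO{n \log \frac{1}{\varepsilon} + \lambda}$ as claimed. Correctness (completeness, filter soundness, and learning-model soundness) follows exactly as in the proof of Theorem~\ref{thm:downtownbodegafilter}, since the partitioning routing is identical.

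For the security argument, I would fix a \texttt{LBPGame} adversary $\adv$ that outputs $(b, x^{*})$ and recall that the oracle $\oracle$ gives $\adv$ access to $\mathbf{B}_1, \mathbf{B}_3, \mathbf{B}_4$ and to the query algorithm, but crucially \emph{not} to the PRF secret keys $\sk_A, \sk_B$. The key observation is that a false positive $x^{*}$ must route to exactly one backup: either $\mathscr{L}(x^{*}) \geq \tau$, in which case the event $\mathbf{B}_2(\cdot, x^*)=1$ reduces to a false positive in $\mathbf{B}_A$, or $\mathscr{L}(x^{*}) < \tau$, reducing to a false positive in $\mathbf{B}_B$. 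Because the routing is deterministic and fixed by $\mathscr{L}$ (which is public to $\adv$), and the two backups use independent keys, I would construct a reduction that, given $\adv$, plays \texttt{BPGame} against whichever single backup $x^{*}$ routes to: the reduction simulates the other backup itself (it can build it honestly using its own fresh key and the public $\mathscr{L}, \tau, S$), and forwards the relevant queries to its own challenger. The profit of $\adv$ in \texttt{LBPGame} then equals the profit of the simulated adversary against that single \betpasscbf{} under \texttt{BPGame}, which by Theorem~\ref{thm:noybloomtheorem} is negligible.

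The main subtlety — and where I would spend the most care — is the bookkeeping of the query count and the previously-seen-query condition $x^{*} \notin S \cup \{x_1, \ldots, x_t\}$ across the two backups. Since $\adv$'s $t$ queries to $\mathbf{B}_2$ are split between routing to $\mathbf{B}_A$ and $\mathbf{B}_B$ according to $\mathscr{L}$, the reduction must ensure that the queries it forwards to its own \texttt{BPGame} challenger respect the at-most-$t$ budget and that a winning guess for $\adv$ translates into a guess $x^{*}$ that is genuinely unqueried in the targeted backup's game. This is handled by noting that any query $x_j$ with the \emph{same} routing as $x^*$ becomes a query in the single-backup game, while queries with the other routing are answered by the simulated backup and never consume the challenger's budget; hence the freshness of $x^{*}$ in \texttt{LBPGame} implies its freshness in the reduced game. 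A second point requiring care is the profit accounting: the profit function in \texttt{LBPGame} (Definition~\ref{def:securebetorpass}) uses the same $\varepsilon^{-1}$ and $-(1-\varepsilon)^{-1}$ payoffs as \texttt{BPGame}, so the expected profit is preserved exactly under the reduction rather than merely bounded, which is why the negligible expected-profit guarantee transfers directly. I would conclude that $\expect{C_{\adv}} \leq \negl$, establishing that $\mathbf{C}$ is $(n, t, \negl)$-secure under \texttt{LBPGame}.
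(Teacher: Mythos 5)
Your construction, memory accounting, and correctness argument match the paper's: keep the learning model, partition $S$ into $S_A, S_B$ by the threshold, back each partition with a \betpasscbf{}, and invoke Theorem~\ref{thm:noybloomtheorem}. Your high-level security plan (a false positive routes to exactly one backup, so bound the profit contribution from each backup separately) is also the same decomposition the paper uses, which it phrases as conditioning on the mutually exclusive routing events $E_A, E_B$ and writing $\expect{C_T} \leq \expect{C_A \mid E_A} + \expect{C_B \mid E_B}$.

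However, there is a genuine gap in your reduction: you reduce to plain \texttt{BPGame}, and that reduction does not go through under the paper's semantics of \texttt{LBPGame}. In \texttt{LBPGame} the adversary has oracle access to $\mathbf{B}_1(\cdot)$, which the paper explicitly interprets as returning the internal representation of the filter (this is the same interpretation that powers the counterexamples in Theorem~\ref{thm:labnotimplylpalfa} and in the \texttt{LFAGame}-does-not-imply-\texttt{LBPGame} theorem, where the adversary reads $\sk$ or the bit $b$ out of the representation). Your reduction embeds the \texttt{BPGame} challenger's filter as one of the two backups, but the \texttt{BPGame} challenger never reveals its internal representation (the tables $Z_i$ filled with PRF fingerprints), so your reduction cannot answer $\adv$'s $\mathbf{B}_1$ oracle queries for the embedded backup; the simulation of $\adv$'s view fails precisely on the extra power that distinguishes \texttt{LBPGame} from \texttt{BPGame}. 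The paper closes exactly this hole with an intermediate step you are missing: it defines \texttt{BPGamePlus}, a strengthening of \texttt{BPGame} in which the adversary does see the representation, and proves \betpasscbf{} remains secure there via a hybrid argument replacing $\prf_{\sk}$ by a truly random function, so that the stored fingerprints are indistinguishable from those of a random set and reveal nothing about $S$. Only then does it reduce \texttt{LBPGame} security of \betpassbloomname{} to \texttt{BPGamePlus} security of each backup. Your proof becomes correct if you insert this PRF-hybrid step and target \texttt{BPGamePlus} rather than \texttt{BPGame} in the reduction. (A second, minor inaccuracy: the reduction's profit is not ``preserved exactly''; when $x^*$ routes to the simulated backup your reduction must pass, so its profit is $C_T \cdot \mathbb{1}[E_{\text{target}}]$, and you recover $\expect{C_T}$ only by summing the two reductions' expected profits.)
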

\begin{proof}
By Theorem~\ref{thm:noybloomtheorem}, we know that \betpasscbf{} is $(n, t, \negl)$-secure CBF under \texttt{BPGame} for any $t \in \bigO{\poly[n, \lambda]}$, and uses $\bigO{n \log{\frac{1}{\varepsilon}} + \lambda}$ bits of memory. Recall that in \texttt{LBPGame}, unlike \texttt{BPGame}, $\adv$ has oracle access to the filter construction algorithm $\mathbf{B}_{1}(\cdot)$ which returns the internal representation $M$. Let \texttt{BPGamePlus} be a modified version of \texttt{BPGame} where $\adv$ has oracle access to $\mathbf{B}_{1}(\cdot)$. 

We first show that \betpasscbf{} $\mathbf{B}$ (that uses $\prf_\sk$) is $(n, t, \negl)$-secure under \texttt{BPGamePlus}. We define a hybrid game \texttt{H1Game} in which $\prf_\sk$ is replaced by a true random function $f$. By the security of PRFs, this hybrid is indistinguishable from the original game that uses PRFs. We denote \texttt{H1Game}’s internal representation of \betpasscbf{} as $M^{\prime}$. To $\adv$, the representation $M^{\prime}$ is indistinguishable from a representation constructed from a random set. So $\adv$ cannot gain any information about the input set $S$ from $M^{\prime}$ and $\adv$’s view is identical to that in \texttt{BPGame}. Hence $\Pr[\adv \text{ wins } \mathtt{H1Game}] \leq \Pr[\adv \text{ wins } \mathtt{BPGame}] = \varepsilon + \negl[\lambda]$. Now, in the original construction $\mathbf{B}$ that uses $\prf_\sk$, assuming PRFs exist, we have $\Pr[\adv \text{ wins } \mathtt{BPGamePlus}] \leq \Pr[\adv \text{ wins } \texttt{H1Game}] + \negl[\lambda] = \varepsilon + \negl[\lambda]$. Therefore, \betpasscbf{} remains $(n, t, \negl)$-secure under \texttt{BPGamePlus} for any $t \in \bigO{\poly[n, \lambda]}$.

Let $\mathbf{C}$ denote the \betpassbloomname{} construction with two backup \betpasscbf{}s $\mathbf{B}_A$ and $\mathbf{B}_B$. $\mathbf{C}$'s completeness follows from $\mathbf{B}_{B}$'s completeness and the fact that any $x$ such that $\mathscr{L}(x) < \tau$ is declared to be not in $S$ by $\mathbf{B}_{2}$ (which is the query algorithm of the \betpassbloomname{}) if and only if the query algorithm of $\mathbf{B}_{B}$ also outputs $0$. $\mathbf{C}$'s soundness follows directly from the soundness of $\mathbf{B}_{A}$ and $\mathbf{B}_{B}$. $\mathbf{C}$'s learning model soundness follows from $\mathscr{L}$'s soundness. Hence, $\mathbf{C}$ is a correct $(n, \tau, \varepsilon, \varepsilon_{p}, \varepsilon_{n})$-LBF. 

Now, we show that the security of \betpassbloomname{} under \texttt{LBPGame} is reducible to the security of \betpasscbf{} under \texttt{BPGamePlus}. Let $E_A$ and $E_B$ be the events that the query goes through backup CBF $\mathbf{B}_A$ and $\mathbf{B}_B$, respectively. Based on the construction of \betpassbloomname{}, $E_A$ and $E_B$ are mutually exclusive events and that $\Pr[E_A \cup E_B] = 1$. Let $C_{T}$ be the overall adversary profit. We denote by $C_{A}$ and $C_{B}$ the expected adversary profit from queries that go to backup CBFs $\mathbf{B}_A$ and $\mathbf{B}_B$, respectively. Based on that, we have:
\[\expect{C_T} = \expect{C_A \mid E_A} \Pr[E_A]  + \expect{C_B \mid E_B} \Pr[E_B] \leq \expect{C_A \mid E_A} + \expect{C_B \mid E_B}\] 

If the total profit $C_T$ is non-negligible, it must be true that either $\expect{C_A \mid E_A}$ or $\expect{C_B \mid E_B}$ is non-negligible. However, since $\mathbf{B}_A$ and $\mathbf{B}_B$ are \betpasscbf{}s, by Theorem~\ref{thm:noybloomtheorem} and our result regarding \texttt{BPGamePlus} above, we know that $\expect{C_A | E_A}$ and $\expect{C_B | E_B}$ are negligible. Therefore, $\expect{C_T}$ is negligible meaning that \betpassbloomname{} is $(n, t, \negl)$-secure under \texttt{LBPGame}.\qed

\end{proof}

The probability that the guess $x^{*}$ that an adversary $\adv$ outputs be a false positive in \betpassbloomname{} is upper-bounded by the decision path where $\adv$ always chooses to bet, i.e, $b = 1$. In this decision path, the false positive probability of \texttt{LBPGame} can be analyzed in a simialr way as done for \texttt{LFAGame}.

\begin{theorem}\label{thm:bet_pass_fpr_expression}
    In \texttt{LBPGame}, for a $\ppt$ adversary $\adv$ that outputs a guess $x^{*}$, the probability that $x^{*}$ is a false positive in a \betpassbloomname{} $\textbf{C}$ storing set $S$ with model $\mathscr{L}$ is 
    \(
        \leq \max(\Pr[\text{FP}(x^{*},S_{A}, m_{A})], \Pr[\text{FP}(x^{*}, S_{B}, m_{B})])
    \) where $S_{A} = \{ x \in S \mid \mathscr{L}(x) \geq \tau\}$, $S_{B} = S \setminus S_{A} = \{ x \in S \mid \mathscr{L}(x) < \tau\}$, $\text{FP}$ is the event denoting a false positive in \securecbf{}, $m_T$, $m_A$, and $m_B$ are the total memory of $\textbf{C}$, memory used by backup CBF $\mathbf{B}_A$, and memory used by backup CBF $\mathbf{B}_B$, respectively, and the probability is taken over the random coins of $\adv$ and $\textbf{C}$.
\end{theorem}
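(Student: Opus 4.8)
The plan is to mirror the argument used for Theorem~\ref{thm:fullyadaptivefprexpression}, after first reducing to the worst-case decision path in \texttt{LBPGame}. The key structural fact I would exploit is that \betpassbloomname{} and \bloomname{} share the same partitioning strategy, so the false-positive decomposition established in Theorem~\ref{thm:fprdowntownbodegafilter} applies verbatim to $\mathbf{C}$: the guess $x^{*}$ can induce a false positive in $\mathbf{C}$ only if it induces a false positive in exactly one of the two backup \betpasscbf{}s, namely $\mathbf{B}_{A}$ when $\mathscr{L}(x^{*}) \geq \tau$ and $\mathbf{B}_{B}$ when $\mathscr{L}(x^{*}) < \tau$.

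First I would justify the reduction to the $b = 1$ path. In \texttt{LBPGame} the adversary outputs a pair $(b, x^{*})$, and choosing $b = 0$ (pass) corresponds to declining to commit to $x^{*}$ being a false positive; an optimal adversary thus passes precisely on the less promising guesses while betting on the ones most likely to be false positives. Consequently, the probability that the committed guess $x^{*}$ is a false positive is maximized along the decision path in which $\adv$ always bets, so it suffices to bound this probability under the convention $b = 1$. This is exactly the observation recorded in the paragraph preceding the theorem.

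Next, I would carry out the case analysis exactly as in Theorem~\ref{thm:fullyadaptivefprexpression}. Since the learning-model threshold test $\mathscr{L}(x^{*}) \geq \tau$ versus $\mathscr{L}(x^{*}) < \tau$ routes $x^{*}$ to a single backup filter, the two cases are mutually exclusive, and the event $\text{FP}_{LBF}$ for the guess coincides with the false-positive event of whichever backup filter actually receives $x^{*}$. Hence $\Pr[\text{FP}_{LBF}(x^{*}, S, \mathscr{T}, m_{T})]$ is either $\Pr[\text{FP}(x^{*}, S_{A}, m_{A})]$ or $\Pr[\text{FP}(x^{*}, S_{B}, m_{B})]$, and bounding it by the larger of the two yields the stated $\max(\cdot, \cdot)$ upper bound.

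I expect the only genuinely delicate step to be the reduction to the always-bet path, since one must argue that granting the adversary the pass option cannot increase the false-positive probability of the committed guess $x^{*}$ relative to forcing it to bet. The remaining case analysis is then identical to the fully adaptive setting and follows immediately from the shared partitioning structure of Theorem~\ref{thm:fprdowntownbodegafilter} and the mutual exclusivity of the two learning-model outcomes.
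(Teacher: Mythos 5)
Your proposal is correct and follows essentially the same route as the paper: reduce to the always-bet ($b=1$) decision path (which the paper also records in the paragraph preceding the theorem), then reuse the case analysis of Theorem~\ref{thm:fullyadaptivefprexpression}, noting that routing through $\mathbf{B}_{A}$ versus $\mathbf{B}_{B}$ is mutually exclusive, so the false-positive probability equals one of the two backup-filter probabilities and is hence bounded by their maximum. No gaps to report.
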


\begin{proof}
    Since we assume $\adv$ always bets and never passes, it always outputs $x^{*}$ to be tested whether it is a FP. A case analysis identical to Theorem~\ref{thm:fullyadaptivefprexpression} shows that the probability of $x^{*}$ inducing a FP is either $\Pr[\text{FP}(x^{*},S_{A}, m_{A})]$ or $\Pr[\text{FP}(x^{*}, S_{B}, m_{B})]$. The upper bound follows from the fact that $x^{*}$ going through backup CBF $\textbf{B}_{A}$ and backup CBF $\textbf{B}_{B}$ are mutually exclusive events.\qed
\end{proof}
\section{Evaluation}
We focus on evaluating FPR vs. memory tradeoffs of our LBF constructions in comparison with known secure CBF constructions. In Section~\ref{sec:numericalanalysis} we conduct a numerical analysis for that in \texttt{LPAGame} and \texttt{LFAGame} based on our FPR model covering a large number of parameters. Note that only  \bloomname{} is included in this analysis, as proving the security of \betpassbloomname{} under \texttt{LPAGame} and \texttt{LFAGame} is left to future work. To emphasize the practicality of our  constructions, Section~\ref{sec:use-case-evaluation} evaluates  \bloomname{} and \betpassbloomname{}, and compares them with Naor et al.'s CBF constructions \securecbf{} and \betpasscbf{}, for a common use-case in a non-adversarial setting.

\subsection{Numerical Analysis}\label{sec:numericalanalysis}

We show performance results for real-world scenarios in both the fully ($\alpha = 1$) and partially ($\alpha \leq 1$) adaptive adversarial models. The goal is to demonstrate scenarios where using a \bloomname{} instead of a \securecbf{} yields better FPR, under a given memory budget, while maintaining security guarantees. Broder et al.~\cite{BroderMitzenmacher2004} show that the probability of $x \in \domain$ being a FP in a CBF with $n_{b}$ bits storing a set $S$ using $n_{h}$ hash functions, is $\Pr[\textrm{FP}(x, S, n_{b})] = (1 - e^{{- n_h |S|}/{n_{b}}})^{n_{h}}$. We choose the number of hash functions $n_{h}$ to be optimal $n_{h} = \ln 2 \cdot (n_{b}/|S|)$, as derived in~\cite{BroderMitzenmacher2004}. 

Similar to~\cite{learnedindexstructures,learnedbloomsandwiching}, we let the false positive probability of a learning model, $\Pr[\textrm{FP}_{L}]$, can be modeled as a fraction of that of a CBF storing set $S$ for the same memory budget (i.e., the learning model has a better FPR than a CBF): $\Pr[\textrm{FP}_{L}(x, S, \mathscr{T}, m)] = c \, (1 - e^{-n_{h}|S|/n_{b}})^{n_{h}}$ where $c \leq 1$. Learning models have both FPs and TNs, and we note that the probability of an entry being a TN in the original set is constant as we assume the input set $S$ does not change after construction. Let $Q_{N}$ be the fraction of TN non-adversarial queries (where the number of adversarial queries is $\alpha N$ and so the number of non-adversarial queries is $|\domain| - \alpha N$). Thus, we have 
\[\Pr[\textrm{TN}_{L}(S, \mathscr{T}, n_{b})] = (1 - \Pr[\textrm{FP}_{L}(S, \mathscr{T}, n_{b})]) \, Q_{N} = (1 - c (1 - e^{- n_{h}|S|/n_{b}})^{n_{h}}) \, Q_{N}\] 

We choose realistic values for our example from prior work on evaluating LBFs~\cite{learnedindexstructures} on Google's Transparency Report. We pick $2$ Megabytes as our memory budget, $m$, chosen from the range of values in Figure 10 of~\cite{learnedindexstructures}. We choose the cardinality of the stored set, $|S|$, as $1.7$ million based on the number of unique URLs in Google's Transparency Report. Kraska et al.~\cite{learnedindexstructures} demonstrates that an LBF with a memory budget of $2$ Megabytes has $0.25$ of the false positive rate of a CBF. Hence, we also use $0.25$ as the value for $c$ in our calculations. Following prior work~\cite{moni1}, we use $128$ bits as the size of our security parameter, $\lambda$. For the case of a \bloomname{}, we let the learning model take $1$ Megabytes while dividing the remaining $1$ Megabytes equally between backup CBFs $\mathbf{B}_{A}$ and $\mathbf{B}_{B}$. The backup CBFs store $S_{A}$ and $S_{B}$, respectively, in \bloomname{}. Our chosen values are summarized in Table~\ref{tab:realisticexperiment}. The full numerical analysis for our model uses $494$ lines of \texttt{C} code and it is available at~\cite{Bodega}.\medskip

\begin{table}[t!]
    \centering
    \caption{Model parameters for \bloomname{} and \securecbf{}.}
    \label{tab:realisticexperiment} 
    \begin{tabularx}{\columnwidth}{@{} l X r @{}} 
        \toprule
        Parameter &  & Value \\
        \midrule
    $m_{T}$ & Total memory budget & 2 MB\\
    $m_{L}$ & Memory budget for the learning model & 1 MB\\
    $m_{A}$ & Memory budget for backup CBF $\mathbf{B}_{A}$ & $0.5$ MB\\
    $m_{B}$ & Memory budget for backup CBF $\mathbf{B}_{B}$ & $0.5$ MB\\
    n & Cardinality of stored set $S \subseteq \domain$ & 1.7 Million \\
    c & $\frac{\Pr[\text{FP}_{L}]\text{ of learning model}}{\Pr[\text{FP}]\text{ of CBF}}$ for same memory budget & 0.25\\
    $\lambda$ & Security parameter & 128 bits\\
    $Q_{N}$ & Fraction of true negative non-adversarial queries & 0.5\\
        \bottomrule
    \end{tabularx}
\end{table}

\noindent\textbf{Varying the fraction of adversarial queries.} 
We take the fraction of adversarial queries $\alpha$ to be a variable ranging from $0$ to $1$. We assume a constant \textit{adversarial strategy}, i.e., the fraction of adversarial queries that are FPs (so they go through backup CBF $\textbf{B}_{A}$) vs. the fraction of adversarial queries that are TNs (so they go through backup CBF $\textbf{B}_{B}$) is constant. In particular, we assume that adversary $\adv$ equally divides its queries between FPs and TNs, so $\frac{\alpha_{P}}{\alpha} = 0.5$ and $Q_{N}$ is $0.5$. As shown in Figure~\ref{fig:inneranalysisfirst}, we observe that a \bloomname{} outperforms a \securecbf{} for the same memory budget when the fraction of adversarial queries is less than a certain cutoff of $0.5$. So, as long as adversarial traffic is at most half of the total workload of an application, \bloomname{} will offer a lower FPR than \securecbf{}. Note that for all our figures, the result for \texttt{LFAGame} is simply the $\alpha = 1$ point in the figure, whereas the entire spectrum of $\alpha$ values shows how FPR varies in the weaker \texttt{LPAGame}.\medskip

\noindent\textbf{Varying the adversarial strategy.} In Figure~\ref{fig:inneranalysissecond}, we relax the assumption that the adversary divides their queries equally between FPs and TNs. We show results for all partitions of $\alpha$ between $\alpha_{P}$ and $\alpha_{N}$. To see how the FPR of \bloomname{} is impacted, we vary the fraction of $\alpha$ assigned to $\alpha_{P}$ from $0$ to $1$. Here, $0$ means all $\alpha N$ adversarial queries are TNs, and $1$ means that all $\alpha N$ adversarial queries are FPs. In this framework, recall that our first calculation (Figure~\ref{fig:inneranalysisfirst}) sets this fraction to $0.5$. The key observation here is that as $\adv$ uses more of its query budget directing traffic to the backup CBF that has the higher FPR, the performance of \bloomname{} degrades. Recall that the upper bound for the FPR of an adversarial query is the FPR of the ``weaker'' backup CBF. For brevity, Figure~\ref{fig:inneranalysissecond} only shows results when $\alpha = 0.2$, i.e., $20\%$ of the workload is adversarial. We note that we observed the same trend for other values of $\alpha$, and hence, we do not include detailed results for that.\medskip

\noindent\textbf{Varying the dataset.} In addition to Google's Transparency Report, we also show results in Figures~\ref{fig:inneranalysisthird} and~\ref{fig:inneranalysisfourth}) for two other datasets evaluated in prior work on LBFs~\cite{sato_matsui,plbf,adabf}. These two datasets are: Malicious URLs Dataset~\cite{maliciousurlsdataset} that contains $223,088$ malicious and $428,118$ benign URLs, and EMBER Dataset~\cite{emberdataset} that
contains $300,000$ malicious and $400,000$ benign files. We change the set's cardinality values, $n$, in Table~\ref{tab:realisticexperiment}. We use the same values listed in the table for all other model parameters. The figures show that, similar to Google's Transparency Report, also for these datasets FPR increases as the fraction of adversarial queries $\alpha$ increases.\medskip

    \begin{figure*}[t!]
        \centering
        \begin{subfigure}[b]{0.49\textwidth}
            \centering
            \includegraphics[width=\textwidth]{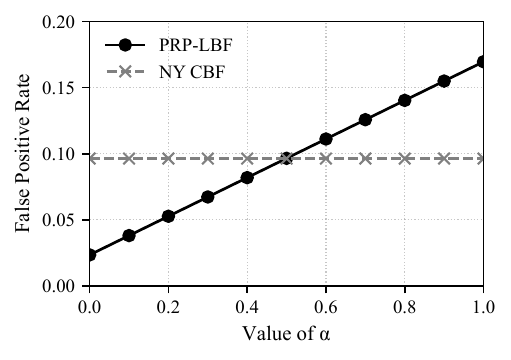}
            \caption[]%
            {{\small Results for the Google Transparency Report as $\alpha$ varies with adversarial queries divided equally between FPs and TNs: $\alpha_{P} = \alpha_{N}$.}}
            \label{fig:inneranalysisfirst}
        \end{subfigure}
        \hfill
        \begin{subfigure}[b]{0.49\textwidth}  
            \centering 
            \includegraphics[width=\textwidth]{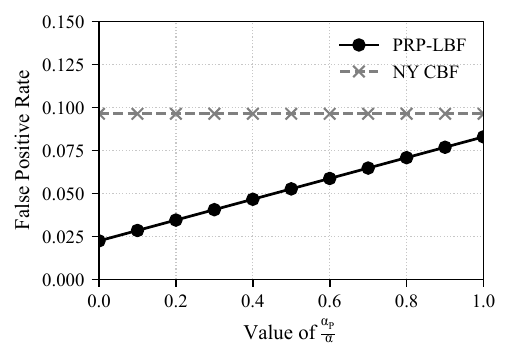}
            \caption[]%
            {{\small Results for the Google Transparency Report with $\alpha = 0.2$ while varying the division of adversarial queries between FPs and TNs (i.e., $\alpha_{P}/\alpha$).}}\label{fig:inneranalysissecond}

        \end{subfigure}
        \vskip\baselineskip
        \begin{subfigure}[b]{0.49\textwidth}   
            \centering 
            \includegraphics[width=\textwidth]{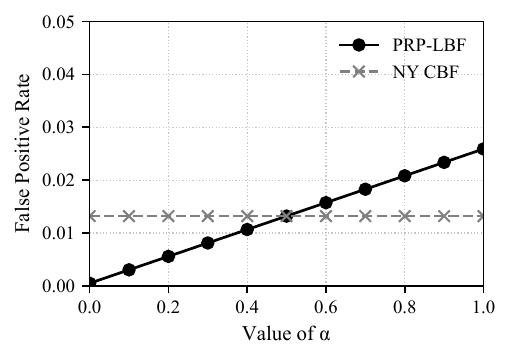}
            \caption[]%
            {{\small Results for the Malicious URLs Dataset (same setting as Figure~\ref{fig:inneranalysisfirst}).}}\label{fig:inneranalysisthird}    
        \end{subfigure}
        \hfill
        \begin{subfigure}[b]{0.49\textwidth}   
            \centering 
            \includegraphics[width=\textwidth]{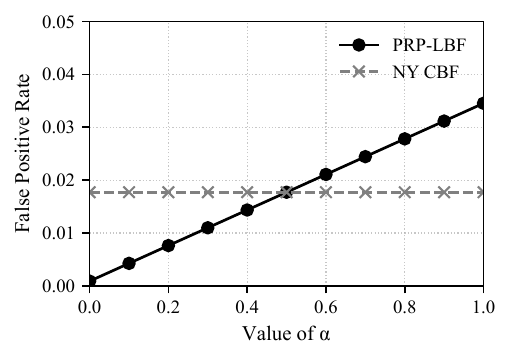}
            \caption[]%
            {{\small Results for the EMBER Dataset (same setting as Figure~\ref{fig:inneranalysissecond}).}}\label{fig:inneranalysisfourth} 
        \end{subfigure}
        \caption{FPR comparison between \bloomname{} and \securecbf{} while varying the fraction of adversarial queries $\alpha$ and the adversarial strategy $\alpha_{P}/\alpha$ for various datasets. Results for the fully adaptive model are the $\alpha = 1$ points in the figures, whereas the entire spectrum of $\alpha$ values shows the results for the partially adaptive model.} 
        \label{fig:analysis_experiments_main}
    \end{figure*}

    \begin{figure*}[ht!]
        \centering
        \begin{subfigure}[b]{0.49\textwidth}
            \centering
            \includegraphics[width=\textwidth]{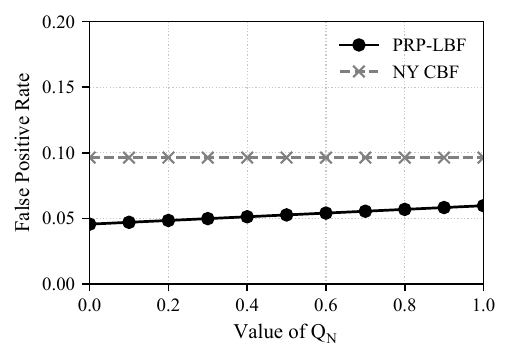}
            \caption[]%
            {{\small $\alpha=0.2$}}    
        \end{subfigure}
        \hfill
        \begin{subfigure}[b]{0.49\textwidth}  
            \centering 
            \includegraphics[width=\textwidth]{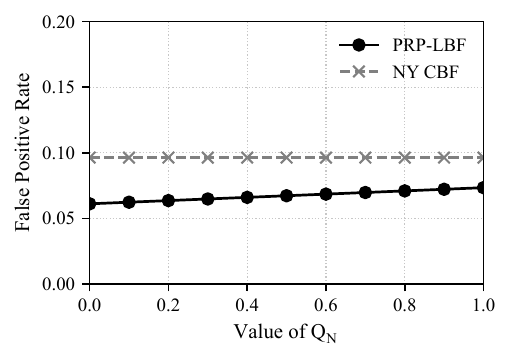}
            \caption[]%
            {{\small $\alpha = 0.3$}}    
        \end{subfigure}
        \vskip\baselineskip
        \begin{subfigure}[b]{0.49\textwidth}   
            \centering 
            \includegraphics[width=\textwidth]{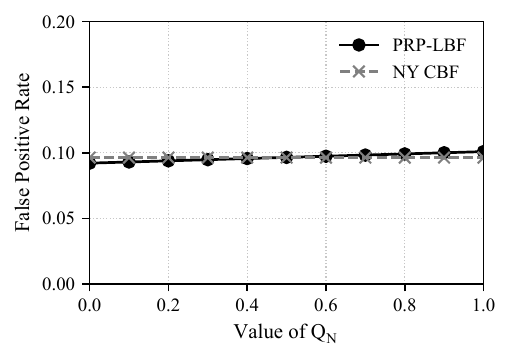}
            \caption[]%
            {{\small $\alpha = 0.5$}}    
        \end{subfigure}
        \hfill
        \begin{subfigure}[b]{0.49\textwidth}   
            \centering 
            \includegraphics[width=\textwidth]{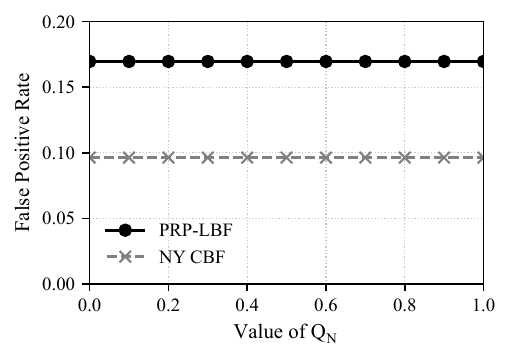}
            \caption[]%
            {{\small $\alpha = 1.0$}}    
        \end{subfigure}
        \caption{FPR comparison between \bloomname{} and \securecbf{} in the partially-adaptive adversarial model for the Google Transparency Report with $Q_{N} \in [0, 1]$.}
        \label{fig:experiment2googletransparency}
    \end{figure*}

\noindent\textbf{Varying the fraction of true negatives.} Recall that in Figure~\ref{fig:inneranalysisfirst}, we set the fraction of non-adversarial queries that are TNs, $Q_{N}$, to be $0.5$. Now we relax this assumption and show results for the entire range of values of $Q_{N} \in [0, 1]$ with $\alpha$ taking $4$ values: $0.2, 0.3, 0.5, 1.0$, such that each value partitioned equally between $\alpha_{P}$ and $\alpha_{N}$. The results are shown in Figure~\ref{fig:experiment2googletransparency}. 

We observe that FPR increases with increasing $Q_{N}$ but the rate of this increase, i.e., $\frac{\partial FPR}{\partial Q_{N}}$, decreases as $\alpha$ increases. This is due to the following reason. In Theorem~\ref{thm:hybriddowntownbodegaexpression}, when $\alpha = \alpha_{P} + \alpha_{N} \to 1$, we have $(1 - \alpha_{P} - \alpha_{N}) \Pr[\text{FP}_{LBF}(x_{i}, S, \mathscr{T}, m_{T})] \to 0$, so the overall FPR of the mixed workload is dominated by the FPR of the backup CBFs which does not depend on $Q_{N}$. On the other hand, when $\alpha \to 0$, the overall FPR of the mixed workload is dominated by $\text{FP}_{LBF}$ which increases with increasing $Q_{N}$ (see Theorem~\ref{thm:fprdowntownbodegafilter}).

\subsection{Use-case Evaluation}\label{sec:use-case-evaluation}

We evaluate the performance of our constructions within the context of a use case to get a sense of how they would perform in practice. Historically, web browsers, including Google Chrome, used a Bloom filter~\cite{gerbetsecurity} to store a set of Malicious URLs. In this design, whenever a user attempts to access a URL on the web browser, the browser queries the Bloom filter for the URL. If the Bloom filter says that the URL is in the Malicious URLs set, the web browser warns the user that they are accessing a potentially unsafe website. The malicious URLs use case has been studied by prior works on LBFs~\cite{sato_matsui,plbf}. Thus, we evaluate this use case using the same public Malicious URLs dataset~\cite{maliciousurlsdataset} as prior work. This dataset contains around 223K malicious and around 428K benign URLs.\medskip 

\noindent\textbf{Implementation and experimental setup.} We implemented \bloomname{}, \betpassbloomname{}, \securecbf{}, and \betpasscbf{} in $903$ lines of Python 3 code, which can be found in our open-source repository~\cite{Bodega}. The implementation allows any PRP/PRF implementation to be plugged in for internal use. Similar to~\cite{moni1}, we use AES to instantiate PRPs and PRFs in our implementation. In particular, we use AES in the ECB mode where the input size of the PRP/PRF is 128 bits (so one block for AES encryption). Our implementation uses the \texttt{PyCryptoDome}~\cite{PyCryptoDome} library for these cryptographic primitives. Our implementation is also modular in a way that allows any machine learning model to be easily plugged in. We tested the correctness of our implementation on a broad range of common classification models, including the Random Forest model, Gaussian Naive Bayes, the Gradient Boosting Classifier, Support Vector Machine-based Classifiers, and Adaptive Boosting, using implementations provided by \texttt{scikit-learn}.

For our experiments, we use a set of $20$ features to train the learning models, including URL length, whether the URL contains an IP address, whether the URL uses a shortening service, whether the URL is ``abnormal'', digit count and letter count of the URL, and whether the URL contains special symbols. This set of features for the Malicious URLs Dataset is common in open-source learning models, and similar features have been used by prior work~\cite{plbf}. In the case of Cuckoo filter-based constructions, we use fingerprints of size $4$ bits, $2$ as the table size constant factor, and $5000$ maximum eviction attempts in the Cuckoo hashing tables. For Cuckoo filter-based constructions, we also skip elements that cannot be inserted after the maximum eviction attempts have been reached.

\begin{figure*}[t!]
    \centering
    \begin{subfigure}[b]{0.49\textwidth}
        \centering
        \includegraphics[width=\textwidth]{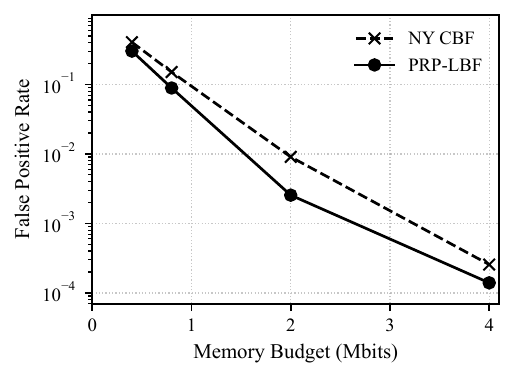}
        \caption[]{\small Gaussian Naive-Bayes Classifier}
        \label{fig:implexperiment1}
    \end{subfigure}
    \hfill
    \begin{subfigure}[b]{0.49\textwidth}  
        \centering
        \includegraphics[width=\textwidth]{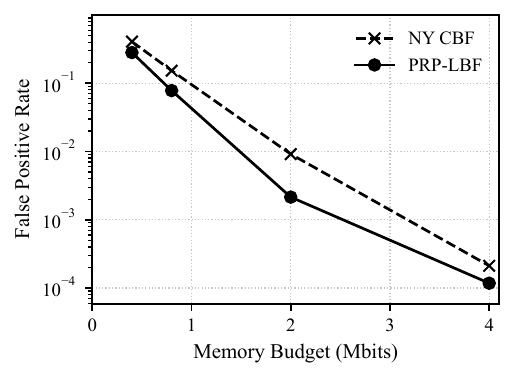}
        \caption[]{\small Linear Support Vector Classifier}
        \label{fig:implexperiment2}   
    \end{subfigure}
        \begin{subfigure}[b]{0.49\textwidth}
        \centering
        \includegraphics[width=\textwidth]{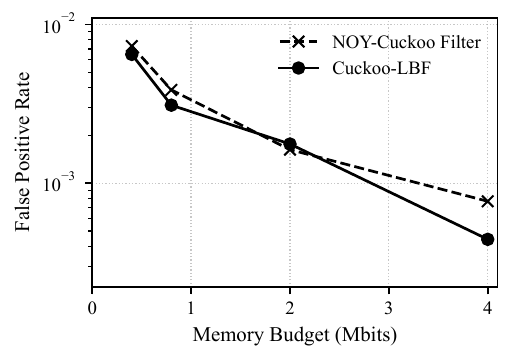}
        \caption[]{\small Gaussian Naive-Bayes Classifier}
        \label{fig:implexperiment3}
    \end{subfigure}
    \hfill
    \begin{subfigure}[b]{0.49\textwidth}  
        \centering
        \includegraphics[width=\textwidth]{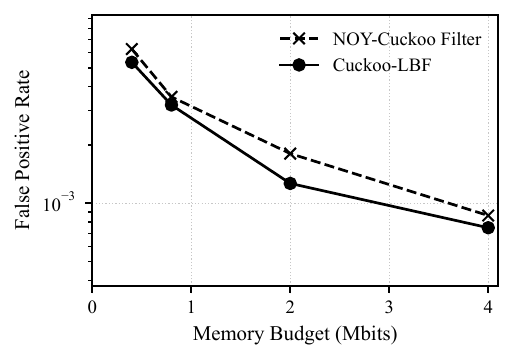}
        \caption[]{\small Linear Support Vector Classifier}
        \label{fig:implexperiment4}   
    \end{subfigure}
    \caption{FPR of our LBF constructions compared to NY-CBF and \betpasscbf{}, with varying memory budget and learning models for the Malicious URLs Dataset.}
    \label{fig:implexperiment}
\end{figure*}

We calculate FPR of all constructions by uniformly randomly sampling $10\%$ of URLs that are \textbf{not} malicious from the Malicious URLs Dataset and counting how many of them are returned as FPs by these constructions. The amount of memory the learning model uses is measured as the serialized size in bytes of the trained classifier, using joblib~\cite{joblib}. After subtracting the memory used by the learning model from the memory budget, we divide the remaining memory budget equally between backup CBFs $\mathbf{B}_{A}$ and $\mathbf{B}_{B}$ in our constructions.\medskip

\noindent\textbf{Results.} Figure~\ref{fig:implexperiment1} shows how FPR varies for \securecbf{} and \bloomname{} as we modify the memory budget. This figure uses the Gaussian Naive-Bayes Classifier as the learning model. Figure~\ref{fig:implexperiment2} shows the same results using a Linear Support Vector Classifier as the learning model. Similarly, Fig.~\ref{fig:implexperiment3} and Figure~\ref{fig:implexperiment4} show how FPR varies for  \betpasscbf{} and \betpassbloomname{} as we modify the memory budget available for the Gaussian Naive-Bayes Classifier and the Linear Support Vector Classifier as the learning model, respectively. 

While our implementations are not focused on optimization, we see a consistent trend of our LBF constructions having lower FPRs than CBF constructions for the same memory budget. This is consistent with prior work~\cite{learnedindexstructures} that shows a similar trend between non-adversarial LBF constructions and non-adversarial CBF constructions. An interesting outlier is the FPR of \betpassbloomname{} being slightly larger than the FPR of \betpasscbf{} in one of the data points of Figure~\ref{fig:implexperiment3}. We conjecture that this is due to our naive method of equally distributing the memory leftover, i.e., after taking out the memory needed for the learning model, between the backup CBFs. We leave investigating better memory allocation strategies and other optimizations to future work.\medskip

\noindent\textbf{Large memory budgets.} We explore how the trend of Figure~\ref{fig:implexperiment} continues as we keep increasing the memory budget to the point where it no longer becomes a bottleneck for FPR. To better understand this, we conduct a second set of experiments over a much larger range of memory budgets with the results shown in Figure~\ref{fig:extendedimplexperiment}. We observe that the trend from Figure~\ref{fig:implexperiment} continues. FPR for both our LBF constructions and the CBF constructions eventually approaches the same value as the memory budget increases. 

\begin{figure*}[t!]
    \centering
    \begin{subfigure}[b]{0.49\textwidth}
        \centering
        \includegraphics[width=\textwidth]{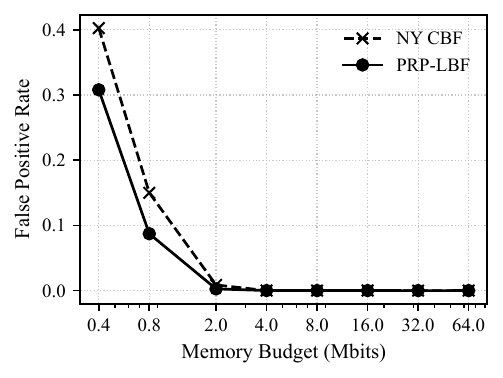}
        \caption[]{\small Gaussian Naive-Bayes Classifier}
        \label{fig:extendedimplexperiment1}
    \end{subfigure}
    \hfill
    \begin{subfigure}[b]{0.49\textwidth}  
        \centering
        \includegraphics[width=\textwidth]{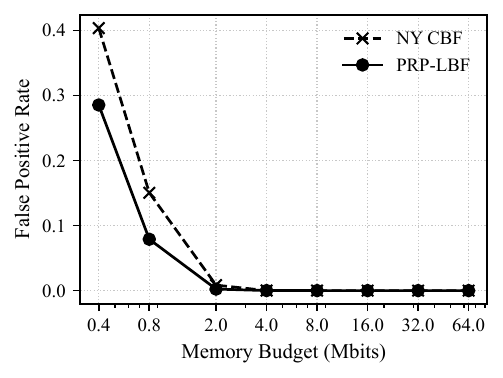}
        \caption[]{\small Linear Support Vector Classifier}
        \label{fig:extendedimplexperiment2}   
    \end{subfigure}
        \begin{subfigure}[b]{0.49\textwidth}
        \centering
        \includegraphics[width=\textwidth]{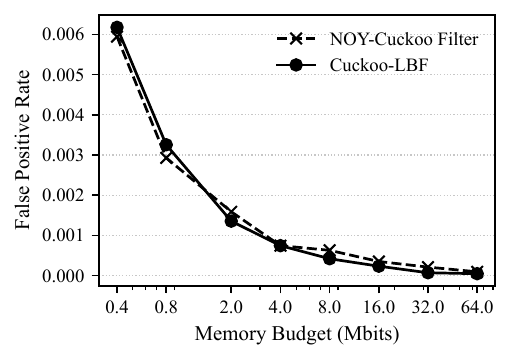}
        \caption[]{\small Gaussian Naive-Bayes Classifier}
        \label{fig:extendedimplexperiment3}
    \end{subfigure}
    \hfill
    \begin{subfigure}[b]{0.49\textwidth}  
        \centering
        \includegraphics[width=\textwidth]{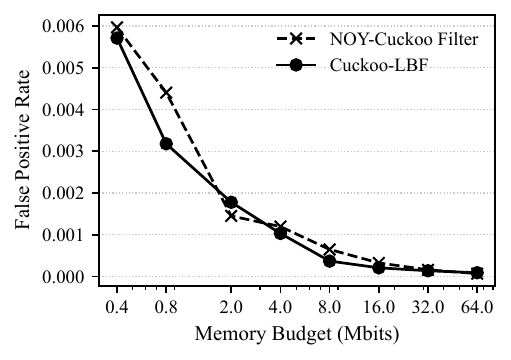}
        \caption[]{\small Linear Support Vector Classifier}
        \label{fig:exte dedimplexperiment4}   
    \end{subfigure}
    \caption{FPR of our LBF constructions compared to NY-CBF and \betpasscbf{} with large memory budgets and different learning models (for the Malicious URLs Dataset).}
    \label{fig:extendedimplexperiment}
\end{figure*}

\section*{Acknowledgments}

The work of G.A. is supported by NSF Grant No. CNS-2226932.

\bibliographystyle{splncs04}
\bibliography{paper}

\appendix
\subsection*{Generative AI Disclosure}

The paper was written directly by the authors. The student author used ChatGPT by OpenAI and Claude by Anthropic for feedback on technical writing and to edit and debug the Ti\textit{k}Z and matplotlib figure code in the \LaTeX{} manuscript. The artifacts of our work were primarily written and verified by the authors. The student author used Cursor IDE, which includes a generative AI code assistant, for the following tertiary coding tasks: unit-test generation, code refactoring, feedback on bug-fixing, feedback on anonymizing the artifact, and code reviews.

\end{document}